\renewcommand{\@biblabel}[1]{[#1]\hfill}
\newcommand{\cov}{\mathrm{Cov}}
\newcommand{\rew}{\mathrm{Rew}}
\newcommand{\ZZ}{\mathbb{Z}}
\newcommand{\twe}{\mathrm{TWErr}}
\newcommand{\loce}{\mathrm{LocErr}}
\newcommand{\rewe}{\mathrm{RewErr}}
\newcommand{\twem}{\tau_M}
\newcommand{\locem}{\Lambda_M}
\newcommand{\rewem}{\rho_M}
\newcommand{\lmin}{L_{\min}}
\newcommand{\lmax}{L_{\max}}
\newcommand{\cred}{\mathrm{Cr}}
\newcommand{\remove}[1]{}
\begin{document}

%%%%%%%%%%%%%%%%%%%%%%%%%%%%%%%%%%%%%%%%%%%%%%%%%
%%%%%%%%%%%%%%%%%%%%%%%%%%%%%%%%%%%%%%%%%%%%%%%%%

% Title
\title{Online Time-Windows TSP with Predictions}

 \author{
  Shuchi Chawla\thanks{University of Texas-Austin, \texttt{shuchi@cs.utexas.edu}} 
  \and 
    Dimitris Christou\thanks{University of Texas-Austin, \texttt{christou@cs.utexas.edu}}  
 }

\date{\today}
%\begin{comment}
% \author{
%  Shuchi Chawla\thanks{University of Texas-Austin, \texttt{shuchi@cs.utexas.edu}} 
%  \and
%  Dimitris Christou\thanks{University of Texas-Austin, \texttt{christou@cs.utexas.edu}} 
% }
%\end{comment}

% Title
\maketitle

%%%%%%%%%%%%%%%%%%%%%%%%%%%%%%%%%%%%%%%%%%%%%%%%%
%%%%%%%%%%%%%%%%%%%%%%%%%%%%%%%%%%%%%%%%%%%%%%%%%

% Input Abstract
\begin{abstract}

In the {\em Time-Windows TSP} (TW-TSP) we are given requests at different locations on a network; each request is endowed with a reward and an interval of time; the goal is to find a tour that visits as much reward as possible during the corresponding time window. For the online version of this problem, where each request is revealed at the start of its time window, no finite competitive ratio can be obtained. We consider a version of the problem where the algorithm is presented with predictions of where and when the online requests will appear, without any knowledge of the quality of this side information. 

Vehicle routing problems such as the TW-TSP can be very sensitive to errors or changes in the input due to the hard time-window constraints, and it is unclear whether imperfect predictions can be used to obtain a finite competitive ratio. We show that good performance can be achieved by explicitly building slack into the solution. Our main result is an online algorithm that achieves a competitive ratio logarithmic in the diameter of the underlying network, matching the performance of the best offline algorithm to within factors that depend on the quality of the provided predictions. The competitive ratio degrades smoothly as a function of the quality and we show that this dependence is tight within constant factors. 
\end{abstract}

% Remove numbering from first page
\addtocounter{page}{-1}
\thispagestyle{empty}

%%%%%%%%%%%%%%%%%%%%%%%%%%%%%%%%%%%%%%%%%%%%%%%%%
%%%%%%%%%%%%%%%%%%%%%%%%%%%%%%%%%%%%%%%%%%%%%%%%%

% Start first section in a new page
\newpage

% Input sections here
\section{Introduction}\label{sec:introduction}

%1. Recent line of work on using predictions to improve the performance of online algorithms. Demonstrated in many settings.

Many optimization problems exhibit a large gap in how well they can be optimized offline versus when their input arrives in online fashion. In order to obtain meaningful algorithmic results in the online setting, a natural direction of investigation is to consider "beyond worst case" models that either limit the power of the adversary or increase the power of the algorithm. A recent line of work in this direction has considered the use of {\em predictions} in bridging the offline versus online gap. Predictions in this context are simply side information about the input that an online algorithm can use for its decision making; the true input is still adversarially chosen and arrives online. The goal is to show that on the one hand, if the predictions are aligned with the input, the algorithm performs nearly as well as in the offline setting (a property known as {\em consistency}); and on the other hand, if the predictions are completely unrelated to the input, the algorithm nevertheless performs nearly as well as the best online algorithm (a.k.a. {\em robustness}). {\em Put simply, good predictions should help, but bad predictions should not hurt, and ideally we should reap the benefits without any upfront knowledge about the quality of the predictions.}

Predictions have been shown to effectively bypass lower bounds for a variety of different online decision-making problems including, for example, caching~\cite{LV21, R20, JPS20}, scheduling~\cite{BFMMN21,IKQP21,ALT21}, online graph algorithms~\cite{APT21}, load balancing~\cite{LLMV20,LMRX21}, online set cover~\cite{BMS20}, matching problems~\cite{DILMV21},  $k$-means~\cite{EFSWZ22}, secretary problems~\cite{AGKK20, DLLV21}, network design~\cite{ErlebachLMS22, XM22, BBFL23} and more.\footnote{A comprehensive compendium of literature on the topic can be found at \cite{ALPS-web}.}  
%\dccomment{Couldn't find something for online network design}. In this paper we consider predictions in the context of vehicle routing problems for the first time. \dccomment{What about the other TSP papers?}

%2. This paper: Vehicle routing problems. Particularly sensitive to errors in data. Our question: are predictions useful and to what extent?

In this paper, we consider a problem whose objective function value is highly sensitive to changes in the input, presenting a novel challenge for the predictions setting. In the Traveling Salesman Problem with Time Windows (TW-TSP for short), we are given a sequence of service requests at different locations on a weighted undirected graph. Each request is endowed with a reward as well as a time window within which it should be serviced. The goal of the algorithm is to produce a path that maximizes the total reward of the requests visited within their respective time windows. In the online setting, the requests arrive one at a time at the start of their respective time windows, and the algorithm must construct a path incrementally without knowing the locations or time windows of future requests.

Vehicle routing problems such as the TW-TSP that involve hard constraints on the lengths of subpaths (e.g. the time at which a location is visited) are generally more challenging than their length-minimization counterparts. In particular, a small bad decision at the beginning of the algorithm, such as taking a slightly suboptimal path to the first request, can completely obliterate the performance of the algorithm by forcing it to miss out on all future reward. In the offline setting, this means that the approximation algorithm has to carefully counter any routing inefficiency in some segments by intentionally skipping reachable value in other segments. In the online setting, this means that no sublinear competitive ratio is possible. %(Theorem~\ref{l:unbounded_rand}).

%Indeed, while the offline version of the problem admits an $O(\log n)$ approximation \cite{}, no online algorithm can achieve a $o(n)$ competitive ratio unless all time windows are large enough.\footnote{$\Omega(D)$, where $D$ is the diameter of the given graph} Here $n$ is the number of vertices in the graph. \sccomment{Should we state everything in terms of $D$?}

%3. Model of predictions. Relate to the predictions with outliers model of Azar Panigrahi and Toutou. Specify what is known and what is unknown.

\vspace{0.05in}
\begin{center}
\parbox{0.95\textwidth}{
{\em Given the sensitivity of the TW-TSP objective to small routing inefficiencies, is it possible to design meaningful online algorithms for this problem using imperfect predictions?} }
\end{center}
\vspace{0.05in}

We consider a model where the algorithm is provided with a predicted sequence of requests at the beginning, each equipped with a predicted location and a predicted time window. The true sequence of requests is revealed over time as before. Of course if the predicted sequence is identical to the true request sequence, the algorithm can match the performance of the best offline algorithm. But what if the predictions are slightly off? Could these small errors cause large losses for the online algorithm? Can the algorithm tolerate large deviations?

Our main result is an online algorithm for the TW-TSP based on predictions whose performance degrades smoothly as a function of the errors in prediction. We obtain this result by explicitly building slack into our solution and benchmark. In a slight departure from previous work on TW-TSP, we require the server to spend one unit of idle "service time" at each served request. We show that this is necessary to obtain a sublinear approximation even with predictions (Theorem~\ref{t:lb_for_0_sc}). (However, in the absence of predictions, the setting with service times continues to admit a linear lower bound on the competitive ratio; See Theorem~\ref{l:unbounded_rand}.) We then use service times judiciously in planning a route and accounting for delays caused by prediction errors.

There are three sources of error in predictions: (1) the predicted locations of requests may be far from the true locations; (2) the predicted time windows may be different from the true time windows; (3) the predicted rewards may be different from the true rewards. We show a polynomial dependence of the competitive ratio of our algorithm on the first and third components, and prove that this dependence is tight to within constant factors. Furthermore, we show that as long as the deviation in time windows is no more than half the minimum time window length, this error causes at most a further constant factor loss in performance. Besides this dependence on the prediction error, the competitive ratio of our algorithm depends logarithmically on the diameter of the underlying network, matching the performance of the best known offline algorithm for TW-TSP.

Importantly, our algorithm requires little to no information about how the predictions match up against the true requests. For the purpose of analysis, we measure the error in predictions with respect to some underlying matching between the predicted and true requests -- the error parameters are then defined in terms of the maximum mismatch between any predicted request and its matched true request. This matching is never revealed to the algorithm and in fact the performance of the algorithm depends on the quality of the {\em best} possible matching between the predicted and true requests. The only information the algorithm requires about the quality of the predictions is the location error -- the maximum distance between any prediction and its matched true request. Even for this parameter, an upper bound suffices (and at a small further loss, a guess suffices).

Our overall approach has several components.  The first of these is to construct an instance of the TW-TSP over predicted requests that requires the server to spend some idle time at each request as a "service delay". We then extend offline TW-TSP algorithms to this service delay setting, obtaining a logarithmic in diameter approximation. We then follow and adapt this offline solution in the online setting. Every time the offline solution services a predicted request, we match this request to a previously revealed true request, take a detour from the computed path to visit and service the true request, and then resume the precomputed path. Altogether this provides the desired competitive ratio.

Our results further generalize to a setting where predictions are coarse in that each single predicted location captures multiple potential true requests that are nearby. We show that with prediction errors defined appropriately, we can again achieve a competitive ratio for this "many to one matching" setting that is logarithmic in the diameter of the graph and polynomial in the prediction error. 

%Extensions: matching each prediction to multiple requests. Error in weights.

\subsection*{Further related work}

Using predictions in the context of online algorithm design was first proposed by \cite{LV21} for the well-studied caching problem. Since that work, the literature on online algorithm design with predictions has grown rapidly. We point the interested reader to a compendium at \cite{ALPS-web} for further references.

%Relationship to Azar et al.'s model
\paragraph{Metric error with outliers.} Azar et al.~\cite{APT21} initiated the study of predictions in the context of online graph optimization problems, and proposed a framework for quantifying errors in predictions, called {\em metric error with outliers}, that we adapt. The idea behind this framework is to capture two sources of error: (1) Some true requests may not be captured by predictions and some predictions may not correspond to any true requests; (2) For requests that are captured by predictions, the predictions may not be fully faithful or accurate. The key observation is that it is possible to design algorithms with performance that depends on these two sources of error {\em without
explicit upfront knowledge of the (partial) correspondence between predicted and true requests}. 

In this work, we focus mostly on the second source of error, which we further subdivided into three kinds of error in order to obtain a finer understanding of the relationship between the competitive ratio and different kinds of error. As in the work of Azar et al.~\cite{APT21}, we assume that the correspondence between predicted and true requests is never explicitly revealed to the algorithm. The performance of the algorithm nevertheless depends on the error of the {\em best} matching between predicted and true requests. In Section~\ref{sec:approach} we describe how the first source of error in Azar et al.'s framework can also be incorporated into our bounds.

%, Azar et al.~\cite{APT21} study settings where the input consists of a graph $G(V,E,\ell)$ and a set of vertices $R\subseteq V$ that arrives sequentially, in an online manner, and has to be served by a solution (e.g. the Steiner tree/forest and the (uncapacitated) facility location problems). In this context, a prediction for $R$ consists of a (possibly erroneous) set of vertices $R'\subseteq V$. In order to measure the accuracy of the prediction, the authors introduce the notion of \textit{metric error with outliers} $\lambda:= (D,\Delta)$, stating that there exist subsets $T\subseteq R$ and $T'\subseteq R'$ with $|T|=|T'|$ such that (i) the minimum transportation cost between $T$ and $T'$ is $D$ and (ii) $|R\setminus T| + |R'\setminus T'|=\Delta$. Conceptually, this notion of error captures how well the prediction can be matched to the true instance, while allowing for outlier vertices that are not required to be matched. Their algorithms achieve a logarithmic dependency on the transportation error $D$ with an additive (linear) dependency to the outlier $\Delta$.

%In our context, we also consider a setting where the true request sequence $I$ and the predictions $I'$ don't have to be matched exactly, but rather there exists some pre-defined matching $M$ between subsets of them and a set of outlier requests of total reward $\Delta$. Similarly to what~\cite{APT21} achieve, we are able to recover our results for the perfect matching setting by only losing an additive term $\mathcal{O}(\Delta)$ for the outlier requests. \dccomment{Reference the theorem here?}

\paragraph{TSP with predictions.} Recently a few papers~\cite{bernardini2022universal,HWLCL22, gouleakis2022learningaugmented} have considered the online TSP and related routing problems with predictions. The input to the online TSP is similar to ours: requests arrive over time in a graph, and a tour must visit each request after its arrival time. However, the objective is different. In our setting, requests also have deadlines, and the algorithm cannot necessarily visit all requests. The goal therefore is to visit as many as possible. In the online TSP, there are no deadlines, and so the objective is to visit {\em all} requests as quickly as possible, or in other words to minimize the makespan. This makespan minimization objective is typically much easier than the deadline setting, as evidenced by constant factor approximations for it in the offline, online, and predictions settings, as opposed to logarithmic or worse approximations for the latter problem.

\paragraph{TW-TSP {\em without} predictions.} The (offline) TW-TSP problem has a rich literature and has been studied for over $20$ years. The problem is known to be NP-hard even for special cases, e.g. on the line~\cite{T92}, and when all requests have the same release times and deadlines (a.k.a. Orienteering)~\cite{BCKLM03}. Orienteering admits constant factor approximations~\cite{BCKLM03, BBCM04, CKP12}, and even a PTAS when requests lie in a fixed dimensional Euclidean space~\cite{AMN98, CHP06}. For general time windows, constant factor approximations are only known for certain special cases: e.g. constant number of distinct time windows~\cite{CK04}; and on line graphs~\cite{T92,KN03,BYES05,GJJZ20}. For general graphs and time-windows, the best approximations known are logarithmic in input parameters~\cite{BBCM04,CKP12}. 

%the first approximation guarantee of $\mathcal{O}(\log^2 n)$ was achieved in~\cite{BBCM04} and an improved $\mathcal{O}(\log L_{max})$ approximation was given by \cite{CKP12}. 

To the best of our knowledge, the online setting for TW-TSP has only been considered by Azar and Vardi~\cite{azar15}. Azar and Vardi assume that service times are non-zero and present competitive algorithms under the assumption that the smallest time window length $\lmin$ is comparable to the diameter $D$ of the graph; No sublinear competitive ratio can be achieved if $\lmin<D/2$. We are able to beat this lower bound by relying on predictions.

\subsection*{Organization of the paper}

We formally define the problem and our error model in Section~\ref{sec:prelim}. Section~\ref{sec:approach} describes our results and provides an outline for our analysis. All of our main results are covered in that section. Proofs of these results can be found in subsequent sections. Sections~\ref{sec:relating_opt}, \ref{sec:offline}, and \ref{sec:online} fill in the details of our upper bound, Section~\ref{sec:lower_bound} proves lower bounds, and Section~\ref{sec:extension_proofs} analyzes the extension of our main result to the "many to one matching" setting described above. Proofs skipped in the main body of the paper can be found in the appendix.

%\newpage
\section{Definitions}\label{sec:prelim}

\subsection{The Traveling Salesman Problem with Time-Windows}

%We study a general setting where client requests can be served in an undirected weighted network $G=(V,E, \{l_e\}_{e\in E})$ via a continuous path. The cardinality of the vertex set is denoted by $n:=|V|$, and we use $D$ to denote the diameter of $G$, i.e. the maximum shortest path length between any two vertices. We assume without loss of generality that edge lengths are integers, i.e. $l_e\in\N$ for all $e\in E$, and that the graph $G$ is connected.

An instance of the TW-TSP consists of a network $G$ and a (finite) sequence of service requests $I$. $G=(V,E, \ell)$ is an undirected network with edge lengths $\{\ell_e\}_{e\in E}$. Extending the notion of distance to all vertex pairs in $G$, we define $\ell(u,v)$ for $u,v\in V$ to be the length of the shortest path from $u$ to $v$. We assume without loss of generality that $G$ is connected and that the edge lengths $\ell_e$ are integers. A service request $\sigma = (v_\sigma, r_\sigma, d_\sigma, \pi_\sigma)$ consists of a vertex $v_\sigma\in V$ at which the request arrives, a release time $r_\sigma\in\ZZ^+$, a deadline $d_\sigma\in\ZZ^+$ with $d_\sigma > r_\sigma$, and a reward $\pi_\sigma\in\ZZ^+$. We use $\Sigma \subseteq V\times \ZZ^+\times\ZZ^+\times\ZZ^+$ to denote the set of all possible client requests and $I\subset \Sigma$ to denote the set of requests received by the algorithm.

%Furthermore, a starting position $v_0\in V$ at which the solution path must start from may or may not be specified by the instance; in the latter case, we use $v_0=\emptyset$ to denote this event. Strictly speaking, an instance for $\mathrm{TSPTW}$ consists of the graph $G$, the starting position $v_0$ as well as a set of client requests $I\subseteq \Sigma$ and we use $(G,I,v_0)$ to denote it. However, for ease of exposition, we are going to denote the instance only with respect to the set of client requests $I$. That said, in some of our proofs, we are also going to use different networks to form reductions between different problems, and only in these cases the network will also be mentioned as part of the input.

The solution to TW-TSP is a continuous {\em walk} on $G$ that is allowed to remain idle on the vertices of the graph.\footnote{To keep our exposition simple, we do not specify a starting location for the walk. However, all of our algorithms can be adapted without loss to the case where a starting location is fixed, as described towards the end of Section~\ref{sec:approach}.}  Formally, the walk starts from some vertex at time $t=0$; at every time-step that it occupies a vertex $u\in V$, it can either remain idle on $u$ for some number of time-steps \textit{or} it can move to some $v\in V$ by spending time $t=\ell_{(u,v)}$; we comment that while the path is mid-transition, no more decisions can be made. Notice that this creates a notion of a discrete time-horizon that will be important towards formalizing the online variant of the problem. 

We use $\mathcal{W}(G)$ to denote the set of all walks on $G$. Given a request $\sigma \in \Sigma$, we say that a walk $W$ \textit{covers} it if $W$ remains idle on vertex $v_\sigma$ for at least one time-step,\footnote{As we mentioned in the introduction, this requirement of a minimal one-unit service time is necessary in order to achieve any sublinear approximation for the online TW-TSP even with predictions. See Theorem~\ref{t:lb_for_0_sc} in Section~\ref{sec:lower_bound}.} starting on some step $\tau \in [r_\sigma,d_\sigma-1]$. %We note that in some cases of the problem it is required for the path to only occupy the vertex instantaneously within the time-window without remaining idle on it for one time-step, but this is not the setting that this work considers.
For a sequence of requests $I\subset \Sigma$, we use $\cov(W,I)\subseteq I$ to denote the set of requests in $I$ that are covered by $W$. Then, the reward obtained by $W$ is denoted by 
%we associate with path $P$ a reward on instance $I$ that is given by 
$\rew(W,I) := \sum_{\sigma \in \cov(W,I)}\pi_\sigma$. The objective of TW-TSP is to compute a walk $W\in\mathcal{W}(G)$ of maximum reward. We denote this by $\opt(G, I)$:
\[\opt(G,I) : = \max_{W\in\mathcal{W}(G)}\left[\rew(W,I)\right]\]

\subsection{The offline, online, and predictions settings}

We assume that the network $G$ is known to the algorithm upfront in all of the settings we consider. In the {\bf offline} version of the problem, the sequence of requests $I$ is given to the algorithm in advance. In the {\bf online} version, requests $\sigma\in I$ arrive in an online fashion; specifically, each request $\sigma\in I$ is revealed to the algorithm at its release time $r_\sigma$. 

In the {\bf predictions} setting, the true sequence of requests $I$ arrives online, as in the online setting. However, the algorithm is also provided with a predicted sequence $I'\subset \Sigma$ in advance, where every request $\sigma'\in I'$ is endowed with a location, a time window, and a reward. The quality of predictions is expressed in terms of their closeness to true requests. To this end, we define three notions of mismatch or error. For a true request $\sigma$ and predicted request $\sigma'$, the location error, time windows error, and reward error are defined as:
\begin{align*}
    \loce(\sigma,\sigma') & :=\ell(v_\sigma,v_{\sigma'})\\
    \twe(\sigma,\sigma') & :=\max\{|r_\sigma-r_{\sigma'}|, |d_\sigma-d_{\sigma'}|\}\\
    \rewe(\sigma, \sigma') & :=\max\{\pi_\sigma/\pi_{\sigma'}, \pi_{\sigma'}/\pi_\sigma\}
\end{align*}

%To account for the lack of knowing $I$ in advance, we assume that the algorithm is given in advance a sequence $I'\subseteq \Sigma$ that acts as a prediction for $I$. Of course, if $I'=I$ then we are in the offline case of TSPTW for which approximation algorithms are already known. However, what is possible to achieve if $I'\neq I$? And what is a natural notion of distance between $I'$ and $I$? This are some of the central questions that our work addresses.

%Recall that requests consists of four parameters; three of them have to do with the feasibility of a solution that covers them (the position, the release time and the deadline) and one of the has to do with the objective (the reward). Motivated by this observation, we want our predictions to be close to the actual requests both in terms of feasibility and in terms of the objective. 

%To account for the objective error, we consider a general model where each request has a reward that is randomly and independently drawn from some distribution. In that case, we need the predicted request to be equal to the expectation of the true request.

%To account for the feasibility error, we define the error between two requests $\sigma$ and $\sigma'$ to be
%\[err(\sigma,\sigma') := dist(v,v') + |r-r'| + |d-d'|\]
%where $dist$ denotes shortest distances on network $G$.

We extend these definitions to the entire sequences $I$ and $I'$ through an underlying (but unknown to the algorithm) matching between the requests in the two lists:

\begin{definition}
Given two request sequences $I,I'\subset\Sigma$ with $|I|=|I'|$ and a perfect matching $M:I\mapsto I'$, we define the location, time window, and reward errors for the matching $M$ as:
%\begin{align*}
%\locem & :=\max_{\sigma\in I} \loce(\sigma, M(\sigma)),  %\quad \twem :=\max_{\sigma\in I} \twe(\sigma, M(\sigma)), 
%\text{and, }\\ 
%\rewem & :=\max_{\sigma\in I} \rewe(\sigma, M(\sigma)).
%\end{align*}
\begin{align*}
\locem  :=\max_{\sigma\in I} \loce(\sigma, M(\sigma)), \quad \twem  :=\max_{\sigma\in I} \twe(\sigma, M(\sigma)), \text{ and } \quad 
\rewem  :=\max_{\sigma\in I} \rewe(\sigma, M(\sigma))
\end{align*}

%\begin{align*}
%\locem & :=\max_{\sigma\in I} \loce(\sigma, M(\sigma))  
%\\ 
%\twem & :=\max_{\sigma\in I} \twe(\sigma, M(\sigma))
%\\ 
%\rewem & :=\max_{\sigma\in I} \rewe(\sigma, M(\sigma))
%\end{align*}

\end{definition}

We use $n=|V|$ to denote the number of vertices in $G$, $D$ to denote the diameter of the graph, and $\lmin$ and $\lmax$ to denote the size of the smallest and largest time windows respectively (of a true or predicted request) in the given instance: $\lmin=\min_{\sigma\in I\cup I'} |d_\sigma-r_\sigma|$ and $\lmax=\max_{\sigma\in I\cup I'} |d_\sigma-r_\sigma|$. The competitive ratios of the algorithms we develop depend on these parameters.

\paragraph{Knowns and unknowns.} We denote an instance of the TW-TSP with predictions by $(G, I, I', M)$. All components of the instance are chosen adversarially. As mentioned earlier, the network $G$ and the predicted sequence $I'$ are provided to the algorithm at the start. The sequence $I$ arrives online. We assume that the algorithm receives no direct information about the matching $M$, but is provided with an upper bound on the error $\locem$. We will also assume that the algorithm knows the parameter $\lmin$, although this is without loss of generality as the parameter can be inferred within constant factor accuracy from the predictions.\footnote{In particular, assuming $\twem\le\lmin/2$, which is necessary for our results to hold, the time window of any true request can be no shorter than half the smallest time window of any predicted request.}

%We use $\alg(G, I, I', M)$ to denote the reward covered by the algorithm on the given input. The competitive ratio achieved by the algorithm is then given by:
%\[\frac{\opt(G,I)}{\alg(G, I, I', M)}.\]

\remove{
We assume the existence of an adversary that constructs an instance for TSPTW in the following manner:

\begin{enumerate}
    \item The adversary begins by constructing an instance of TSPTW. Specifically, it selects:
    \begin{enumerate}
        \item A weighted undirected graph $G(V,E,l)$.
        \item A root vertex $v_0\in V$ (or $v_0=\emptyset$ for an unrooted instance).
        \item A (finite) sequence of client requests $\sigma_i = (v_i, r_i, d_i)$.
        \item A set of distributions $D_i$ from which the reward $\pi_i$ of each request $\sigma_i$ is drawn \textit{independently}.
    \end{enumerate}

    \item Then, the adversary proceeds to determine a  request sequence that will be given to the algorithm. Specifically, it selects:
    \begin{enumerate}
        \item A new request sequence $\sigma' = \{\sigma'_j\}$ with $|\sigma'|=|\sigma|$.
        \item A perfect matching $M:\sigma\mapsto \sigma'$; without loss, we can assume $M(\sigma_j)=\sigma'_j$ for all $j$.
        \item A corresponding reward vector $w'$ with $w'_j>0$ is the reward for serving $\sigma'_j$.
    \end{enumerate}

    \item Finally, the adversary selects an instance-generation distribution, specifically:
    \begin{enumerate}
        \item A distribution $\mathcal{D}$ such that $\mathrm{sup}(\mathcal{D})= 2^{|\sigma|}$, i.e. a distribution over all the sub-sets of $\sigma$. 
        \item For each request $\sigma_j$, it computes the marginal probability $p_j =\mathbb{P}_{S\sim\mathcal{D}}[j\in D]$.
        \item For each predicted request $\sigma'_j$, it selects a prediction $p'_j$ for the marginal probability $p_{M(j)}=p_j$ of $\sigma_j$.
    \end{enumerate}
\end{enumerate}

After the adversary decides all these parameters, the nature randomly draws a binary vector $X\sim\mathcal{D}$, and defines a request sequence
\[\sigma(X) = \{\sigma_j\in\sigma: X_j = 1\}\subseteq \sigma\]
and then an instance
\[I(X) = (\sigma(X), \mathbbm{1}, w, v_0)\]
for TSPTW is created. The benchmark that we compare against is the optimal schedule in hindsight, i.e. the schedule whose expected reward on the (random) instance $I(X)$ is maximized. Formally,
\[OPT(\sigma,\mathcal{D}) := \max_{P\in\mathcal{P}(G,v_0)}\mathbb{E}_{X\sim\mathcal{D}}\left[rew(P,I(X))\right]\]

On a high-level, an online algorithm for our problem is given $\sigma'$ in advance and $I(X)$ in an online manner and wishes to approximate $OPT(I(X))$. Since $\sigma'$, $w'$ and $p'$ are (for now) arbitrary, the predicted sequence $\sigma'$ can carry zero information about $\sigma$. Furthermore, $\mathcal{D}$ could deterministically select any instance, so without any restrictions our setting is a strict generalization of the traditional online TSPTW problem. In fact, under certain assumptions $(L_{min}\leq D/2)$, we can easily prove that the competitive ratio for this problem is unbounded!

In this work, we explore under which assumptions on $\sigma,\sigma', M$ and $\mathcal{D}$ can our algorithm achieve better guarantees in the online setting if it has knowledge of $\sigma'$ in advance. Furthermore, we investigate the type of information that our algorithm needs access to in order to achieve this improvement in performance.
}

\subsection{The TW-TSP with service times}

At a high level our algorithm has two components: an offline component that computes a high-reward walk over the predicted locations of requests, and an online component that largely follows this walk but takes "detours" to cover the arriving true sequence of requests. In particular, as the algorithm follows the offline walk, for each predicted location it visits where a "close by" true request is available, the algorithm takes a "detour" to this true request, returns back to the predicted location, and resumes the remainder of the walk. In order to incorporate the time spent taking these detours in our computation of the offline walk, we require the walk to spend some "service time" at each predicted location it covers. Accordingly, we define a generalization of the TW-TSP:

\begin{definition}
The {\em TW-TSP with Service Times} (TW-TSP-S) takes as input a network $G$, a sequence of service requests $I$, and a service time $S\in\ZZ^+$, and returns a walk $W\in \mathcal{W}(G)$. We say that $W$ covers a request $\sigma\in I$, denoted $\sigma\in \cov(W,I,S)$, if it remains idle on vertex $v_\sigma$ for at least $S$ time steps, starting at some step $t\in [r_\sigma, d_\sigma-S]$. We define the reward of $W$ as $\rew(W,I,S) :=\sum_{\sigma\in\cov(W, I, S)} \pi_\sigma$. The optimal value of the instance is given by:
\[\opt(G, I, S) := \max_{W\in \mathcal{W}(G)}[\rew(W,I,S)]\]
\end{definition}

Note that the original version of TW-TSP as defined previously simply corresponds to the special case of TW-TSP-S with service time $S=1$, and in particular, we have $\rew(W,I)=\rew(W,I,1)$, and $\opt(G,I)=\opt(G,I,1)$.

%\newpage
\section{Our results and an outline of our approach}\label{sec:approach}
Our main result is as follows.

\begin{theorem}\label{t:main}
Given any instance $(G, I, I', M)$ of the TW-TSP with predictions whose errors satisfy $\twem\le\lmin/2$ and $\locem\le (\lmin-1)/4$, there exists a polynomial-time online algorithm that takes the tuple $(G, I', \locem)$ as offline input and $I$ as online input, and constructs a walk $W\in \mathcal{W}(G)$ such that
\[\mathbb{E}[\rew(W,I)]\ge \frac{1}{O(\locem\cdot\rewem^2\cdot\log \min(D,\lmax))}\cdot \opt(G,I)\]
\end{theorem}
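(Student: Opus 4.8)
The plan is to reduce the online predictions problem to (a) an offline TW-TSP-with-service-times computation over the predicted requests, and (b) an online "follow-and-detour" simulation of that offline walk that absorbs all prediction errors into slack. Fix the unknown matching $M$ that witnesses the error parameters $\twem,\locem,\rewem$ (we never need to know $M$ algorithmically — it is only used in the analysis).

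First I would build the offline instance. Starting from $\opt(G,I)$, I would argue that there is a walk on the \emph{predicted} locations that, with a service time $S$ chosen on the order of $\locem$ (large enough that a round-trip detour of length $2\locem$ plus one unit of true service fits inside $S$), collects reward comparable to $\opt(G,I)$ up to the reward-distortion factor $\rewem$. Concretely: take an optimal walk $W^\star$ for $(G,I)$; for each true request $\sigma$ it covers, its image $M(\sigma)\in I'$ is within distance $\locem$ and its time window is within $\twem\le\lmin/2$, so shrinking each predicted window by $\twem$ on each side still leaves a sub-window of length $\ge\lmin-2\twem\ge 0$ — here the hypotheses $\twem\le\lmin/2$ and $\locem\le(\lmin-1)/4$ are exactly what guarantee the shrunken-and-detour-padded windows are nonempty and that service time $S=\Theta(\locem)$ is affordable. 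This yields $\opt(G,I',S)\ge \frac{1}{O(\rewem)}\,\opt(G,I)$ for the appropriate $S$; this is the content of Section~\ref{sec:relating_opt}. Next, invoke the offline algorithm (Section~\ref{sec:offline}): extend a known logarithmic-approximation algorithm for TW-TSP (e.g.\ the orienteering-based recursion of \cite{BBCM04,CKP12}) to the service-time variant, losing only a further $O(\log\min(D,\lmax))$ factor and producing a walk $\widehat W$ with $\rew(\widehat W,I',S)\ge \frac{1}{O(\log\min(D,\lmax))}\,\opt(G,I',S)$.

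Then comes the online component (Section~\ref{sec:online}). The algorithm runs $\widehat W$ in real time; whenever $\widehat W$ begins its $S$-step idle period at a predicted location $v_{\sigma'}$, the algorithm checks whether some true request $\sigma$ matched to $\sigma'$ (in the sense that it is within distance $\locem$ and already released) is available; if so it detours to $v_\sigma$, idles one unit there to cover $\sigma$, and returns to $v_{\sigma'}$, all within the budgeted $S$ steps. The key scheduling claim is that if $\widehat W$ covers $\sigma'=M(\sigma)$ in the padded instance, then after the detour the true request $\sigma$ is covered within $[r_\sigma,d_\sigma-1]$ — this follows because the padding was designed so that $\widehat W$'s service of $\sigma'$ lies in the time-window of $\sigma$ shrunk by enough to accommodate the detour. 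Since the true request $\sigma$ is released by $r_\sigma\le r_{\sigma'}+\twem$, and the offline walk reaches $\sigma'$ only within the shrunken window, $\sigma$ is indeed available; hence every predicted request covered by $\widehat W$ yields a distinct true request covered by $W$. Summing rewards and accounting once more for the reward distortion (covering $\sigma'$ is worth $\pi_{\sigma'}\ge\pi_\sigma/\rewem$ to $\widehat W$, so $W$'s reward is at least $\rew(\widehat W,I',S)/\rewem$) gives the final bound. The randomization / expectation in the statement comes from the offline approximation step (the logarithmic TW-TSP approximations are randomized, e.g.\ via a random choice of distance scale), which is why the guarantee is stated in expectation and why there is no dependence on the matching $M$; chaining the three losses $O(\locem)$ (from $S=\Theta(\locem)$ creating $\Theta(\log\locem)$-many… — actually the $\locem$ factor enters because service time $S$ forces the offline benchmark to sacrifice a $\Theta(\locem)$ fraction of reward, via a bucketing argument over windows of length $\Theta(S)$), $O(\rewem^2)$ (two reward-distortion hops), and $O(\log\min(D,\lmax))$ (offline approximation) produces the claimed competitive ratio.

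The main obstacle I expect is the first step — relating $\opt(G,I',S)$ to $\opt(G,I)$ — and specifically controlling the loss incurred by \emph{inserting service time $S$} into a walk that originally had none. The naive approach of padding every visit of $W^\star$ by $S$ steps can blow up the makespan and push later requests past their deadlines; the fix is the standard orienteering trick of partitioning $W^\star$ into $O(S)$-many interleaved sub-walks indexed by visit-time modulo $\Theta(S)$, keeping the best one, which loses a $\Theta(S)=\Theta(\locem)$ factor — this is where the $\locem$ in the competitive ratio originates, and getting the constants to interact correctly with the window-shrinkage by $\twem$ and the detour budget is the delicate part. The matching lower bound (Section~\ref{sec:lower_bound}) should show this $\mathrm{poly}(\locem,\rewem)$ dependence is unavoidable.
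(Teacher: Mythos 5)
Your overall architecture is the same as the paper's (relate $\opt(G,I)$ to an optimum over predicted requests with service time $\Theta(\locem)$ via a bucketing argument that loses $\Theta(\locem)$; approximate that offline instance within $O(\log\min(D,\lmax))$; then follow the walk online and use each service period to detour to a nearby released true request, losing $\rewem$ twice). However, there is a genuine gap in how you handle the time-window error. Your plan is to absorb $\twem$ offline by shrinking each predicted window by $\twem$ on each side, and you note the shrunken length is $\ge\lmin-2\twem\ge 0$; but under the theorem's hypothesis $\twem$ may equal $\lmin/2$, in which case the shrunken windows are empty and the surrogate instance is worthless. Even for smaller $\twem$, shrinkage alone does not make the relating step go through: the detoured copy of the optimal true walk arrives at $v_{\sigma'}$ at a time dictated by the \emph{true} window of $\sigma$, which can lie below $r_{\sigma'}+\twem$ or above $d_{\sigma'}-\twem$, i.e.\ outside the shrunken predicted window. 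The paper instead keeps the predicted windows intact and considers three time-shifted copies of the walk (shifts $0,\pm\lmin/2$), losing a factor $3$ in the offline relation, and -- crucially -- repeats this trick \emph{online}: the algorithm picks a uniformly random shift $\epsilon\in\{-1,0,1\}$ of $\lmin/2$, so the matched true request is reachable only with probability $1/3$ when the walk reaches its prediction. This random shift is the actual source of the expectation in the theorem; your attribution of the randomness to the offline approximation is incorrect (the paper's offline algorithm is deterministic), and without the online shift your scheme can deterministically miss every true request whose window is displaced from its prediction's.

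A second, smaller gap is the online accounting. Since the algorithm never sees $M$, it cannot detour to ``the true request matched to $\sigma'$''; it can only detour to \emph{some} released, uncovered request within detour budget, and that request may be the one matched to a later prediction. Your claim that every covered predicted request yields a distinct covered true request of comparable reward therefore does not hold as stated: when the walk later reaches that other prediction, its matched request may already be covered and nothing new of comparable value may be reachable. The paper fixes this by having the algorithm greedily take the highest-reward reachable uncovered request and by a credit-splitting charging scheme (half the collected reward charged to the covered request, half to the matched request of the current prediction), which costs an extra factor $2$ but prevents double counting. Both repairs are compatible with your outline and do not change the asymptotics, but as written the proposal does not establish the stated bound for $\twem$ up to $\lmin/2$.
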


%Fix any instance $I$ of online $\mathrm{TSPTW}$ and let $I'$ be any prediction of $I$ whose maximum error is upper bounded by some known number $S\leq L_{min}(I)/4$.
%\[\E[p]\]

As mentioned previously, our algorithm consists of two components. The offline component constructs a potential walk in the network with the help of the predicted requests. Then an online component adapts this walk to cover true requests that arrive one at a time. We break up the design and analysis of our algorithm into four steps. The first two steps relate the offline instance we solve to the hindsight optimal solution for the online instance. The third step then applies an offline approximation to the predicted instance with appropriate service times. The final step deals with the online adaptation of the walk to the arriving requests.

The following four lemmas capture the four steps. First, we show (Section~\ref{sec:relating_opt}) that introducing a service time of $S$ hurts the optimal value by at most a factor of $2S-1$. As we prove in Lemma~\ref{l:service-costs-lb} of Section~\ref{sec:relating_opt}, this dependency on $S$ is tight. Observe that we require $S\le\lmin$, as for any tour to feasibly cover a request, the service time for that request must fit within its time window.

\begin{lemma}
\label{l:service-costs}
    For any instance $(G,I)$ of the TW-TSP with service times, and any integer $S\le\lmin$, we have \[\opt(G,I,S)\ge\frac{1}{2S-1}\cdot \opt(G,I,1).\]
\end{lemma}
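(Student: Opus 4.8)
The plan is to take an optimal walk $W^\star$ for the service-time-$1$ instance (so $W^\star$ covers each request $\sigma$ in $\cov(W^\star, I, 1)$ by idling for at least one step somewhere in $[r_\sigma, d_\sigma - 1]$) and to show that the covered reward can be partitioned into $2S-1$ groups, each of which can be realized by a walk that idles for the full $S$ steps at every request it serves. Then at least one of the $2S-1$ groups carries a $\frac{1}{2S-1}$ fraction of $\opt(G,I,1)$, and the walk serving that group witnesses the bound on $\opt(G,I,S)$.

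Concretely, I would index the requests in $\cov(W^\star, I, 1)$ by the time at which $W^\star$ begins its (length-$1$) idle service at each. Think of $W^\star$ as a trajectory; inserting $S-1$ extra idle steps at a served request shifts all subsequent events later by $S-1$, which can push a later service past its deadline. The standard fix is a ``delay budgeting'' argument: color the served requests by (start-of-service time) $\bmod\ (2S-1)$, or more robustly, greedily select a subsequence of served requests whose start-of-service times are pairwise at least $2S-1$ apart within each color class, so that between any two consecutive selected services there is enough slack to absorb the $S-1$ added idle steps without violating the second one's deadline. For a selected request with original service start $\tau \in [r_\sigma, d_\sigma-1]$, after accumulating at most $S-1$ units of delay from its own extended service and the slack structure guarantees the service window $[r_\sigma, d_\sigma - S]$ is still nonempty — here is exactly where the hypothesis $S \le \lmin$ is used, since $d_\sigma - r_\sigma \ge \lmin \ge S$ ensures $[r_\sigma, d_\sigma - S] \ne \emptyset$. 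By averaging over the $2S-1$ color classes, the best class retains reward at least $\opt(G,I,1)/(2S-1)$, and a walk following $W^\star$ but idling $S$ steps at each request in that class (and skipping the service at all others, merely passing through) covers that reward in the service-time-$S$ sense.

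**The main obstacle** I expect is making the delay-absorption bookkeeping precise: one must argue that a single coloring by residue mod $2S-1$ of the service-start times simultaneously (i) bounds the total delay inflicted on any one service by its predecessors in the same class and (ii) keeps that service feasible. The cleanest route is probably to observe that if two consecutive same-class services start at times $\tau_1 < \tau_2$ with $\tau_2 - \tau_1 \ge 2S-1$, then extending the first service by $S-1$ pushes the start of the second to at most $\tau_1 + (S-1) + (\text{travel}) = \tau_2 - (S-1) \cdot (\text{something}) $ — one needs $\tau_2 + (\text{accumulated delay}) \le d_{\sigma_2} - S$, and since only delay from the immediately preceding same-class service matters after we skip all other services (we do not idle at non-selected requests), the accumulated delay at any selected service is at most $S-1$, so we need $\tau_2 + S - 1 \le d_{\sigma_2} - S$, i.e. $\tau_2 \le d_{\sigma_2} - (2S-1)$; this may fail for the last service in its class, which is why a small additional case analysis (or shifting the window by choosing the residue class offset adversarially, then averaging) is needed. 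I would handle the boundary case by noting that $\tau_2 \le d_{\sigma_2} - 1$ always, and if $\tau_2 > d_{\sigma_2} - (2S-1)$ then $\sigma_2$ is within $2S-2$ of its deadline; such ``late'' requests can be swept into their own extra class or argued away by choosing, among the $2S-1$ shifts of the residue partition, the one that loses the least reward. Once this combinatorial core is nailed down, the rest — constructing the explicit service-time-$S$ walk from the chosen class and verifying it covers the claimed reward — is routine.
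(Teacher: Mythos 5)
There is a genuine gap, and it lies in the mechanism by which you make room for the length-$S$ service. Your scheme only ever pushes the walk \emph{later} (insert $S-1$ extra idle steps at each selected request and hope the slack absorbs the delay), and it measures slack by \emph{time separation} between selected service starts. Neither ingredient works. First, elapsed time between services provides no slack at all: the walk's travel time is incompressible, so a delay of $S-1$ incurred at one selected request is only recovered by the one-unit idle times of \emph{skipped covered requests} lying between it and the next selected one. Under your coloring by start-time residue mod $2S-1$ (or greedy selection of starts $\ge 2S-1$ apart) two consecutive same-class services can be adjacent in the service order with no skipped requests in between, so delay compounds without bound. Second, and more fundamentally, even a walk with \emph{zero} accumulated delay cannot serve a request whose original service start $t_i$ satisfies $t_i > d_i - S$: the feasibility of $W^\star$ only guarantees $t_i \le d_i - 1$. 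An adversarial optimum can serve every request at time $d_\sigma - 1$, in which case every residue class fails for every one of its members, so neither the ``extra class of late requests'' nor ``choose the best shift'' rescues the argument -- a request starting at $d_\sigma-1$ cannot be pushed later at all; it must be served \emph{earlier}.

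The missing idea (which is how the paper proceeds) is to partition the covered requests by their \emph{position in $W^\star$'s service order}, taking every $(2S-1)$-th request, and to use the $S-1$ skipped requests \emph{before} a selected request $\sigma_i$ to arrive at $v_i$ early: the modified walk idles at $v_i$ from $t_i-(S-1)$ through $t_i+S$ (one unit was already there, $S-1$ borrowed from the preceding skipped services, $S-1$ from the following ones), and then resynchronizes with $W^\star$ by skipping the next $S-1$ services. Since $t_i\in[r_i,d_i-1]$ and $d_i-r_i\ge\lmin\ge S$, the number of these $2S-1$ consecutive idle steps falling inside $[r_i,d_i]$ is $1+\min(S-1,t_i-r_i)+\min(S-1,d_i-t_i-1)\ge S$, so the selected request is feasibly served with service time $S$ regardless of whether $t_i$ is near the release time or the deadline. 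With that two-sided borrowing the index-based partition into $2S-1$ classes and the averaging step you describe go through and yield $\opt(G,I,S)\ge\opt(G,I,1)/(2S-1)$; without it, your forward-only delay-budgeting cannot be repaired.
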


Our second step (also in Section~\ref{sec:relating_opt}) relates the value of the optimal solution over the true requests $I$ to the optimum over the predicted sequence $I'$. In both cases, we impose some service time requirements. Note that this argument needs to account for the discrepancy in locations, time windows, as well as the rewards of the true and predicted requests. 

\begin{lemma}
\label{l:relating-opts}
    Let $(G,I,I',M)$ be an instance of the TW-TSP with predictions, where $\locem$, $\rewem$, and $\twem$ denote the maximum location, reward, and time window errors of the instance respectively. Define $S:=4\locem+1$ and $S':=2\locem+1$. Then, if $\twem \leq \lmin/2$ and $\locem\leq (\lmin-1)/4$, we have \[\opt(G,I',S')\ge\frac{1}{3\rewem}\cdot \opt(G,I,S).\]
\end{lemma}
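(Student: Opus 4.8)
The plan is to take an optimal walk $W^\star$ for the instance $(G,I,S)$ with $S=4\locem+1$ and transform it into a walk $W'$ that covers a comparable-reward subset of the predicted requests $I'$ under the smaller service time $S'=2\locem+1$. The transformation should be purely local: wherever $W^\star$ services a true request $\sigma$, the matched predicted request $\sigma'=M(\sigma)$ lives within distance $\locem$ of $\sigma$, so $W'$ will make a short detour of length at most $2\locem$ (there and back) to sit on $v_{\sigma'}$ for $S'$ steps and then return. The service-time budget is exactly engineered for this: the $S=4\locem+1$ idle steps that $W^\star$ spent at $v_\sigma$ are enough to pay for the round-trip detour ($2\locem$ steps) plus the $S'=2\locem+1$ idle steps needed at $v_{\sigma'}$, with nothing left over. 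So $W'$ never falls behind the schedule of $W^\star$, and in particular every predicted request it tries to cover is attempted at essentially the same absolute time that $W^\star$ serviced its true counterpart.

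The second ingredient handles the time-window discrepancy. Since $W^\star$ covers $\sigma$ with service time $S$, it is idle at $v_\sigma$ over an interval of length $S$ contained in $[r_\sigma, d_\sigma]$. The matched predicted request has $|r_\sigma - r_{\sigma'}| \le \twem$ and $|d_\sigma - d_{\sigma'}|\le\twem \le \lmin/2$. I would argue that the sub-interval of length $S' = 2\locem+1 \le \lmin/2 + 1$ during which $W'$ is idle at $v_{\sigma'}$ (which we can choose to sit anywhere inside the length-$S$ window, e.g.\ centered) fits inside $[r_{\sigma'}, d_{\sigma'}]$. The slack here comes from two places: $S' < S$ leaves room to absorb a shift of up to $\twem$ at each end, and the hypotheses $\twem\le\lmin/2$, $\locem\le(\lmin-1)/4$ guarantee the windows are long enough that $S'$ actually fits. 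This is the step I expect to require the most careful bookkeeping — one has to verify the arithmetic $r_{\sigma'} \le \tau' $ and $\tau' + S' \le d_{\sigma'}$ for the chosen start time $\tau'$, using $d_\sigma - r_\sigma \ge \lmin$, $|d_\sigma-r_\sigma| \ge S$ (feasibility of $W^\star$), and the two error bounds; it should go through but it's where an off-by-constant slip would hide.

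Finally, the reward loss. The transformed walk $W'$ covers $M(\sigma)$ for every $\sigma$ that $W^\star$ covered, so $\cov(W',I',S') \supseteq M(\cov(W^\star,I,S))$. Since $M$ is a bijection, $\rew(W',I',S') \ge \sum_{\sigma\in\cov(W^\star,I,S)} \pi_{M(\sigma)}$, and by the reward-error bound $\pi_{\sigma'} \ge \pi_\sigma/\rewem$ for every matched pair, so this is at least $\frac{1}{\rewem}\rew(W^\star,I,S) = \frac{1}{\rewem}\opt(G,I,S)$. This already gives a $\rewem$ factor; the extra factor of $3$ in the statement presumably absorbs a technicality I'm glossing — most likely that detours from consecutive serviced requests can overlap or interleave, or that $W'$ as literally described is a concatenation of detours that needs to be re-expressed as a legal walk in $\mathcal{W}(G)$ without double-counting idle time, or a rounding issue when $\locem$ is not an integer so that $S,S'$ must be taken as $\lceil\cdot\rceil$. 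I would structure the write-up so that the clean detour argument yields the bound with constant $1$, and then note the constant-$3$ degradation comes from these implementation details; if the overlap issue is real, the fix is to process serviced requests in time order and observe that the detour for one finishes before the window for the next begins (again using the $\lmin$ lower bound on window lengths), so the detours compose into a single valid walk.
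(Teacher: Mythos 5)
There is a genuine gap, and it sits exactly where you flagged the "careful bookkeeping": the time-window step cannot be handled by a single walk $W'$, and your attribution of the factor $3$ to detour composition or rounding is wrong. Your detour construction is the same as the paper's and is fine as far as scheduling goes: the $S=4\locem+1$ idle steps at $v_\sigma$ pay for the round trip plus the $S'=2\locem+1$ idle steps at $v_{\sigma'}$, so $W'$ never falls behind $W^\star$ and the detours compose into a legal walk with no double counting. The problem is that the start of the $S'$-long idle period at $v_{\sigma'}$ is forced to lie within the window $[t, t+S-S'-\ell(v_\sigma,v_{\sigma'})]$, where $t$ is when $W^\star$ begins servicing $\sigma$; you have at most $S-S'=2\locem$ steps of freedom in placing it (you cannot "center it anywhere", since the detour must fit inside the $S$ idle steps). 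Meanwhile $r_{\sigma'}$ can exceed $r_\sigma$ by up to $\twem\le\lmin/2$, which can be arbitrarily larger than $2\locem$ (take $\locem=0$, $\twem=\lmin/2$, and $W^\star$ servicing $\sigma$ right at $r_\sigma$): then the predicted window has not even opened by the time your detour must be over, so $\cov(W',I',S')\supseteq M(\cov(W^\star,I,S))$ is simply false, and no choice of $\tau'$ makes the arithmetic in your second paragraph go through.

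The paper's proof repairs this by constructing \emph{three} candidate walks: $W'$ as you describe, plus two time-shifted copies $W'_1$ and $W'_2$ that run the identical route but start $K=\lmin/2$ steps later, respectively earlier. A short case analysis (using $\twem\le K$ and $\lmin\ge S$) shows that every matched predicted request $\sigma'_i$ is covered by at least one of the three walks in $(G,I',S')$: $W'$ handles the case $r'_i\le t\le t+S'\le d'_i$, $W'_1$ the case $r'_i>t$, and $W'_2$ the case $d'_i<t+S'$. Hence the three walks together capture at least $\frac{1}{\rewem}\opt(G,I,S)$ of predicted reward, and the best single one captures at least a third of that — this is the real source of the factor $3$, not an implementation technicality. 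To fix your write-up you would keep your detour construction verbatim, drop the "centered placement" claim, and add the three-shift case analysis with the pick-the-best-of-three step.
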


Our third step (Section~\ref{sec:offline}) captures the offline component of our algorithm: computing an approximately optimal walk over the predicted requests with the specified service times. For this we leverage previous work on the TW-TSP without service times and show how to adapt it to capture the service time requirement.

\begin{lemma}
\label{l:offline-alg}
    Given any instance $(G,I',S')$ of the TW-TSP with service times, there exists a polynomial time algorithm that returns a walk $W\in\mathcal{W}(G)$ with reward 
    \[\rew(W,I',S')= \frac{1}{\mathcal{O}\left(\log \min(D,\lmax)\right)} \cdot \opt(G,I',S').\]

\end{lemma}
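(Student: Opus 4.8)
The plan is to take a known offline $O(\log\min(D,\lmax))$-approximation for TW-TSP without service times~\cite{BBCM04,CKP12} and adapt it so that it honors the service-time requirement with no further asymptotic loss. Recall the shape of those algorithms: one groups the requests of $I'$ into $O(\log\min(D,\lmax))$ classes according to the "type" of their time windows; within a single class the windows are essentially the same, so extracting a high-reward sub-walk reduces to an orienteering problem (a length-budgeted prize-collecting walk), which admits a constant-factor approximation; and combining the information across the $O(\log\min(D,\lmax))$ classes is what loses the logarithmic factor. We make two changes. First, the grouping of requests by window type is left exactly as is -- a service requirement of $S'$ does not change which windows are "of the same type". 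Second, the orienteering subroutine is replaced by orienteering with a service cost of $S'$ at each candidate vertex, meaning any walk that collects the prize at a vertex must additionally spend $S'$ units of its length budget there.

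If $S'>\lmin$ then no window can contain a length-$S'$ service, $\opt(G,I',S')=0$, and the claim is trivial, so assume $S'\le\lmin$. Then any class all of whose windows are shorter than $S'$ is useless to a walk that must service for $S'$ steps and can be discarded with no loss (it is equally useless to $\opt(G,I',S')$); in every surviving class the windows have length at least $S'$, so the orienteering budget is still meaningful after the service costs are subtracted. The node-cost variant of orienteering reduces to ordinary orienteering by the standard trick of splitting each candidate vertex into two copies joined by an edge of length $S'$, so that a prize-collecting walk pays $S'$ for visiting it; since the orienteering approximation ratio is a universal constant that does not depend on the underlying metric, this substitution is free. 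Running the adapted algorithm on $(G,I',S')$ therefore returns a walk $W$ with $\rew(W,I',S')\ge\frac{1}{O(\log\min(D,\lmax))}\,\opt(G,I',S')$, and the parameter $\min(D,\lmax)$ is unchanged because the only step that touches the metric is the orienteering subroutine, whose guarantee is metric-independent.

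The main obstacle I anticipate is verifying that the per-class and combining arguments of~\cite{BBCM04,CKP12} still lose only the stated $O(\log\min(D,\lmax))$ factor once every covered request eats $S'$ extra time units -- concretely, tracking how much slack each class's sub-walk is allotted and checking that charging $S'$ per request to that slack is absorbed within the same constants, which is exactly where $S'\le\lmin$ is used. A more self-contained alternative I would also consider is a black-box reduction: attach at each request vertex $v$ a pendant vertex $w_v$ at distance $\Theta(S')$, relocate the request onto $w_v$ with a suitably shrunk time window, and observe that, up to constant factors in the optimum, a service-time-$1$ walk on this enlarged network corresponds to a service-time-$S'$ walk on $G$, since any walk idling at $w_v$ is forced to traverse $\Theta(S')$ worth of pendant around that visit (and, conversely, $S'$ idle steps at $v$ can be simulated by stepping out to $w_v$, idling once, and stepping back). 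This is clean, but it raises the diameter to $D+\Theta(S')$ and hence only preserves the $\log\min(D,\lmax)$ bound when $S'=O(D)$; that does hold in the one place this lemma is applied, where $S'=2\locem+1$ and $\locem$, being a shortest-path distance in $G$, is at most $D$.
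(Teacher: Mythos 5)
Your second, ``black-box'' route is in fact the paper's own device: the paper attaches to each request vertex a pendant vertex at distance $S$, relocates the request there with window $[2r_i+S,\,2d_i-S]$, and additionally doubles every original edge length so that the out-and-back detour of length $2S$ exactly replaces the $S$ idle steps (Lemma~\ref{l:tw_tsp_sc_approx}); this makes the correspondence exact in both directions rather than ``up to constants,'' and gives diameter $O(D+S)$ exactly as you observe. But as written your proposal has two genuine gaps. First, you take the $O(\log\min(D,\lmax))$ approximation for TW-TSP \emph{without} service times as already known from \cite{BBCM04,CKP12}; those works give $O(\log \lmax)$ (resp.\ polylogarithmic in $n$) only. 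The $\min$ with $D$ is something this paper has to prove separately: requests whose windows have length at least $4D$ admit an $O(1)$ approximation by aligning release times and deadlines to multiples of $2D$ and running an orienteering subroutine phase by phase (Lemma~\ref{l:large_window_approximation}), and combining this with the $O(\log\lmax)$ algorithm on the remaining short-window requests yields Lemma~\ref{l:tw_tsp_approx}. Without that ingredient your construction only delivers $O(\log\lmax)$, and your assertion that ``$\min(D,\lmax)$ is unchanged because the orienteering guarantee is metric-independent'' has no support---the $D$ term is not a property of the cited algorithms at all, and any derivation of it does interact with the (enlarged) diameter.

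Second, neither of your routes covers service times exceeding the diameter, which the lemma as stated allows. Your pendant-vertex reduction inflates the diameter to $D+\Theta(S')$, as you concede, and your first route hinges on the claim that charging $S'$ per covered request inside the per-class and combining analysis of \cite{CKP12} is ``absorbed within the same constants''---precisely the step you admit you have not verified, so that route is a plan rather than a proof. The paper closes the large-$S$ case by a different reduction (Lemma~\ref{l:large_service_cost}): when $S\ge D$ it passes to the uniform complete graph with all distances $D$ (losing only a factor $2$ in the optimum by serving every other request), where TW-TSP with service times becomes single-machine throughput scheduling of jobs with processing time $S+D$, release time $r_i$ and deadline $d_i+D$, for which constant-factor approximations are known \cite{BGNS01,BBFNS01}. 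Your fallback observation---that in the one application of the lemma $S'=2\locem+1\le 2D+1$ because $\locem$ is a shortest-path distance---is correct and would suffice for Theorem~\ref{t:main}, but it establishes a strictly weaker statement than the lemma you were asked to prove.
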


Finally, the fourth component (Section~\ref{sec:online}) addresses the online part of our algorithm. Given a walk computed over the predicted request sequence, it solves an appropriate online matching problem to construct detours to capture true requests. As in Lemma~\ref{l:relating-opts}, this part again needs to account for the discrepancy in locations, time windows, as well as the rewards of the true and predicted requests.

\begin{lemma}
\label{l:online-alg}
    Given an instance $(G,I,I',M)$ of the TW-TSP with predictions satisfying $\twem\leq\lmin/2$ and $\locem\leq (\lmin-1)/4$; a walk $W'\in \mathcal{W}(G)$; and any integer $S'\geq 2\locem + 1$, there exists an online algorithm (Algorithm~\ref{alg:matching_alg}) that returns a walk $W\in\mathcal{W}(G)$ with expected reward
    \[\mathbb{E} \left[ \rew(W,I,1) \right] \ge \frac{1}{6\rewem}\cdot \rew(W',I',S').\]
\end{lemma}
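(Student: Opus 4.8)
The plan is to follow the walk $W'$ over predicted requests, and whenever $W'$ services a predicted request $\sigma' \in I'$ (i.e., idles on $v_{\sigma'}$ for $S'$ steps within $[r_{\sigma'}, d_{\sigma'} - S']$), to use a portion of that idle budget to take a detour to a matched true request and cover it. The key structural observation is that $S' \ge 2\loce_M + 1$ means the detour to $M^{-1}(\sigma')$ (the true request matched to $\sigma'$) and back costs at most $2\loce_M$ steps, leaving at least one step to actually idle on the true vertex and cover it. So *if* the true request $\sigma = M^{-1}(\sigma')$ has already been revealed (i.e., $r_\sigma$ has passed) and its deadline has not yet been reached with enough slack, the detour succeeds. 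The time-window error bound $\twe_M \le \lmin/2$ together with $\loce_M \le (\lmin-1)/4$ is what guarantees that the service window of $\sigma'$ sits inside a feasible region for $\sigma$: since $W'$ begins servicing $\sigma'$ at some $\tau \in [r_{\sigma'}, d_{\sigma'} - S']$, and $|r_\sigma - r_{\sigma'}|, |d_\sigma - d_{\sigma'}| \le \twe_M$, one checks that $[\tau + \loce_M, \tau + \loce_M + 1] \subseteq [r_\sigma, d_\sigma - 1]$, so the detour lands within $\sigma$'s true window. I would verify this containment carefully as the first step — it is the part where all three error-parameter hypotheses get used simultaneously.

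The second ingredient handles the fact that the algorithm does not know the matching $M$ and a true request may not yet have arrived when $W'$ passes its predicted counterpart. I would set this up as an online bipartite matching / assignment problem: as the walk proceeds, each time it is about to service a predicted request $\sigma'$, it looks at the set of true requests revealed so far that are within distance $\loce_M$ of $v_{\sigma'}$ (this is where the known upper bound on $\loce_M$ is used — the algorithm can identify "eligible" true requests) and that it has not already covered, and it picks one greedily — say, the one of maximum reward, or via any online matching subroutine with a constant competitive guarantee. The point is that $M$ itself certifies a feasible assignment: every $\sigma' \in \cov(W', I', S')$ could be matched to its true partner $M^{-1}(\sigma')$, and by the window-containment argument that partner is both revealed in time and coverable. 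So the "offline optimum" of this induced matching problem collects reward at least $\rew(W', I', S')$ (in fact exactly the sum of $\pi_{M^{-1}(\sigma')}$ over covered $\sigma'$).

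The third step converts predicted reward to true reward: for a matched pair, $\pi_\sigma \ge \pi_{\sigma'} / \rewe_M \ge \pi_{\sigma'}/\rewem$ by definition of the reward error, so the total true reward collected is at least $\frac{1}{\rewem}$ times the matching value. Combining with the online matching loss — I expect a greedy argument to give a factor of $2$, plus perhaps a further factor from the fact that a true request could be the eligible partner of several predicted locations and we must avoid double-servicing, which I would handle by a standard charging argument losing another constant — yields the claimed $\frac{1}{6\rewem}$. The randomization (and hence the expectation in the statement) would enter if the online matching subroutine is randomized; alternatively a deterministic greedy may suffice and the expectation is vacuous.

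The main obstacle, I expect, is the online matching analysis: specifically, arguing that greedily assigning revealed true requests to predicted locations as the walk encounters them loses only a constant factor against the assignment induced by $M$, while simultaneously respecting that (a) a true request arrives only at $r_\sigma$, which may be after the walk has already passed some predicted location eligible for it, and (b) each true request can be serviced at most once. Handling (a) requires showing that the *order* in which $W'$ services predicted requests is compatible with the *order* in which true requests become available — which again traces back to the $\twe_M \le \lmin/2$ bound forcing predicted and true windows to substantially overlap, so a predicted request serviced "early" along $W'$ has its true partner available "early" too. I would formalize this as: restrict attention to $\cov(W', I', S')$, use $M$ to define a canonical feasible online assignment, and show any maximal (greedy) online assignment has size/reward within a factor $6$ of it.
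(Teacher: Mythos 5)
There is a genuine gap, and it is exactly at the step you flagged as "the part where all three error-parameter hypotheses get used simultaneously." Your claimed containment $[\tau+\locem,\tau+\locem+1]\subseteq[r_\sigma,d_\sigma-1]$ is false in general: the hypotheses only give $\twem\le\lmin/2$, which can be much larger than $\locem\le(\lmin-1)/4$. Take $\locem=0$ and $r_\sigma=r_{\sigma'}+\lmin/2$; if $W'$ services $\sigma'$ at $\tau=r_{\sigma'}$, the true request has not even been released when the walk passes its predicted counterpart, and symmetrically the true deadline can expire before $W'$ arrives. You notice this yourself as obstacle (a), but the fix you sketch (arguing the service order of $W'$ is "compatible" with the order in which true requests become available) does not address it: the problem is not relative order but an absolute timing mismatch of up to $\twem$, and the reward of predicted requests whose true partners are shifted this way could be all of $\rew(W',I',S')$. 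No deterministic greedy following $W'$ at its nominal times can recover this, so your remark that the expectation might be vacuous is a symptom of the missing idea. The paper's algorithm resolves this by running the whole walk with a single random time shift $\epsilon K$, $\epsilon\in\{-1,0,1\}$, $K=\lmin/2\ge\twem$; a case analysis (Claim~\ref{c:availability}, essentially the same three cases as in the proof of Lemma~\ref{l:relating-opts}) shows that for each covered $\sigma'_i$ at least one of the three shifts makes its true partner reachable, so each partner is reachable with probability at least $1/3$. This is where the randomness and one factor of $3$ in the bound come from.

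The rest of your outline is essentially the paper's argument and is fine: define reachability by $r_\sigma\le t\le d_\sigma-\ell(v'_i,v_\sigma)-1$ and $2\ell(v'_i,v_\sigma)+1\le S'$ (so $S'\ge 2\locem+1$ guarantees the matched partner always passes the distance test and the detour fits in the idle budget), greedily take the highest-reward uncovered reachable request, and charge half its reward to itself and half to the true partner of the current predicted request; then either the partner was already covered or the greedy choice has at least its reward, giving credit $\pi_i/2$ whenever the partner is reachable. Combined with the $1/3$ reachability probability and $\pi_i\ge\pi'_i/\rewem$ this yields exactly $\frac{1}{6\rewem}$; the factor $6$ is $3\times 2$, not "greedy factor $2$ plus another constant for double-servicing" (double-servicing is already handled by the half-half charging, since each true request is covered at most once and each predicted request is charged only by its own matched partner).
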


\noindent
Theorem~\ref{t:main} follows immediately by putting Lemmas~\ref{l:service-costs}, \ref{l:relating-opts}, \ref{l:offline-alg} and \ref{l:online-alg} together.

\subsection*{Lower bounds and tightness of our results.} 

We show that the online TW-TSP does not admit sublinear competitive algorithms in the absence of predictions if $\lmin<D$, even with non-zero service times. Furthermore, if the service times are all $0$, no sublinear competitive ratio is possible even using predictions that are accurate in all respects except the request location. Therefore, in order to achieve a nontrivial competitive ratio, it is necessary to use predictions as well as to impose non-zero service times on the optimum. The proofs are presented in Section~\ref{sec:lower_bound}.

\begin{theorem}\label{l:unbounded_rand}
The competitive ratio of any randomized online algorithm for Online TW-TSP on instances with $\lmin \leq D$ and all service times equal to $1$ is at most $1/n$.
\end{theorem}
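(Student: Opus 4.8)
The plan is to invoke Yao's minimax principle and exhibit a single hard distribution over instances. Concretely, for each $n$ I would construct a distribution $\mathcal{D}_n$ over Online TW-TSP instances on $n$ vertices, all with $\lmin \le D$ and unit service times, such that (i) $\opt(G, I) = 1$ for every instance $I$ in the support of $\mathcal{D}_n$, and (ii) every \emph{deterministic} online algorithm has expected reward at most $1/n$ under $\mathcal{D}_n$. Given (i) and (ii), for any randomized online algorithm $\mathcal{A}$ we would have $\mathbb{E}_{I \sim \mathcal{D}_n}\big[\mathbb{E}_{\text{coins}}[\rew(\mathcal{A}(I), I)]\big] = \mathbb{E}_{\text{coins}}\big[\mathbb{E}_{I \sim \mathcal{D}_n}[\rew(\mathcal{A}(I),I)]\big] \le 1/n$, since fixing the coins yields a deterministic algorithm; hence some $I$ in the support satisfies $\mathbb{E}_{\text{coins}}[\rew(\mathcal{A}(I),I)] \le 1/n = \opt(G,I)/n$ by (i), which is exactly the claimed bound on the competitive ratio.

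For the distribution, I would take $G$ to be the star on $n$ vertices with unit-length edges: it is connected, has integer edge lengths, and diameter $D = 2$. To sample an instance from $\mathcal{D}_n$, pick a vertex $X$ uniformly from $V(G)$ and let $I_X$ consist of the single request $\sigma$ located at $X$ with release time $1$, deadline $2$, and reward $1$. Every instance in the support then has $\lmin = 1 \le 2 = D$ and unit service time, as required. Since there is a single request and an offline solution knows $X$, it can start at $X$, idle through the window $[1,2]$, and collect reward $1$; no more is possible, so $\opt(G,I_X) = 1$ for every $X$, establishing (i).

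For (ii), the key point is that the request is released only at time $1$, so a deterministic online algorithm receives no information before then and the vertex $u$ it occupies at time $1$ is identical across all realizations of $X$. By the coverage definition with service time $S = 1$, the walk covers $\sigma$ only if it is idle at $v_\sigma = X$ during some step $\tau \in [r_\sigma, d_\sigma - 1] = \{1\}$; since every distance in $G$ is at least $1$, from any vertex other than $X$ the walk cannot already be idling at $X$ at time $1$, so coverage forces $u = X$. Hence the reward collected is $\1[u = X]$, whose expectation over $X$ is $1/n$, giving (ii). The argument is short, and its crux is this decoupling of the algorithm's state at the release time from the still-hidden request location, together with the fact that a unit service time inside a unit-length window pins the server's required position to a single integer step — which is precisely what dooms the online problem when $\lmin \le D$. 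The one thing to verify carefully is that the constructed instances obey all standing assumptions (connectivity, integer edge lengths, $\lmin \le D$), which they do.
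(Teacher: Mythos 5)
Your proof is correct: the Yao/averaging step is sound, the single-request star instance satisfies all standing assumptions ($\lmin=1\le 2=D$, unit service time, integer lengths), and the key observation — that a deterministic algorithm's location at the release time is fixed before the request is revealed, while coverage with a unit service time inside a unit-length window forces the server to already be idling at the request vertex at that step — is exactly the engine of the paper's argument as well. The difference is in the construction: the paper uses the uniform complete graph with all edge lengths $D$ and a long sequence of $N$ requests at independently random vertices, with windows of length $L\le D$ spaced $2D$ apart, so that $\opt(G,I,1)=N$ while any deterministic algorithm expects only $N/n$. Your single-request version is more minimal and suffices for the strict multiplicative notion of competitive ratio used in the statement, but it is brittle in a way the paper's is not: with one request of reward $1$, the bound would evaporate under any definition of competitiveness allowing an additive constant, and it says nothing about instances with large optimum. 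The paper's $N$-request construction makes the $1/n$ gap persist as $\opt\to\infty$ (and for every window length $L\le D$), which is the more robust statement; otherwise the two proofs share the same core idea.
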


\begin{theorem}\label{t:lb_for_0_sc}
    For any $S>0$, there exists an instance $(G,I,I',M)$ of the TW-TSP with predictions and service times $0$, satisfying $\twem=0$, $\rewem=1$, and $\locem=S$, such that any randomized online algorithm taking the tuple $(G, I', \locem)$ as offline input and $I$ as online input achieves a reward no larger than $O(1/n)\cdot\opt(G,I,0)$. Here $n$ is the number of vertices in $G$.
\end{theorem}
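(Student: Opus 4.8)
The plan is to establish the claim through Yao's minimax principle, using a family of ``star'' instances that all share the same offline input. Concretely, I would fix a graph $G$ and a predicted sequence $I'$ once and for all, impose a distribution over the true sequences $I$ (which also determines the matching $M$), and then show that (a) every instance in the support has $\opt(G,I,0)=1$ together with the prescribed error profile $\twem=0$, $\rewem=1$, $\locem=S$ and service time $0$, while (b) every deterministic online algorithm that is handed $(G,I',\locem)$ offline and $I$ online collects expected reward at most $1/(n-1)$ over the random draw. Averaging over an algorithm's internal coin tosses then produces, for every randomized online algorithm, an instance in the support on which its expected reward is $O(1/n)\cdot\opt(G,I,0)$, which is the statement.

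For the construction, let $G$ be the star with a centre $c$ and $m:=n-1$ leaves $u_1,\dots,u_m$, where every edge $(c,u_j)$ has length exactly $S$; hence $\ell(c,u_j)=S$, $\ell(u_j,u_{j'})=2S$ for $j\neq j'$, and $|V|=n$. Let $I'$ consist of the single request located at $c$ with time window $[1,2]$ and reward $1$. To draw a true instance, pick $i$ uniformly from $\{1,\dots,m\}$ and let $I$ consist of the single request located at $u_i$, again with time window $[1,2]$ and reward $1$, matched to the predicted request. The location, time-window, and reward errors of this matching are then $\ell(u_i,c)=S$, $0$, and $1$, the service time is $0$, and, since the walk that simply starts at $u_i$ covers that request, $\opt(G,I,0)=1$. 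All hypotheses of the theorem are thus satisfied by every instance in the support.

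For the lower bound I would argue that an online algorithm cannot guess $i$. Fix a deterministic online algorithm; because $I'$ does not depend on $i$ and the true request is revealed only at its release time $1$, the algorithm's configuration at time $1$ — the vertex it occupies, or the single edge it has committed to traversing since time $0$ — is a fixed function of $(G,I',\locem)$, independent of $i$. Since every edge has length $S$ and the window $[1,2]$ has length $1<S$, so that no traversal started after the reveal can help, one checks that the walk reaches $u_i$ within its window only if a decision made before time $1$ — its starting vertex, or (in the boundary case $S=2$) the one edge it began traversing at time $0$ — happens to land at $u_i$; this occurs with probability at most $1/m$ over the uniform choice of $i$. Hence every deterministic online algorithm has expected reward at most $1/m=1/(n-1)$, and Yao's principle upgrades this to the claimed randomized guarantee.

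I expect the main obstacle to be precisely this ``no hedging'' step: ruling out that some cleverly chosen central position, or a partially completed move, lets the algorithm be within the window's reach of many candidate leaves simultaneously. The key point is that the leaves are pairwise at distance $2S$ while the window has length only $1$, so no single location is within reach of more than one leaf within its window; making this airtight in the discrete, ``no decisions while mid-transition'' movement model is the part requiring care. A secondary nuisance is pinning the error parameters exactly — we need $\locem=S$ on the nose while keeping the window short relative to $S$, which is why the non-trivial instances require $S\ge 2$; for $S=1$ one instead uses $m=O(1)$ leaves, for which the bound $O(1/n)$ still holds (if vacuously).
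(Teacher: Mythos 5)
Your construction and Yao argument are sound for $S\ge 2$, and where they work they are in fact more elementary than the paper's proof: the paper builds a chain of $N$ clusters (complete graphs of size $C$ with intra-cluster edges of length $S$, joined by edges of length $KS$) with a request per cluster whose window length $KS-1$ is just shy of the inter-cluster distance, and then argues that wrong guesses force detours that accumulate and cause cascading misses; your single-request star with leaves at distance $S$, window $[1,2]$, and a centred prediction isolates the same mechanism (the algorithm must commit to a leaf before the reveal, and the window is too short to react) with $\opt(G,I,0)=1$ and expected reward at most $1/(n-1)$. The error bookkeeping ($\twem=0$, $\rewem=1$, $\locem=S$) and the discrete-movement case analysis (starting vertex, or an edge committed at time $0$ in the boundary case $S=2$) are handled correctly.

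The genuine gap is the case $S=1$, and your proposed fix does not close it. Taking $m=O(1)$ leaves makes $n$ a constant, so "reward $\le O(1/n)\cdot\opt$" only holds with an instance-dependent constant, which is vacuous and does not establish the theorem, whose content is precisely a lower bound of order $n$ on the competitive ratio along a family with $n\to\infty$. Worse, no single-request construction can work when $S=1$: time windows are integral with $d_\sigma>r_\sigma$, so every window has length at least $1=\locem$, and an online algorithm that simply waits at the predicted location (exact window, since $\twem=0$) learns the true location at its release time and can reach it, at distance at most $1$, before the deadline — it covers the lone request with probability $1$. So for $S=1$ any valid lower bound must use a sequence of requests in which length-$1$ detours accumulate and destroy future coverage, which is exactly the role of the multi-cluster chain in the paper's proof (and even there the parameters must be tuned carefully, since once the window length is at least $S$ the algorithm can harvest roughly one request per unit of slack). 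As it stands, your argument proves the theorem only for $S\ge 2$ and needs a different, multi-request construction to cover $S=1$.
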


As mentioned earlier, the best known approximation factor for the offline TW-TSP is $O(\log \lmax)$ (which we show can be improved slightly to $O(\log \min(D,\lmax))$). We inherit this logarithmic dependence on $D$ and $\lmax$ in the predictions setting. Furthermore, any improvements to the offline approximation would immediately carry through into our competitive ratio as well. In particular, given an offline TW-TSP algorithm that achieves a competitive ratio of $\alpha(D,\lmax)$, we obtain an online algorithm that achieves a competitive ratio of $O(\locem\cdot\rewem^2\cdot\alpha(D,\lmax))$. 

The dependence of our bound on $\rewem$ can easily be seen to be tight -- consider a star graph with requests on leaves, and edge lengths and time windows defined in such a manner that any feasible walk can cover at most one request. Then an uncertainty of a factor of $\rewem$ in the predicted rewards can force any online algorithm to obtain an $\Omega(\rewem^2)$ competitive ratio even if the predictions are otherwise perfect. Finally, we show in Section~\ref{s:loc-err-lb} that the dependence of our competitive ratio on $\locem$ is also tight:
\begin{theorem}\label{t:lb}
    For any $S>0$, there exists an instance $(G, I, I', M)$ of the TW-TSP with predictions satisfying $\twem=0$, $\rewem=1$, and $\locem=S$ such that the competitive ratio of any randomized online algorithm taking the tuple $(G, I', \locem)$ as offline input and $I$ as online input asymptotically approaches $1/(S+1)$.
\end{theorem}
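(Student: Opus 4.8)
The plan is to construct a family of instances on which the predicted sequence $I'$ is so symmetric that it carries essentially no information about which of several candidate locations the true request will appear at, forcing any randomized algorithm to "spread out" its commitment and thereby lose a factor of roughly $S+1$. First I would build the graph $G$ as a collection of $S+1$ "branches" hanging off a common junction, where each branch leads to a distinct vertex at distance exactly $S$ from the junction (so the pairwise distance between any two branch endpoints is $2S$, consistent with $\locem=S$). The predicted request $\sigma'$ would be placed at the junction (or at a canonical branch endpoint), with a single reward-$1$ request and a time window just wide enough to be served from the junction but \emph{not} wide enough to travel out along a branch, serve, and come back within any subsequent request's window. The true request $\sigma$ is then placed, adversarially/uniformly at random, at one of the $S+1$ branch endpoints; by construction $\loce(\sigma,\sigma')=S$, $\twe(\sigma,\sigma')=0$ (keep the time windows identical in length and release/deadline), and $\rewe(\sigma,\sigma')=1$, so the instance meets the stated error bounds.

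Next I would argue the information-theoretic core. The algorithm receives $I'$ (hence knows the junction, the branch structure, and $\locem=S$) but learns the true location only at the release time $r_\sigma$; to cover $\sigma$ the walk must be idle on $v_\sigma$ for one step within $[r_\sigma,d_\sigma-1]$, and the time window is chosen short enough that the walk must already be "essentially at" the correct branch endpoint by time $r_\sigma$ — there is no slack to react after the reveal. Since the walk's position at time $r_\sigma$ is fixed before the reveal (it depends only on $I'$ and the algorithm's internal randomness), and since the true location is uniform over $S+1$ equally plausible endpoints, the probability that the walk is positioned to cover $\sigma$ is at most $1/(S+1)$ for every fixed realization of the algorithm's randomness; averaging, $\mathbb{E}[\rew(W,I)]\le 1/(S+1)$. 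Meanwhile $\opt(G,I)=1$ in every realization, since the hindsight-optimal walk simply sits at the (now known) true location during the window. This yields a competitive ratio approaching $1/(S+1)$; a small amount of care (e.g. letting the branch endpoints be at distance $S$ exactly and choosing $\lmin$ large relative to $S$ so that $\twem\le\lmin/2$ and $\locem\le(\lmin-1)/4$ hold) ensures the construction is legal for all $S$, and one may need to iterate the gadget or let auxiliary parameters grow to make the ratio \emph{asymptotically} reach $1/(S+1)$ rather than merely bounded by it.

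The main obstacle I anticipate is not the counting argument — which is a clean application of Yao's principle over the uniform choice of branch — but rather making the timing airtight: I must ensure that (i) the predicted time window genuinely forbids any post-reveal detour down a branch, so that the algorithm truly must commit in advance, while simultaneously (ii) the window is wide enough that a walk pre-positioned at the correct endpoint can serve the one-unit service time, and (iii) all the global parameter constraints $\twem\le\lmin/2$, $\locem\le(\lmin-1)/4$ are satisfied. Balancing these forces a careful choice of $r_\sigma$, $d_\sigma$, and the branch lengths as functions of $S$; I expect this bookkeeping, rather than any conceptual difficulty, to be where the real work lies. One clean way to finesse it is to place the reveal time $r_\sigma$ so that the junction-to-endpoint travel time $S$ exactly consumes the reaction budget, so that the unique walks achieving positive reward are precisely those that head toward one endpoint from the start, and then the $1/(S+1)$ bound is immediate.
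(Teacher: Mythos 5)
There is a genuine gap, and it is exactly at the point you flag as "bookkeeping": requirements (i) and (iii) of your construction are mutually exclusive, and resolving the tension is not bookkeeping but the entire content of the proof. If the instance is to satisfy $\locem\le(\lmin-1)/4$ (or even just to have window length larger than $S+1$, which is the regime where the paper's upper bound applies and where a tightness statement is meaningful), then an algorithm that simply waits at the predicted junction until the release time can, after the reveal, walk the distance $S$ to the true endpoint, spend its one unit of service time, and still finish inside the window --- so your one-shot star gadget forces no loss at all. If instead you shrink the window so that no post-reveal reaction is possible, then the bound you get is $1/(\text{number of branches})$, a quantity completely unrelated to $S$ (with $n$ branches you would "prove" $1/n$), which shows that the uniform-guessing argument over $S+1$ endpoints is not the mechanism behind the $1/(S+1)$ factor; moreover such an instance has $\lmin\approx\locem$, so it says nothing about tightness of the dependence on $\locem$ in the regime where predictions help.

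The paper's construction is the iterated version you mention only in passing, and the iteration carries all the weight. It chains $N$ clusters (complete graphs on $C$ vertices with intra-cluster distances $S$), joins consecutive clusters by edges of length $KS$, and places one uniformly random unit-reward request per cluster with window length $KS$; thus $\lmin=KS$ can be made arbitrarily large relative to $\locem=S$, the hindsight optimum serves all $N$ requests, and the predictions (an independent copy of the same random process) carry no information. The loss then comes from a slack-amortization argument, not from guessing: by Yao's principle one analyzes a deterministic algorithm, which lands on the correct vertex of a cluster only with probability $1/C$ (negligible for large $C$); every other request it chooses to serve costs an extra detour of length $S$, while every request it skips recovers only the single unit of service time, and the total accumulated delay can never exceed the window slack $KS$. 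Writing $A$ for the number of wrong-guess requests served and $B$ for the number skipped, the constraint $AS-B\le KS$ gives $A\le \frac{N}{S+1}+K$, hence expected reward at most $\frac{N}{C}+\frac{N}{S+1}+K+1$ against $\opt=N$, which tends to $\frac{1}{S+1}\cdot\opt$ as $C\to\infty$ and $N\gg K$. Your proposal contains neither the long-window chained instance nor this delay/charging argument, so the claimed $1/(S+1)$ bound does not follow from it.
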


\subsection*{Extensions and generalizations.} 

We now describe some ways in which we can weaken the assumptions in Theorem~\ref{t:main} while maintaining its competitive ratio guarantee:
\begin{itemize}
    \item {\bf Lack of knowledge of $\locem$.} Our algorithm continues to work as intended if it is provided with an upper bound on $\locem$ rather than the exact value of the parameter, with the performance of the algorithm degrading linearly with the upper bound, as in the theorem above. One such upper bound is simply $\lmin/4$. Moreover, by guessing $\locem$ within a factor of $2$ in the range $[0,\lmin/4]$, we can obtain the claimed approximation with a further loss of $\mathcal{O}(\log \lmin)$. Thus, our algorithm can achieve non-trivial guarantees that scale with the location error even in settings where no information is given about any of the predicton errors $\locem, \twem,\rewem$.

    \item {\bf Assumptions on $\twem$ and $\locem$.} It is easy to see that it is necessary to assume $\locem\le\lmin$ to obtain a nontrivial competitive ratio, as predictions with a location error larger than the time window size are of no value to the online algorithm. On the other hand, assuming $\twem\le\lmin$ is not necessary. We can accommodate larger time window errors by following one out of roughly $\twem/\lmin$ different time shifts of the offline walk. This worsens our approximation factor by an additional factor of $\twem/\lmin$.
    
    %Furthermore, even if no upper bound on $\locem$ is provided, our results still holds with an extra loss of $\mathcal{O}(\log D)$; since $\locem$ is an integer in $[0,D]$, we can \textit{guess} a values $\locem'\in [\locem, 2\locem]$ with probability $1/\log D$. If such a value is guessed, then it is not hard to verify that all our analysis goes through.

    %\item {\bf One sided error in $\rewem$.} Our algorithm and its guarantee also continues to work if it is provided with an upper bound on $\rewem$ rather than its exact value. \dccomment{Actually we don't even need $\rewem$ so we should mention this - say that our setting requires zero information other than $\lmin$.}

    \item {\bf Random rewards.} Our results also hold in the case of random rewards. Specifically, consider a setting where the rewards $\{\pi_\sigma\}_{\sigma\in I}$ are drawn from some joint (not necessarily product) distribution $D$ over $\mathbb{R}_+^{I}$. %possibly \textit{not} independently from other requests. 
    In that case, we define $\rew(W,I):=\sum_{\sigma\in \cov(W,I)}\operatorname{E}[\pi_\sigma]$, and $\opt(G,I)$ as the maximum reward obtained by any walk $W\in\mathcal{W}(G)$.\footnote{Note that we do not allow the optimal walk to adapt to instantiations of rewards. Adaptive walks cannot be competed against in an online setting even with predictions.}
    %$\opt(G,I)$ is defined as the reward of the optimal walk in hindsight, i.e. the walk $W^*\in\mathcal{W}(G)$ that achieves maximum expected \textit{reward} on instance $I$. Furthermore, we want the predicted reward $\pi'_i$ of the prediction matched to $\sigma_i$ to be equal to the expectation of $\pi_i$, or at least $\rewem$ close to it using our standard definition. 
    Finally, we define $\rewe(\sigma,\sigma')$ as the mismatch between $\pi_\sigma'$ and $\operatorname{E}[\pi_\sigma]$.
    Our analysis provides the same approximation as before in this setting. See Appendix~\ref{s:app-ext-proofs} for a formal proof.
    %our entire analysis up to the online component goes through directly and we are presented with a walk $W'\in\mathcal{W}(G)$ such that
    %\[rew(W',I',S')\geq \Omega\left(\frac{1}{\log D\cdot\locem\cdot\rewem}\right)\cdot \opt(I,G)\]

    \begin{corollary}\label{c:random-rewards}
    Given an instance $(G, I, I', M)$ of the TW-TSP with predictions where requests have randomly drawn rewards, and predictions errors satisfy that $\twem\le\lmin/2$ and also $\locem\le (\lmin-1)/4$, there exists a polynomial-time online algorithm that takes the tuple $(G, I', \locem)$ as offline input and $I$ as online input, and constructs a walk $W\in \mathcal{W}(G)$ such that
    \[\mathbb{E}[\rew(W,I)]\ge \frac{1}{O(\locem\cdot\rewem^2\cdot\log \min(D,\lmax))}\cdot \opt(G,I)\]
    \end{corollary}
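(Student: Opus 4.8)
The plan is to reduce the random-reward setting to the deterministic Theorem~\ref{t:main} by replacing every realized reward with its expectation. Given the random instance $(G,I,I',M)$, define $\bar I$ to be the deterministic request sequence with the same vertices and time windows as $I$ but with the reward of each $\sigma\in I$ reset to $\bar\pi_\sigma:=\mathbb{E}[\pi_\sigma]$. The first observation is that $\opt(G,I)=\opt(G,\bar I)$: the set $\cov(W,\cdot)$ covered by a walk depends only on vertices and time windows, so $\rew(W,I)=\sum_{\sigma\in\cov(W,I)}\mathbb{E}[\pi_\sigma]=\rew(W,\bar I)$ for every walk $W$, and the corollary's benchmark is exactly the maximum of this quantity over (non-adaptive) walks. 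Second, the error parameters carry over: $\locem$ and $\twem$ do not involve rewards, and by the corollary's redefinition of $\rewe$ the reward error $\max\{\bar\pi_\sigma/\pi_{M(\sigma)},\,\pi_{M(\sigma)}/\bar\pi_\sigma\}$ of each matched pair in $\bar I$ equals the quantity bounded by $\rewem$ in the random instance. Hence $(G,\bar I,I',M)$ is a deterministic instance satisfying the hypotheses $\twem\le\lmin/2$, $\locem\le(\lmin-1)/4$ with the same $\locem,\twem,\rewem$.

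With this in hand, Lemmas~\ref{l:service-costs}, \ref{l:relating-opts}, and \ref{l:offline-alg} apply to $(G,\bar I,I',M)$ verbatim, since they are purely deterministic statements about $\opt(G,\bar I,\cdot)$, $\opt(G,I',\cdot)$, and the offline walk $W'$; chaining them (with $S=4\locem+1$, $S'=2\locem+1$) produces a polynomial-time-computable walk $W'$ with $\rew(W',I',S')\ge\frac{1}{O(\locem\cdot\rewem\cdot\log\min(D,\lmax))}\cdot\opt(G,\bar I)$. The only step that touches the online, realized instance is Lemma~\ref{l:online-alg}: Algorithm~\ref{alg:matching_alg} receives the actual sequence $I$, not $\bar I$. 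The key point is that this algorithm's detour-and-matching decisions --- which waiting true request to visit, and when --- can be carried out using only the locations and release/deadline times of the arriving requests together with the precomputed $W'$ and the predicted rewards; they need not consult the realized rewards $\pi_\sigma$ (and if the stated algorithm does, we may substitute the predicted reward $\pi_{M(\sigma)}$ of the predicted request that a true request is matched against, which leaves the proof of Lemma~\ref{l:online-alg} unchanged, as that proof only compares the online matching against the fixed benchmark matching $M$). Consequently the random set $\cov(W,I)=\cov(W,\bar I)$ is, conditioned on the algorithm's internal coins, determined independently of the reward realizations, so $\mathbb{E}[\rew(W,I)]=\sum_\sigma\bar\pi_\sigma\,\Pr[\sigma\in\cov(W,I)]=\mathbb{E}[\rew(W,\bar I,1)]$, and Lemma~\ref{l:online-alg} applied to $(G,\bar I,I',M)$ gives $\mathbb{E}[\rew(W,\bar I,1)]\ge\frac{1}{6\rewem}\rew(W',I',S')$. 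Combining the four inequalities and using $\opt(G,\bar I)=\opt(G,I)$ yields the stated $O(\locem\cdot\rewem^2\cdot\log\min(D,\lmax))$ competitive ratio.

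The main obstacle is precisely this reward-independence of the online component: one must inspect Algorithm~\ref{alg:matching_alg} and verify that its (possibly randomized) choices do not require realized rewards, or else confirm that replacing those rewards with predicted rewards preserves the $1/(6\rewem)$ guarantee of Lemma~\ref{l:online-alg}. Everything else is a routine transfer of deterministic statements to the expected-reward instance $\bar I$, together with the elementary identity $\mathbb{E}[\rew(W,I)]=\sum_\sigma\bar\pi_\sigma\,\Pr[\sigma\in\cov(W,I)]$.
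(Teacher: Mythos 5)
Your proposal is correct and follows essentially the same route as the paper: replace $I$ by the expected-reward instance $\bar I$ (the paper's $I_{det}$), note that $\opt$, the coverage sets, and all three error parameters are unchanged, apply the deterministic machinery (the paper invokes Theorem~\ref{t:main} as a black box where you chain Lemmas~\ref{l:service-costs}--\ref{l:online-alg}), and finish with the identity $\mathbb{E}[\rew(W,I)]=\rew(W,\bar I)$ coming from $\cov(W,I)=\cov(W,\bar I)$. One small caveat: your parenthetical fallback of substituting the predicted reward $\pi_{M(\sigma)}$ is not available to the algorithm since $M$ is never revealed, but it is also unnecessary --- running Algorithm~\ref{alg:matching_alg} on the expected rewards (i.e., with online input $\bar I$), as your main argument and the paper both do, already suffices.
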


    \item {\bf Rooted instances.} Next, we consider the case where a starting vertex $v_0$ is also specified, and the solution space $\mathcal{W}(G)$ includes all walks on $G$ that \textit{start} on vertex $v_0$ at $t=0$. We can easily see that this setting is essentially equivalent to its unrooted counterpart, under the extra assumption that each request $\sigma=(v_\sigma,r_\sigma,d_\sigma,\pi_\sigma)$ satisfies the conditions $\ell(v_0,v_\sigma)\leq r_\sigma$. This is a reasonable assumption as no algorithm can visit a request $\sigma$ before time $\ell(v_0,v_\sigma)$ anyway. Clearly, for any rooted instance $(G,I,v_0)$, the unrooted optimal $\opt(G,I)$ is an upper bound on the rooted optimal $\opt(G,I,v_0)$. On the other hand, the unrooted path computed by our algorithm can be transformed to a path of same reward rooted at $v_0$ by going directly from $v_0$ to the predicted request serviced first, as this distance is at most equal to the request's release time.

    \item {\bf Partial matching.} Next we consider the case where not all true requests are captured by the predicted requests and, on the flip side, where some predicted requests do not correspond to true requests at all. Following the framework of \cite{APT21}, we consider partial matchings between $I$ and $I'$, and define $\Delta_1^M$ to be the total reward of all true requests that are unmatched, and $\Delta_2^M$ to be the total predicted reward of predicted requests that are unmatched. Then, it is easy to see that our analysis goes through for the subsets of $I$ and $I'$ that are matched to each other, costing us an additive amount of no more than $\Delta_1^M+\Delta_2^M$. See Appendix~\ref{s:app-ext-proofs} for a formal proof.

    \begin{corollary}\label{cor:incomplete_matching}
    Given an instance $(G,I,I')$ of the TW-TSP with predictions, let $M$ be any (incomplete) matching between $I$ and $I'$, and let the error parameters $\locem, \rewem, \twem, \Delta_1^M,$ and $\Delta_2^M$ be defined as above. Then, there exists an online algorithm that takes $(G, I', \locem)$ as offline input and $I$ as online input, and returns a walk $W\in\mathcal{W}(G)$ such that
     \[\mathbb{E}\left[\rew(W,I)\right] \geq \Omega\left(\frac{1}{\locem\cdot \rewem^2\cdot \log \min(D,\lmax)}\right)\cdot \left(\opt(G,I) - \Delta_1^M\right) - \frac{\Delta_2^M}{\rewem}.\]
    \end{corollary}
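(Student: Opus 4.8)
\textbf{Proof proposal for Corollary~\ref{cor:incomplete_matching}.}

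The plan is to reduce the incomplete-matching case to the perfect-matching case of Theorem~\ref{t:main} by restricting attention to the matched portions of $I$ and $I'$, and then accounting separately for the reward lost on the unmatched true requests and the ``phantom'' reward carried by unmatched predicted requests. Write $I_M \subseteq I$ for the set of true requests that are matched under $M$, and $I'_M \subseteq I'$ for their images; by construction $M$ restricted to $I_M$ is a perfect matching onto $I'_M$, and the error parameters $\locem, \rewem, \twem$ computed over this sub-instance are bounded by the stated global values, so the hypotheses $\twem \le \lmin/2$ and $\locem \le (\lmin-1)/4$ continue to hold for $(G, I_M, I'_M, M)$. Hence Theorem~\ref{t:main} applies to this sub-instance and yields an online algorithm producing a walk $W$ with $\mathbb{E}[\rew(W, I_M)] \ge \Omega\bigl(1/(\locem \cdot \rewem^2 \cdot \log\min(D,\lmax))\bigr) \cdot \opt(G, I_M)$. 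Note the algorithm only needs $(G, I', \locem)$ offline and $I$ online; it simply ignores the predicted requests outside $I'_M$ — but since the algorithm does not know $M$, what it actually runs is the algorithm of Theorem~\ref{t:main} on the full $I'$, and we must argue that the analysis still goes through. This is where the bulk of the work lies.

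The cleanest way to handle the ``the algorithm doesn't know $M$'' issue is to revisit the four-lemma decomposition rather than Theorem~\ref{t:main} as a black box. First, $\opt(G, I) \le \opt(G, I_M) + \Delta_1^M$, since any walk's reward on $I$ is at most its reward on $I_M$ plus the total reward $\Delta_1^M$ of unmatched true requests; this gives the $-\Delta_1^M$ term. Next, Lemma~\ref{l:service-costs} and Lemma~\ref{l:relating-opts} relate $\opt(G, I_M)$ to $\opt(G, I'_M, S')$ with $S' = 2\locem+1$ (using only the sub-instance $(G, I_M, I'_M, M)$, whose error parameters are controlled). Lemma~\ref{l:offline-alg} is then run on the \emph{full} predicted instance $(G, I', S')$, producing a walk $W'$ with $\rew(W', I', S') = \Omega(1/\log\min(D,\lmax)) \cdot \opt(G, I', S') \ge \Omega(1/\log\min(D,\lmax)) \cdot \opt(G, I'_M, S')$, since adding predicted requests can only increase the optimum. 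Finally, the online component (Lemma~\ref{l:online-alg}, Algorithm~\ref{alg:matching_alg}) is applied to $W'$. Here is the key point: Algorithm~\ref{alg:matching_alg} takes detours to true requests that happen to be close to the predicted locations it passes; for a predicted request $\sigma' \in I' \setminus I'_M$ with no matched true request, at worst the algorithm finds no suitable nearby true request and skips the detour, or it picks up some unrelated true request — either way this only \emph{helps}. The loss is that the reward $\rew(W', I', S')$ credited to $W'$ includes the predicted rewards of unmatched predicted requests, totaling at most $\Delta_2^M$ in predicted-reward terms; since Lemma~\ref{l:online-alg}'s guarantee is $\mathbb{E}[\rew(W,I,1)] \ge \frac{1}{6\rewem}\rew(W', I'_M, S')$ when restricted to the matched requests, and $\rew(W', I'_M, S') \ge \rew(W', I', S') - \Delta_2^M$, we incur the subtractive term $\Delta_2^M/\rewem$ (the $\rewem$ appearing because predicted reward and true reward differ by up to this factor for genuinely matched requests — but for the unmatched predicted reward the cleanest bound is simply to subtract it off before the $1/\rewem$ scaling does not apply, so one must be slightly careful; the stated bound has $\Delta_2^M/\rewem$, which is the honest accounting).

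I expect the main obstacle to be exactly this last bookkeeping step: making precise the claim that Algorithm~\ref{alg:matching_alg}, which is oblivious to $M$, never does worse on the full instance than the hypothetical run on the matched sub-instance would, and pinning down where the $1/\rewem$ factor does and does not attach to $\Delta_2^M$. One must trace through Algorithm~\ref{alg:matching_alg}'s detour/matching logic to verify that (i) extra predicted requests in $I'$ only create extra detour opportunities and never force the algorithm to abandon a detour it would otherwise take, and (ii) the online matching subroutine's objective is naturally of the form ``true reward collected,'' so that removing the unmatched predicted requests from consideration in the \emph{analysis} (not the execution) only drops a term bounded by the predicted-reward mass $\Delta_2^M$. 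Modulo this, composing the four bounds exactly as in the proof of Theorem~\ref{t:main} — but carrying the two additive error terms through the chain, with $\Delta_1^M$ entering at the $\opt(G,I)$ step and $\Delta_2^M$ entering at the online step — yields the stated inequality. The full argument will be deferred to Appendix~\ref{s:app-ext-proofs}.
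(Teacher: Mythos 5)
Your proposal follows essentially the same route as the paper's proof: restrict the chain of Lemmas~\ref{l:service-costs} and~\ref{l:relating-opts} to the matched true requests $I_M$ (losing $\Delta_1^M$ at the benchmark), run the offline algorithm of Lemma~\ref{l:offline-alg} on all of $I'$, and apply Lemma~\ref{l:online-alg} with its guarantee read off only for the matched predictions $I'_M$ (losing $\Delta_2^M$ before the $1/(6\rewem)$ factor). The bookkeeping you flag as the remaining obstacle is precisely what the paper's appendix argument asserts in one line, and it does go through: the credit scheme in the proof of Lemma~\ref{l:online-alg} only ever charges covered predictions that have a matched true request, and detours taken at unmatched predictions can only add reward (or pre-cover a true request, which then already carries its credit), so running the algorithm on the full $I'$ never hurts the restricted analysis.
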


    \item {\bf Many to one matching.} Consider a setting where predictions are coarse in that each single predicted location captures multiple potential true requests. We can model such a setting within our predictions framework and obtain almost the same guarantee as in Theorem~\ref{t:main}. In particular, for this setting, let $M$ be a many-to-one matching from $I$ to $I'$. We define the location error of a predicted request $\sigma'\in I'$ as the length of the shortest path that starts at $\sigma'$, visits all of the locations of the true requests that are preimages of $\sigma'$ in $M$, spending one unit of time at each, and returns back to $\sigma'$. Observe that this location error is the length of the optimal solution to an orienteering problem rooted at $\sigma'$. Correspondingly, we want the reward associated with $\sigma'$ to capture the total reward of all the true requests matched to $\sigma'$, and define its reward error accordingly. Finally, the time window error is defined as before, as a maximum over all pairs $\sigma$ and $\sigma'$ that are matched to each other. Our algorithm for the setting of Theorem~\ref{t:main} constructs a matching between $I'$ and $I$ in an online fashion. For this one to many setting, we solve instances of the orienteering problem rooted at each predicted request we visit. The performance of the algorithm accordingly worsens by a small constant factor and we achieve a competitive ratio of $O(\locem\rewem^2\log \min(D,\lmax))$ as before. See Section~\ref{sec:extension_proofs} for further details.

    \begin{theorem}\label{t:multi-matching}
    Given an instance $(G, I, I', M)$ of the TW-TSP with predictions where $M$ is a many-to-one matching with errors as defined above, and satisfying $\twem\le\lmin/2$ and $\locem\leq \lmin/2$, there exists a polynomial-time online algorithm that takes the tuple $(G, I', \locem)$ as offline input and $I$ as online input, and constructs a walk $W\in \mathcal{W}(G)$ such that
    \[\mathbb{E}[\rew(W,I)]\ge \frac{1}{O(\locem\cdot \rewem^2\cdot \log \min(D,\lmax))}\cdot \opt(G,I)\]
    \end{theorem}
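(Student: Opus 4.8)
The plan is to follow the four-step template that yields Theorem~\ref{t:main}, keeping Lemmas~\ref{l:service-costs} and~\ref{l:offline-alg} intact --- they are statements about the TW-TSP with service times and make no reference to predictions --- and replacing the two matching-dependent steps, the analogs of Lemma~\ref{l:relating-opts} and Lemma~\ref{l:online-alg}, with versions tailored to a many-to-one matching. Throughout I would fix a service time $S$ for the true instance, a service time $S'$ for the predicted instance, and an orienteering travel budget $B$, all of order $\Theta(\locem)$ and chosen so that $S, S', B \le \lmin$ (this is where the hypothesis $\locem\le\lmin/2$ enters). Then Lemma~\ref{l:service-costs} loses a factor $2S-1=\Theta(\locem)$, Lemma~\ref{l:offline-alg} loses the $O(\log\min(D,\lmax))$ factor, the modified Lemma~\ref{l:relating-opts} loses a factor $O(\rewem)$, and the modified Lemma~\ref{l:online-alg} loses another factor $O(\rewem)$ together with a constant from the orienteering approximation, for a product of $O(\locem\cdot\rewem^2\cdot\log\min(D,\lmax))$.

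\emph{Relating the optima (many-to-one analog of Lemma~\ref{l:relating-opts}).} Take a walk $W$ realizing $\opt(G,I,S)$, and for each predicted request $\sigma'$ at least one of whose $M$-preimages is covered by $W$, let $\sigma^\star$ be the preimage of $\sigma'$ that $W$ services earliest. Since every preimage of $\sigma'$ lies on a single tour of length at most $\locem$ rooted at $v_{\sigma'}$, we have $\ell(v_{\sigma^\star},v_{\sigma'})\le\locem$; I would modify $W$ so that, during the $S$ idle steps it spends at $v_{\sigma^\star}$, it instead walks to $v_{\sigma'}$, idles there for $S'$ steps, and walks back, which fits since $S=2\locem+S'$. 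The hypotheses $\twem\le\lmin/2$ and $\locem\le\lmin/2$ guarantee that the resulting idle interval at $v_{\sigma'}$ can be slid inside $[r_{\sigma'},d_{\sigma'}-S']$, so the modified walk $W'$ covers $\sigma'$ in the $(I',S')$ sense. By the definition of the reward error in the many-to-one model, $\pi_{\sigma'}\ge\rewem^{-1}\sum_{\sigma\in M^{-1}(\sigma')}\pi_\sigma$; summing over all touched $\sigma'$ yields $\rew(W',I',S')\ge O(\rewem)^{-1}\,\opt(G,I,S)$.

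\emph{The online component (many-to-one analog of Lemma~\ref{l:online-alg}).} Apply Lemma~\ref{l:offline-alg} to $(G,I',S')$ to obtain an offline walk $W'$. The online algorithm follows $W'$, and every time $W'$ begins its $S'$ idle steps at a predicted location $v_{\sigma'}$, it instead runs a constant-factor approximation for orienteering rooted at $v_{\sigma'}$ and required to return to $v_{\sigma'}$, with travel budget $B$, over all true requests that have been released, not yet expired, and not yet serviced, spending one unit of time at each request it chooses; it then resumes $W'$. For the analysis, fix the matching $M$ realizing the error parameters. Whenever $W'$ covers a predicted request $\sigma'$ in the $(I',S')$ sense, the preimages of $\sigma'$ still live and unserviced at that moment lie on a single tour from $v_{\sigma'}$ of length at most $\locem\le B$, so the optimum of the orienteering instance solved there is at least their total reward, and the approximation collects an $\Omega(1)$ fraction of it. A preimage of a covered $\sigma'$ that is not live-and-unserviced when the algorithm reaches $\sigma'$ must already have been serviced --- its reward is then counted once, which is all we need --- and it cannot have expired, because $S'$ and $B$ are padded enough relative to $\twem$ and $\locem$ to keep the preimages inside their windows. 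Using $\sum_{\sigma\in M^{-1}(\sigma')}\pi_\sigma\ge\rewem^{-1}\pi_{\sigma'}$ once more and summing over the predicted requests covered by $W'$ gives $\mathbb{E}[\rew(W,I,1)]\ge O(\rewem)^{-1}\,\rew(W',I',S')$. Chaining the four inequalities proves the theorem.

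The main obstacle is the online step, for two reasons. First, the algorithm has no access to $M$, so it must run orienteering over \emph{all} eligible true requests rather than over the preimages of the predicted request it is currently visiting; one needs an exchange/charging argument showing that servicing a true request at a predicted request other than its $M$-image only helps, and that each true request's reward is charged at most once. Second, the timing bookkeeping: $S$, $S'$, and $B$ must be chosen (all $\Theta(\locem)$ and all at most $\lmin$) so that, simultaneously, the round-trip detour of the relating-optima step fits inside the $S$ idle steps, the orienteering detour fits inside the $S'$ idle steps, and every preimage of a covered predicted request is provably still within its own time window when the orienteering tour reaches it --- all under the comparatively weak hypotheses $\twem\le\lmin/2$ and $\locem\le\lmin/2$. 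Once the constants are pinned down, the remaining arguments are routine adaptations of the proofs of Lemmas~\ref{l:relating-opts} and~\ref{l:online-alg}.
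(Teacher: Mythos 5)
Your overall architecture is the same as the paper's: keep Lemmas~\ref{l:service-costs} and~\ref{l:offline-alg} verbatim, rerun the proof of Lemma~\ref{l:relating-opts} with a single detour per touched prediction (the paper does essentially this, noting that each true request still has a unique image under $M$ and that the many-to-one definition of $\locem$ even lets one take $S'=\locem$), and replace the single-request detour of Algorithm~\ref{alg:matching_alg} by a rooted orienteering approximation with budget $\Theta(\locem)$ at each predicted location. However, your online step has a genuine gap: you assert that a preimage of a covered $\sigma'$ which is not ``live and unserviced'' when the walk reaches $v_{\sigma'}$ ``cannot have expired, because $S'$ and $B$ are padded enough relative to $\twem$ and $\locem$.'' This is false in the regime the theorem allows: $\twem$ may be as large as $\lmin/2$ while $S',B=\Theta(\locem)$ may be $O(1)$ (e.g.\ $\locem=1$, $\lmin$ huge), so if $W'$ services $\sigma'$ near the end of its \emph{predicted} window and the true window ends $\twem$ earlier (or starts $\twem$ later), the preimage is expired (or not yet released) and no amount of padding of order $\locem$ recovers it. The paper's fix is not padding but the same randomized time-shift device as in Algorithm~\ref{alg:matching_alg}: choose $\epsilon\in\{-1,0,1\}$ uniformly and follow $W'$ shifted by $\epsilon K$ with $K=\lmin/2$, which makes every matched true request reachable with probability at least $1/3$ (Claim~\ref{c:availablity_one_to_many}); the lost factor $3$ is absorbed into the $O(1)$. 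Your proposal never introduces this shift in the online phase (you use the shifting only in the offline relating-of-optima step, where it addresses a different mismatch), so as written the online guarantee fails.

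A second, smaller hole is one you flag yourself but do not fill: because the orienteering subroutine optimizes over \emph{all} reachable unserviced true requests, the requests it actually serves at $\sigma'_i$ may be preimages of other predictions, and one must prevent their reward from being counted both there and again later as ``already serviced'' preimages of their own image. The paper resolves this with an explicit credit scheme --- when a request matched to $\hat\sigma'$ is served during the detour at $\sigma'_i$, half its reward is credited to $\sigma'_i$ and half to $\hat\sigma'$ --- and then shows $\mathbb{E}[\cred(\sigma'_i)]\ge\Omega(1)\sum_{\sigma\in M^{-1}(\sigma'_i)}\pi_\sigma$ for every covered $\sigma'_i$, splitting the preimages into already-served ones (credited at their service time) and unserved ones (handled by reachability-with-probability-$1/3$ plus the $(2+\epsilon)$-approximation of the cycle through them, which has length at most $\locem\le S'$ by the many-to-one definition of the location error). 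With the random shift added and this charging scheme spelled out, your outline matches the paper's proof.
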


\end{itemize}

%\newpage
\section{Relating the Optima}\label{sec:relating_opt}
In this section we provide the proofs of Lemmas~\ref{l:service-costs} and~\ref{l:relating-opts} that relate the optima over the true and the predicted request sequences, using service times as a mechanism to capture the prediction errors. We begin by proving that a service time of $S$ can hurt the optimal by at most a factor of $2S-1$.

\newtheorem*{lemma:service-costs}{Lemma~\ref*{l:service-costs}}
\begin{lemma:service-costs}    
   For any instance $(G,I)$ of the TW-TSP with service times, and any integer $S\le\lmin$, we have \[\opt(G,I,S)\ge\frac{1}{2S-1}\cdot \opt(G,I,1).\]
\end{lemma:service-costs}

\begin{proof}

    Let $W\in\mathcal{W}(G)$ be the walk that achieves the optimum $\opt(G,I,1)$, and let the requests in $I$ that are covered by $W$ be denoted as $\sigma_i=(v_i, r_i, d_i, \pi_i)$ and ordered in the sequence in which they are covered by $W$. The lemma follows directly from the simple observation that if we don't service the $(S-1)$-requests \textit{prior} and \textit{after} some request $\sigma_i$, then we can save enough time to service $\sigma_i$ for $S$ time-steps within its time window.

    Formally, if $t_i\in [r_i, d_i-1]$ is the step at which $W$ begins servicing request $\sigma_i$, then by skipping the idle times on the $(S-1)$-previous and next requests we can remain idle on $v_i$ from step $t_i - (S-1)$ until step $t_i + S$ (since $W$ already remained idle on $v_i$ for $1$ step) while still being able to keep up with walk $W$. Since $S\leq L_{min}$, it is easy to verify that at least $S$ of these time-steps are going to fall in the time-window $[r_i,d_i]$.

    We now partition the requests $\sigma_i=(v_i, r_i, d_i, \pi_i)$ into $2S-1$ sub-sequences, each of which starts at some request $i\in [S]$, and covers the requests $\sigma_i, \sigma_{i+(2S-1)}, \sigma_{i+2(2S-1)},$ and so forth. Each such sequence can be covered with a walk, with idle times built in as above, so as to be feasible for the instance $(G, I, S)$. Clearly, one of these walks obtains a reward of at least $\opt(G, I, 1)/(2S-1)$, completing the proof.
\end{proof}

We show in Appendix~\ref{s:app-4proofs} that the above lemma obtains a tight gap between the optima at different service times.
%Furthermore, the above lemma obtains a tight gap between the optima up to constant factors, implying that a linear dependency on $S$ cannot be avoided. The proof is based on a simple instance for TW-TSP with Service Cost that consists of a line-graph, and is thus moved to the Appendix.
\begin{lemma}\label{l:service-costs-lb}
    For any pair of integers $(L,S)$ such that $L \geq 2S-2\geq 1$, there exists a rooted instance $(G,I)$ of the TW-TSP with service costs such that $L_{min} = L$ and \[\opt(G,I,S) = \frac{1}{2S-1}\cdot\opt(G,I,1).\]
\end{lemma}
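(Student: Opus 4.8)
I would prove the lemma by exhibiting, for each admissible pair $(L,S)$, a single ``hard gadget'' instance $(G,I)$ (with all rewards equal to $1$) for which $\opt(G,I,1)=2S-1$ while $\opt(G,I,S)=1$. Since Lemma~\ref{l:service-costs} already gives $\opt(G,I,S)\ge \opt(G,I,1)/(2S-1)$ (note $S\le 2S-2\le L=L_{\min}$, so the lemma applies), it then suffices to show that \emph{some} walk with unit service time covers all $2S-1$ requests while \emph{no} walk with service time $S$ covers two of them. If a larger instance is desired one may take $k$ disjoint copies of the gadget spread along a path with large time gaps in between; because the requests of each copy have windows confined to a short interval, every walk covers at most $\opt(\text{gadget},S)$ of each copy's requests (the portion of the walk between two covered requests of a copy is itself a walk in that copy's metric, up to only losing time by leaving it), so $\opt(\cdot,1)=k(2S-1)$ and $\opt(\cdot,S)=k$ exactly.

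\textbf{The gadget.} Put $2S-1$ requests $\sigma_1,\dots,\sigma_{2S-1}$ on a path, with $\sigma_a$ at vertex $v_a$ and the $v_a$'s occurring in this order. Choose increasing integer release times $r_1<\dots<r_{2S-1}$ with consecutive gaps large enough (e.g.\ $r_a=2(a-1)$ when $L=2S-2$), give $\sigma_a$ the window $[r_a,r_a+L]$ (so every window has length exactly $L$ and $L_{\min}=L$), and set
\[
\ell(v_a,v_{a+1}) \;=\; (r_{a+1}-r_a)\;+\;(L-2S+1)\;+\;g_a ,
\]
where $g_1,\dots,g_{2S-2}\ge 0$ are integers chosen so that (i) $\sum_{i=a}^{b-1}g_i\ge (L-2S+1)\bigl(1-(b-a)\bigr)$ for every $1\le a<b\le 2S-1$, and (ii) the walk with unit service time described below stays feasible, which amounts to $\sum_{i=1}^{2S-2}g_i\le L-1-(2S-2)(L-2S+2)$. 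For $L=2S-1$ both conditions hold with all $g_a=0$ (a ``colinear'' chain, $\ell(v_a,v_{a+1})=r_{a+1}-r_a$); for $L=2S-2$ both hold with $g_{a^\star}=0$ for one index $a^\star$ and $g_a=1$ otherwise (then (i) reads $\sum_{i=a}^{b-1}g_i\ge (b-a)-1$ and (ii) reads $\sum g_i\le 2S-3$, both met with equality at the extremes). Root the instance at $v_1$ with $r_1=0$; one checks $\ell(v_1,v_a)\le r_a$ for all $a$, so rooted and unrooted optima agree.

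\textbf{Verification (two halves).} For unit service time, follow the path left to right; writing $Y_a$ for the delay on arrival at $v_a$ relative to $r_a$, one has $Y_1=0$ and $Y_{a+1}=Y_a+1+\ell(v_a,v_{a+1})-(r_{a+1}-r_a)=Y_a+(L-2S+2)+g_a\ge 0$, and condition (ii) keeps $0\le Y_a\le L-1$ throughout, so a one-step idle at step $r_a+Y_a\in[r_a,d_a-1]$ covers $\sigma_a$; hence $\opt(G,I,1)=2S-1$. For service time $S$: if a walk covered both $\sigma_a$ and $\sigma_b$ with $a<b$, it would begin a length-$S$ idle at $v_a$ no earlier than $r_a$, travel distance $\ell(v_a,v_b)=\sum_{i=a}^{b-1}\ell(v_i,v_{i+1})$, and begin a length-$S$ idle at $v_b$ no later than $d_b-S=r_b+L-S$, forcing $\ell(v_a,v_b)\le (r_b-r_a)+L-2S$; but by the displayed edge lengths and condition (i),
\[
\ell(v_a,v_b)=(r_b-r_a)+(b-a)(L-2S+1)+\sum_{i=a}^{b-1}g_i\;\ge\;(r_b-r_a)+L-2S+1,
\]
a contradiction. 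Since a single request is trivially coverable, $\opt(G,I,S)=1$, giving the claimed equality.

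\textbf{Main obstacle.} The construction above is clean only for $L\in\{2S-2,2S-1\}$: for $L\ge 2S$ the edges needed to defeat service time $S$ are each at least $(r_{a+1}-r_a)+(L-2S+1)$, i.e.\ strictly longer than the release gap, and over $2S-1$ requests this accumulated delay exceeds what one length-$L$ window can absorb, so no $(2S-1)$-request path gadget can simultaneously satisfy $\opt(\cdot,1)=2S-1$ and $\opt(\cdot,S)=1$. The real work, which I expect to be the hard part, is to build for every $L\ge 2S-2$ a gadget -- with more requests and/or a non-path metric -- in which a unit-service walk covers all requests while every service-$S$ walk covers at most a $1/(2S-1)$ fraction of them; the crux is arranging the metric so that, unlike on a path, a service-$S$ walk gains essentially nothing by ``shortcutting'' past the requests it skips (and so must pay both the extra $S-1$ idle steps \emph{and} an $S-1$ early-arrival penalty), which is exactly what pushes the sub-sampling factor from $S$ up to $2S-1$; verifying the ratio is exactly $2S-1$ rather than merely $\Theta(S)$ is where the bookkeeping will be delicate.
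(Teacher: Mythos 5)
There is a genuine gap, and you name it yourself: your gadget is only verified for the two boundary values $L\in\{2S-2,2S-1\}$, while the lemma quantifies over every $L\ge 2S-2$, and you leave the construction for $L\ge 2S$ as "the real work". (The appeal to Lemma~\ref{l:service-costs} is also not needed, since your plan anyway establishes $\opt(\cdot,1)=2S-1$ and $\opt(\cdot,S)=1$ exactly; the substantive missing piece is the instance itself for general $L$.)

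Moreover, the obstruction you describe for $L\ge 2S$ is an artifact of your parametrization, not a real one. You insist that rooted and unrooted optima coincide (via $\ell(v_1,v_a)\le r_a$) and you anchor each deadline to its release time, which forces consecutive distances to exceed the release gaps by $L-2S+1$ and makes the accumulated delay overflow a single window once $L\ge 2S$. But the lemma only asks for a \emph{rooted} instance, and rootedness is exactly the leverage the paper uses: its gadget is again a line graph with $2S-1$ unit-reward requests, one per vertex $v_i$, consecutive edges of length $\alpha$ (chosen large enough that release times are nonnegative), rooted at $v_0$, with deadlines $d_i=i\alpha+2S-1$ and windows of length exactly $L$, i.e.\ releases $d_i-L$ which may lie \emph{before} the earliest time $\ell(v_0,v_i)=i\alpha$ at which any rooted walk can arrive. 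Thus the effective slack of request $i$ for a rooted walk is $d_i-i\alpha=2S-1$, independent of $L$: the extra window length stretches backwards in time and is useless. A unit-service sweep serves $v_j$ at time $j\alpha+j$ with $j\le 2S-2$, so it covers all $2S-1$ requests (here $L\ge 2S-2$ guarantees the releases are met), whereas any rooted walk that has already spent $S$ steps serving one request cannot reach $v_j$ before $j\alpha+S>d_j-S$, so it can never begin a second length-$S$ service in time, giving $\opt(G,I,S)=1$ for every $L\ge 2S-2$. So your claim that no $(2S-1)$-request path gadget exists for $L\ge 2S$ is false in the rooted setting the lemma addresses; the idea you are missing is to shift the time windows relative to the root distances (deadline $=$ root distance $+\,2S-1$) rather than relative to each other, and no many-copy amplification or non-path metric is needed.
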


Next, we provide the proof of Lemma~\ref{l:relating-opts} that relates the optima between the predicted and true request sequences, by appropriately addressing all three possible types of prediction errors.

\newtheorem*{lemma:relating-opts}{Lemma~\ref*{l:relating-opts}}
\begin{lemma:relating-opts}
   Let $(G,I,I',M)$ be an instance of the TW-TSP with predictions, where $\locem$, $\rewem$, and $\twem$ denote the maximum location, reward, and time window errors of the instance respectively. Define $S:=4\locem+1$ and $S':=2\locem+1$. Then, if $\twem \leq \lmin/2$ and $\locem\leq (\lmin-1)/4$, we have \[\opt(G,I',S')\ge\frac{1}{3\rewem}\cdot \opt(G,I,S).\]
\end{lemma:relating-opts}

\begin{proof}
    Let $W$ be the walk that achieves the optimum $\opt(G,I,S)$, and let the requests in $I$ covered by $W$ be denoted as $\sigma_i=(v_i, r_i, d_i, \pi_i)$ and ordered in the sequence in which they are visited by $W$. Let $\sigma'_i=(v'_i, r'_i, d'_i, \pi'_i)$ denote the predicted request matched to $\sigma_i$, that is, $\sigma'_i=M(\sigma_i)$. Observe that the total reward of all requests $\{\sigma'_i\}$ corresponding to $\sigma_i\in\cov(W,I,S)$ is at least $\rew(W,I,S)/\rewem$.
    
    We will consider a walk $W'$ in $G$ defined as follows. The walk $W'$ follows $W$, visiting the requests $\sigma_i$ in sequence. As soon as $W$ starts servicing $\sigma_i$, $W'$ takes a detour to visit $\sigma'_i$; remains idle at $\sigma'_i$ for $S'$ time steps; returns back to $\sigma_i$; remains idle at $\sigma_i$ for $S-2\ell(v_i,v'_i)-S'\ge 0$ time steps; and then resumes the walk $W$. Observe that $W'$ is identical to $W$ outside of the detours it takes to visit the $\sigma'_i$'s. 

    %To analyze the performance of $W'$ on $I'$, let us first consider the case where the time window error is zero and, in particular, $r_i=r'_i$ and $d_i=d'_i$ for each $i$. Since the walk $W'$ spends $S'$ idle steps at each $\sigma'_i$ and visits its location during its time window, the walk covers $\sigma'_i$ and receives a reward of $\pi'_i\ge \pi_i/\rewem$. Therefore, in this case, $\rew(W',I',S')\ge \rew(W,I,S)/\rewem$.

    Our goal is to feasibly capture all of the reward contained in the $\sigma'_i$s. The problem is that the walk $W'$ may miss some of this reward due to the mismatch in the time windows of the true and predicted requests. To this end, we will consider two variations of the walk $W'$. Let $K:=\lmin/2\ge \twem$. The walk $W'_1$ is identical to $W'$ except that it starts $K$ steps after $W'$ starts, and accordingly visits every location exactly $K$ steps after $W'$ visits it. The walk $W'_2$ is identical to $W'$ except that it starts $K$ steps {\em before} $W'$ starts,\footnote{To be precise, this walk starts at the location where $W'$ is at at step $K$.} and accordingly visits every location exactly $K$ steps before $W'$ visits it. 

    Now consider some $\sigma'_i$ corresponding to a request $\sigma_i$ covered by $W$ in the instance $(G,I,S)$. We claim that at least one of the walks $W'$, $W'_1$, and $W'_2$ covers $\sigma'_i$ in $(G,I',S')$. Let $t$ be the time at which $W'$ arrives at $v'_i$; recall that $W'$ remains at the node until at least $t+S'$. Note that $t\ge r_i$ and $t+S'\le d_i$ due to $\sigma_i\in\cov(W,I,S)$.
        
    First, suppose that $r'_i\le t$ and $d'_i\ge t+S'$, then $\sigma'$ is covered by $W'$ in $(G,I',S')$. Next suppose that $r'_i>t$. Then, $W'_1$ arrives at $v'_i$ at time $t+K\ge r_i+K\ge r_i+\twem\ge r'_i$. On the other hand, it remains at $v'_i$ until time $t+K+S'<r'_i+K+S'\le r'_i+\lmin\le d'_i$. Therefore, $\sigma'_i$ is covered by $W'_1$. Finally, suppose that $d'_i<t+S'$. Then, $W'_2$ arrives at $v'_i$ at time $t-K> d'_i-S'-K\ge d'_i-\lmin\ge r'_i$. On the other hand, it remains at $v'_i$ until time $t-K+S'\le d_i-K\le d_i-\twem\le d'_i$. Therefore, $\sigma'_i$ is covered by $W'_2$.

    We get that at least one of $W'$, $W'_1$, or $W'_2$ obtains at least a $1/3\rewem$ fraction of $\opt(G,I,S)$, where the factor of $\rewem$ is lost due to the mismatch in the predicted rewards. The lemma follows directly from this.
\end{proof}

\remove{
\begin{proof}
Let $I=\{\sigma_i\}_{i=1}^N$ with $\sigma_i = (v_i, r_i, d_i, \pi_i)$ and $I'=\{\sigma'_i\}_{i=1}^N$ with $\sigma'_i = (v'_i, r'_i, d'_i, \pi'_i)$. To ease notation, we re-label the requests in $I'$ so that $\sigma'_i = M(\sigma_i)$ for all $i\in [N]$. As already mentioned, there are three types of prediction errors between $I$ and $I'$ that we need to address: the location error $\locem$ between vertices $v_i$ and $v'_i$, the time-window error $\twem$ between time-windows $[r_i,d_i]$ and $[r'_i,d'_i]$ and the reward error $\rewem$ between the rewards $\pi_i$ and $\pi'_i$. Starting from request sequence $I$, we will begin addressing these three errors one by one, until eventually we have a guarantee with respect to the predicted request sequence $I'$.

\paragraph{Reward Errors.} We begin by addressing the reward errors, which are actually the easiest to handle. We define an auxiliary request sequence $I_1=\{\sigma^1_i\}_{i=1}^N$ with $\sigma^1_i = (v_i, r_i, d_i, \pi'_i)$. Basically, $I_1$ is an exact copy of $I$ with the only difference that each request $\sigma_i$ is assigned its predicted reward $\pi'_i$ instead of its actual reward $\pi_i$.
\begin{claim}
It holds that $\opt(G,I_1,S)\ge\frac{1}{\rewem}\cdot \opt(G,I,S)$.
\end{claim}
\begin{proof}
    Let $W^*\in\mathcal{W}(G)$ be the walk that achieves optimal reward $\opt(G,I,S)$ on instance $(G,I)$ for {\emph TW-TSP} with service cost $S$. By definition, we have
    \[\opt(G,I,S) = \sum_{i : \sigma_i \in \cov(W^*, I, S)}\pi_i\]

    Since all time-windows and request vertices are identical between $I$ and $I_1$, if we use walk $W^*$ on instance $(G,I_1)$ we will cover the exact same set of requests. The only difference is that for any request $\sigma_i\in \cov(W^*, I, S)$, instead of reward $\pi_i$ we get reward $\pi'_i$. Thus, we get that
    \[\rew(W^*,I_1,S) = \sum_{i : \sigma_i \in \cov(W^*, I, S)}\pi'_i\]

    The proof is completed by $\opt(G,I_1,S)\geq \rew(W^*,I_1,S)$ and the fact that by definition of $\rewem$ we have that $\pi'_i \geq \frac{1}{\rewem}\pi_i$.
\end{proof}

\paragraph{Location Errors.} Next we address the location error. Recall that $S=4\locem$ and $S'=2\locem$. The high level idea behind our analysis is that by reducing the service cost by $2\locem$, we can account for all the location errors in our prediction. Once again, we define an auxiliary request sequence $I_2=\{\sigma^2_i\}_{i=1}^N$ with $\sigma^2_i = (v'_i, r_i+\locem, d_i-\locem, \pi'_i)$. Basically, $I_2$ is an exact copy of $I_1$ with the only difference that now requests arrive on the predicted vertices $v'_i$ instead of the actual vertices $v_i$ and also time-window have shrunk by $\locem$ on each side.

\begin{claim}
It holds that $\opt(G,I_2,S')\ge \opt(G,I_1,S)$.
\end{claim}
\begin{proof}
    Let $W^*\in\mathcal{W}(G)$ be the walk that achieves optimal reward $\opt(G,I_1,S)$ on instance $(G,I_1)$ for {\emph TW-TSP} with service cost $S$. By definition, we have
    \[\opt(G,I_1,S) = \sum_{i : \sigma^1_i \in \cov(W^*, I_1, S)}\pi'_i\]
    For each request $\sigma_i^1\in \cov(W^*, I_1, S)$, we know by definition that walk $W^*$ remains idle on vertex $v_i$ for at least $S$ time-steps, starting on some step $t^*_i\in [r_i, d_i-S]$, in order to service it. 
    
    We exploit this fact to construct a new walk $W\in\mathcal{W}(G)$ that mimics $W^*$ with the following difference: for all $i\in [N]$, at step $t^*_i$, instead of remaining idle for S steps on $v_i$ it moves to vertex $v'_i$ and remains idle until step $t^*_i + S - dist(v_i,v'_i)$ at which point it returns back to $v_i$ and continues to mimic $W^*$. Observe that this detour begins on step $t^*_i$ and ends by step $t^*_i+S$. Since $W^*$ was supposed to be idle on these steps, we can keep repeating these detours without losing track of $W^*$.

    Thus far, we have established that for any request $\sigma_i^2\in I_2$, our constructed walk $W$ remains idle on vertex $v'_i$ from step $t^*_i + dist(v_i,v'_i)$ until step $t^*_i + S - dist(v_i,v'_i)$ for a total of $S-2dist(v_i,v'_i)$ time steps. It remains to argue that at least $S'$ of these time steps fall into the time-window $[r_i+\locem,d_i-\locem]$. If this is the case then we will have argued that there exists a walk that services all request $\sigma_i^2\in I_2$ for which $\sigma_i^1\in \cov (W^*, I_1, S)$ with a service cost of $S'$. Then, the claim follows immediately from the fact that $I_1$ and $I_2$ share the same sets of rewards.

    While $W$ remains idle on each vertex $v'_i$ for sufficiently enough time-steps, in order for $\sigma^2_i$ to be serviced these time-steps need to fall in the time-window $[r_i+\locem,d_i-\locem]$. Thus, the actual service time that counts towards servicing $\sigma^2_i$ is precisely
    \[ST_i = \min (d_i-\locem, t^*_i + S - dist(v_i, v'_i)) - \max(r_i+\locem, t^*_i + dist(v_i, v'_i))\]
    To complete the argument, we only need to show that $ST_i\geq S'=2\locem$. To do this, we consider each of the four possible evaluations for the minimum and the maximum in the definition of $ST_i$.
    \begin{enumerate}
        \item Let $ST_i = d_i - r_i -2\locem$. Since $\sigma^1_i\in \cov(W^*, I_1, S)$, it must necessarily be the case that the time-window has a length of at least $S$. Thus, we have $ST_i\geq S -2\locem = S'$.
        \item Let $ST_i = d_i -\locem - (t^*_i + dist(v_i,v'_i))$.  Since $\sigma^1_i\in \cov(W^*, I_1, S)$, we know that $t^*_i \leq d_i - S$, and thus we have $ST_i \geq S - dist(v_i, v'_i) - \locem \geq S - 2\locem = S'$.
        \item Let $ST_i = (t^*_i + S - dist(v_i,v'_i)) - r_i - \locem$. Since $\sigma^1_i\in \cov(W^*, I_1, S)$, we know that $t^*_i \geq r_i$, and thus we have $ST_i \geq S - dist(v_i, v'_i) -\locem \geq S - 2\locem = S'$.
        \item Let $ST_i = (t^*_i + S - dist(v_i,v'_i)) - (t^*_i + dist(v_i,v'_i))$. In that case we have that $ST_i = S - 2dist(v_i,v'_i) \geq S - 2\locem = S'$.
    \end{enumerate}
\end{proof}

\paragraph{Time-Window Errors.} Finally, we account for prediction errors in the time-windows of the requests. The main idea is that since $\twem \leq L_{min}/3$, we can shift walks to account for corner cases. Formally, we prove the following claim that combined with our previous results yields the proof of Lemma~\ref{l:relating-opts}.
\begin{claim}
    It holds that $\opt(G, I',S') \geq \frac{1}{3}\cdot \opt(G, I_2, S')$.
\end{claim}
\begin{proof}
    Recall that the only difference between request sequences $I_2$ and $I'$ is that in $I_2$ the time-windows are $[r_i + \locem,d_i - \locem]$ whereas in $I'$ they are $[r'_i, d'_i]$. We begin by observing that if $\twem \leq \locem$ then clearly the predicted time-window $[r'_i,d'_i]$ is a super-set of $[r_i + \locem,d_i - \locem]$ (since $r'_i \leq r_i + \twem$ and $d'_i \geq d_i - \twem$). Thus, in this case the claim immediately holds, even without loosing this factor of $3$. So, for the rest of the proof we assume that $\twem > \locem$.

    Let $W^*\in\mathcal{W}(G)$ be the walk that achieves optimal reward $\opt(G,I_2,S')$ on instance $(G,I_2)$ for {\emph TW-TSP} with service cost $S'$. Let $t^*_i$ be used to denote the time-step at which $W^*$ begins servicing the $i$-th request $\sigma^2_i = (v'_i, r_i + \locem , d_i - \locem , \pi'_i)$ that it covers. Clearly, since $S' = 2\locem$, it must be the case that $t^*_i \in [r_i+\locem, d_i - 3\locem]$. We partition the requests covered by $W^*$ (i.e. the set $\cov(W^*, I_2, S')$) into three sets, each being possible to cover in $I'$ by a unique walk
    \begin{enumerate}
        \item Let $A$ be the set of requests of $I_2$ in $\cov(W^*, I_2, S')$ such that $t^*_i \in [r_i + \twem, d_i - \twem -2\locem]$. In that case, it is clearly the case that $t^*_i \geq r_i + \locem \geq r'_i$ and $t^*_i \leq d_i - 3\locem \leq d'_i - S'$. Since $W^*$ remains idle on $v'_i$ for $S' = 2\locem$ steps, we immediately get that $W^*$ covers the entirety of the requests in $A$ even in instance $I'$ with time-windows $[r'_i, d'_i]$.

        \item Let $B$ be the set of requests of $I_2$ in $\cov(W^*, I_2, S')$ such that $t^*_i \in [r_i + \locem, r_i + \twem]$. In that case, we consider a new walk $W'$ that is derived by $W^*$ if we remain idle for the first $D = \frac{L_{min}}{4} - \locem$ time-steps (since $\twem > \locem$ and $\twem \leq \frac{L_{min}}{4}$ we have $D>0$). As a corollary, walk $W'$ remains idle on each vertex $v'_i$ from $t'_i = t^*_i + D$ until $t'_i + S'$. To argue that $W'$ covers all the requests in $B$ with their respective time-windows in $I'$, we need to argue that $t'_i \in [r'_i, d'_i - S']$ for all $i\in B$.

        \begin{align*}
            t'_i = t_i^*+D \geq r_i + \locem - \frac{L_{min}}{4} - \locem \geq r_i - \twem \geq r'_i.
        \end{align*}

        \begin{align*}
            t'_i &= t_i^*+D \leq r_i + \twem + \locem  - \frac{L_{min}}{4} \leq (d_i - L_{min}) + \frac{L_{min}}{4} + \locem - \frac{L_{min}}{4} \\
            &\leq d'_i + \twem - L_{min} + \locem \leq d'_i +\locem - \frac{3L_{min}}{4} \leq d'_i - 2\locem.
        \end{align*}
        where the last inequality holds from $L_{min} \geq S = 4\locem$ (otherwise the initial optimal wouldn't be able to cover it).

        \item Let $C$ be the set of requests of $I_2$ in $\cov(W^*, I_2, S')$ such that $t^*_i \in [ d_i - \twem -2\locem,  d_i - 3\locem]$. In that case, we consider a new walk $W''$ that is derived by $W^*$ if we skip the first $D = \frac{L_{min}}{4} - \locem$ time-steps. As a corollary, walk $W''$ remains idle on each vertex $v'_i$ from $t''_i = t^*_i - D$ until $t''_i + S'$. To argue that $W''$ covers all the requests in $C$ with their respective time-windows in $I'$, we need to argue that $t''_i \in [r'_i, d'_i - S']$ for all $i\in C$.

        \begin{align*}
            t''_i = t_i^*-D \leq d_i -3\locem + \locem - \frac{L_{min}}{4} \leq d_i - 2\locem - \twem \leq d'_i - 2\locem.
        \end{align*}

        \begin{align*}
            t''_i &= t_i^* - D \geq (d_i - \twem -2\locem) + \locem - \frac{L_{min}}{4} \geq r_i + L_{min} - \frac{L_{min}}{4} - \locem - \frac{L_{min}}{4} \\
            &\geq r'_i - \twem - \locem + \frac{L_{min}}{2} \geq r'_i -\locem + \frac{L_{min}}{4} \geq r'_i
        \end{align*}
        since $L_{min}\geq S = 4\locem$.

        Finally, we comment that no request in $C$ can be skipped, as this would imply that its visit time is strictly smaller that $D$. However, this would imply that $d_i - \twem -2\locem < \frac{L_{min}}{4} - \locem$ which results in a contradiction since $d_i \leq L_{min}$ and $\locem < \twem \leq \frac{L_{min}}{4}$.
    \end{enumerate}
    Clearly, this partition includes all requests in $\cov(W^*, I_2, S')$ and thus at least one of the sets $A,B$ and $C$ will contain at least one third of the total reward. Since for each of these sets there exists a walk that covers them and rewards are shared between $I_2$ and $I'$, the claim immediatelly follows.
\end{proof}
\end{proof}

}

\remove{
\newpage
\begin{proof}
Fix any instance $(G,I)$ of the {\emph TW-TSP with Service Costs} and any integer $S\leq L_{min}$. Let $W^*\in\mathcal{W}(G)$ be the (optimal) walk that achieves reward $\opt(G,I,1)$ in the unit service-cost case and let 
\[\cov(W^*,I,1) = \{\sigma_1, \sigma_2, \dotsc , \sigma_N\}\]
be the requests covered by walk $W^*$, where $\sigma_i = (v_i, r_i, d_i, \pi_i )$ for $i\in [N]$. We use $t^*_i$ to denote the time at which $W^*$ begins servicing request $\sigma_i$. Without loss, we assume that covered requests are indexed by service order, i.e. $t^*_i < t^*_{i+1}$. By the feasibility of $W^*$, we know that $t^*_i\in [r_i, d_i-1]$.

We partition $\cov(W^*,I,1)$ into $C = 2S-1$ sets of requests $\Sigma_1,\Sigma_2,\dotsc , \Sigma_{C}$ such that for each $i\in [C]$, $\Sigma_i = \{\sigma_i, \sigma_{i+C}, \sigma_{i+2C}, \dotsc \}$. If it is the case that $N<S$, we define $\Sigma_i$ to be the empty set for all $i>N$. Since the combined reward of all the requests in $\cov(W^*,I,1)$ is $\opt(G,I,1)$, we immediately get that there exists some $i\in [C]$ such that the total reward of all requests in $\Sigma_i$ is at least $\opt(G,I,1)/(2S-1)$. Thus, to prove the Lemma it suffices to argue that for any $i\in [C]$, there exists some walk $W_i\in\mathcal{W}(G)$ such that $\Sigma_i \subseteq Cov(W_i, I, S)$. 

Fix any $i\in [C]$. Our key observation is that since the desired walk $W_i$ doesn't need to service any request $\sigma_j\notin\Sigma_i$, we can save the unit service cost that $W^*$ spends on all these requests and utilize it towards covering the larger service cost of $S$ for requests in $\Sigma_i$. Specifically, let $W_i$ be the walk constructed by $W^*$ where instead of spending unit idle time on all the request of $\cov(W^*,I,1)$, we spend idle time of $2S-1$ only on requests in $\Sigma_i$ and completely skip the idle times from all other requests not in $\Sigma_i$. In other words, each request $\sigma_j\in \Sigma_i$ gets an extra idle time of $S-1$ by skipping the $S-1$ previous requests (that by definition are not in $\Sigma_i$) and also an extra idle time of $S-1$ by skipping the $S-1$ next requests in $W^*$'s service order.

As a result, walk $W_i$ remains idle on each request $\sigma_j\in \Sigma_i$ from $t^*_j - (S-1)$ through $t^*_j + S$. To complete the proof, we only need to argue that at least $S$ of these $(2S-1)$ time-steps fall into the time-window $[r_j, d_j]$. We will now proceed to count the number of idle steps on the interval
$[t^*_j - (S-1), t^*_j + S]$
that fall into $[r_j, d_j]$. It is not hard to see that this number is precisely
\[ 1 + \min (S-1, t^*_j - r_j) + \min (S-1, d_j - t^*_j -1)\]
due to the fact that $t^*_j\in [r_j, d_j-1]$. Thus, to argue that the total idle time is at least $S$ it suffices to argue that
\[(t^*_j - r_j) + (d_j - t^*_j -1) \geq S-1\]
which is always the case since $d_j - r_j \geq L_{min} \geq S$.

The only exception in our proof is for the first request $\sigma_i\in \Sigma_i$, since it might be the case that we don't have enough $(S-1)$ requests to skip before it. However, since this is the unrooted version, we can simply have our walk $W_i$ start on vertex $v_i$ and essentially start servicing the request on release time.
\end{proof}
}

\section{The offline approximation}\label{sec:offline}

In this section, we design an $O(\log \min(D,\lmax))$ deterministic and polynomial-time approximation algorithm for the TW-TSP with service times, providing the proof of Lemma~\ref{l:offline-alg}. 
%As already mentioned, we introduce the notion of \textit{service costs} in order to capture the prediction errors of our model; that said, we believe this problem to be of independent interest in the area of approximation algorithms.
%\subsection{An overview of our approach}
Our proof relies on a series of reductions between different offline problems, applications of existing algorithms as well as the design of novel algorithmic components. We break up our argument into a series of lemmas which are proved in Appendix~\ref{s:app-5proofs}.

%To that end, we begin by providing a high-level description of our approach towards obtaining our $\mathcal{O}(\log D)$-approximation algorithm for the TW-TSP-SC.

\begin{enumerate}
    \item First, we designing an $O(\log \min(D,\lmax))$-approximation algorithm for TW-TSP (without service times). Since the work of~\cite{CKP12} already provides a $O(\log L_{max})$ approximation for this setting, it suffices to prove the following:
    \begin{lemma}\label{l:large_window_approximation}
        Given an instance of the TW-TSP (without service times) with $L_{min}\geq 4D$, there exists a polynomial time algorithm that achieves an $O(1)$ approximation.
    \end{lemma}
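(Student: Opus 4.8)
The plan is to exploit the assumption $L_{\min}\ge 4D$ to cut the timeline into blocks of length $2D$, treat each block as an essentially independent Orienteering instance, and glue the per-block routes together; gluing is free because a block is longer than the diameter, so the server can always walk from the end of one block's route to the start of the next block's route during the block in between. Concretely, write $B_j:=[2jD,2(j+1)D)$ and call $B_j$ \emph{valid} for a request $\sigma$ if $r_\sigma\le 2jD$ and $2(j+1)D\le d_\sigma$, so that visiting $v_\sigma$ and idling one step anywhere inside $B_j$ covers $\sigma$. The point of $L_{\min}\ge 4D$ is twofold: every window contains at least one valid block, and more importantly the blocks valid for $\sigma$ together cover $[r_\sigma,d_\sigma)$ up to a prefix and a suffix each of length $<2D$; hence for \emph{every} $t\in[r_\sigma,d_\sigma)$ there is a block valid for $\sigma$ whose index is within $1$ of the index of the block that contains $t$. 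This "nearby valid block" property is what will let us relate the hindsight optimum to per-block Orienteering.

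The algorithm: for each parity $c\in\{0,1\}$ separately, scan the blocks $B_j$ with $j\equiv c\pmod 2$ in increasing order of $j$, and at $B_j$ run a constant-factor Orienteering approximation (e.g.\ \cite{BCKLM03,CKP12}) with length budget $2D$ over the requests that are valid for $B_j$ and \emph{not yet collected}, then mark the collected requests and continue. Free endpoints reduce to rooted Orienteering by trying all start vertices, and the one unit of service time required at each collected request is absorbed by running Orienteering on the metric $\ell+1$; both are routine and cost only an $O(1)$ factor, so the effective approximation ratio for this per-block step is some $\alpha=O(1)$. Then realize the chosen routes in time: each fits inside its block, and the empty block between two consecutive same-parity blocks has length $2D\ge D$, enough to reposition. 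Output whichever parity collected more reward. Since every scanned block collects at least one request, only $O(|I|)$ blocks are ever scanned and the whole procedure is polynomial.

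For the analysis, fix an optimal walk $W^\star$ with value $\opt$. Using the nearby-valid-block property, charge each request covered by $W^\star$ to a single \emph{home block} that is valid for it and whose index is within $1$ of the block in which $W^\star$ services it; let $S_j$ be the covered requests with home block $B_j$, and let $\rho_c$ be the total reward of those covered requests whose home block has parity $c$, so $\rho_0+\rho_1=\opt$. Fix a parity $c$; when the scan reaches $B_j$, let $U_j\subseteq S_j$ be the part of $S_j$ not yet collected. Every request of $S_j$ is serviced by $W^\star$ inside one of $B_{j-1},B_j,B_{j+1}$, so by pigeonhole one of these three sub-walks of $W^\star$ covers at least a $1/3$ fraction of the reward of $U_j$; shortcutting that sub-walk to visit only those requests gives a walk of total length $\le 2D$, which is feasible inside $B_j$ since $B_j$ is valid for all of them. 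Hence the Orienteering instance at $B_j$ has optimum at least $\tfrac13\rew(U_j)$, and the algorithm collects reward at least $\tfrac1{3\alpha}\rew(U_j)$ there. Because the collected sets are pairwise disjoint across blocks, the parity-$c$ run earns $\mathsf{ALG}_c\ge\tfrac1{3\alpha}\sum_j\rew(U_j)$; and every request of $\bigcup_{j\equiv c}S_j$ either lies in some $U_j$ or was collected earlier, so $\rho_c\le\sum_j\rew(U_j)+\mathsf{ALG}_c\le(3\alpha+1)\,\mathsf{ALG}_c$. Taking the better parity therefore earns $\ge\opt/\bigl(2(3\alpha+1)\bigr)=\Omega(\opt)$.

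The main obstacle is the bookkeeping around double counting: a request with a long window is valid for many blocks, so optimizing each block in isolation would risk counting the same reward in several blocks or leaking a super-constant factor. The scan-with-removal order, combined with the charging argument above (which bounds $\rho_c$ by uncollected reward plus what was actually collected), is exactly what keeps this loss to an $O(1)$ factor. The second source of friction is the off-by-one accounting — $W^\star$ may service a request at the extreme edge of a block or of a window, and the unit service times must still fit inside the length budget — and here the slack provided by $L_{\min}\ge 4D$, which is precisely what forces length-$2D$ blocks to sit strictly inside every window, is what lets all of the inequalities close. Finally, one must verify that the free-endpoint and unit-service-time variants of Orienteering retain the known $O(1)$ approximation, which they do via the routine reductions mentioned above.
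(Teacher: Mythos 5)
Your proof is correct and takes essentially the same route as the paper's: both chop the timeline into length-$2D$ blocks, reduce each block to a constant-factor Orienteering computation, absorb the misalignment between time windows and blocks via a factor-$3$ shifting argument (your adjacent-block pigeonhole over $B_{j-1},B_j,B_{j+1}$ plays exactly the role of the paper's three time-shifted copies of the optimal walk in its $K$-aligned rounding step), and control double counting by crediting only not-yet-collected reward (your $U_j$ bookkeeping is the paper's restricted-adversary argument in different clothing). The remaining differences are cosmetic — you give alternating-parity blocks the full $2D$ budget and reposition in the intervening blocks where the paper runs every block with budget $D$ and reserves the other half for repositioning, and your ``metric $\ell+1$'' absorption of unit service times has a harmless off-by-one fixable by, e.g., budget $2D-1$ or the pendant half-length-edge trick the paper itself uses elsewhere.
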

    Our proof relies on the observation that when time-windows are sufficiently large compared to the diameter of the graph, the problem essentially reduces to an instance of the well-studied Orienteering problem, for which constant approximation algorithms are known. We comment that similar ideas have been used in~\cite{azar15}. Then, it is straight-forward to combine this algorithm together with the algorithm of~\cite{CKP12} to acquire an $O(\log \min(\lmax,D))$ approximation of TW-TSP.
    \begin{lemma}\label{l:tw_tsp_approx}
        There exists an $O(\log \min(D,\lmax))$ approximation algorithm for TW-TSP.
    \end{lemma}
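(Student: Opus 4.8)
\textbf{Proof proposal for Lemma~\ref{l:tw_tsp_approx}.}
The plan is to combine, in a black-box fashion, the $O(\log \lmax)$-approximation of~\cite{CKP12} with the constant-factor approximation of Lemma~\ref{l:large_window_approximation}, by bucketing the requests according to the length of their time windows. I would handle two cases. If $\lmax \le 4D$, I would simply run the algorithm of~\cite{CKP12} on $(G,I)$: its guarantee is $O(\log \lmax)$, which equals $O(\log \min(D,\lmax))$ in this regime, so we are done. Otherwise $\lmax > 4D$, and I would partition $I$ into the ``short'' requests $I_{\mathrm{S}} := \{\sigma \in I : d_\sigma - r_\sigma < 4D\}$ and the ``long'' requests $I_{\mathrm{L}} := \{\sigma \in I : d_\sigma - r_\sigma \ge 4D\}$, run the algorithm of~\cite{CKP12} on $(G,I_{\mathrm{S}})$ and the algorithm of Lemma~\ref{l:large_window_approximation} on $(G,I_{\mathrm{L}})$, and output whichever of the two resulting walks has larger reward when evaluated on the original instance $(G,I)$.

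For the analysis, I would establish three things. First, the sub-instance $(G,I_{\mathrm{S}})$ has every time window of length $< 4D$, so its largest-window parameter is $O(D)$ and~\cite{CKP12} returns a walk of reward at least $\opt(G,I_{\mathrm{S}})/O(\log D)$; likewise $(G,I_{\mathrm{L}})$ has every window of length $\ge 4D$, so Lemma~\ref{l:large_window_approximation} applies and returns a walk of reward at least $\opt(G,I_{\mathrm{L}})/O(1)$. Second, since $I_{\mathrm{S}}$ and $I_{\mathrm{L}}$ partition $I$, any walk $W$ satisfies $\rew(W,I) = \rew(W,I_{\mathrm{S}}) + \rew(W,I_{\mathrm{L}})$; applying this to an optimal walk for $(G,I)$ and using optimality of the sub-instance optima gives $\opt(G,I_{\mathrm{S}}) + \opt(G,I_{\mathrm{L}}) \ge \opt(G,I)$, hence $\max\{\opt(G,I_{\mathrm{S}}),\opt(G,I_{\mathrm{L}})\} \ge \opt(G,I)/2$. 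Third, because $I \supseteq I_{\mathrm{S}}, I_{\mathrm{L}}$ and all rewards are positive, a walk collects at least as much reward on $(G,I)$ as on either sub-instance. Putting these together, the walk we output has reward at least $\frac{1}{2}\max\{\opt(G,I_{\mathrm{S}})/O(\log D),\ \opt(G,I_{\mathrm{L}})/O(1)\} \ge \opt(G,I)/O(\log D)$, and since $\min(D,\lmax) = D$ whenever $\lmax > 4D$, this is the claimed $O(\log \min(D,\lmax))$-approximation.

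I do not anticipate a genuine obstacle here: this is the standard ``solve two structured sub-instances and return the better answer'' recipe, and the only points that need (light) care are verifying that restricting attention to a subset of requests neither changes which walks are available nor increases a walk's collected reward, and checking the case split so that the $O(\log D)$ bound in the regime $\lmax > 4D$ and the $O(\log \lmax)$ bound in the regime $\lmax \le 4D$ both collapse into $O(\log \min(D,\lmax))$. One should also note that when $\min(D,\lmax)$ is a small constant the claimed bound is read as $O(1)$, which holds trivially since both subroutines then have constant approximation factor.
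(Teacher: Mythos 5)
Your proposal is correct and follows essentially the same route as the paper: partition requests by whether their window length is at least $4D$, apply Lemma~\ref{l:large_window_approximation} to the long-window part and the $O(\log \lmax)$ algorithm of~\cite{CKP12} to the short-window part (where $\lmax = O(D)$), return the better walk, and fall back on plain~\cite{CKP12} to capture the $O(\log \lmax)$ regime. The case split on $\lmax$ versus $4D$ is just a repackaging of the paper's "combine with the $O(\log\lmax)$ bound" step, so there is no substantive difference.
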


    \item Next, we design a simple approximation-preserving reduction from TW-TSP with service times to TW-TSP (without service times). The main idea behind this reduction is to treat service times as edge lengths in an augmented graph whose diameter is roughly $D+S$. For instances with $S\leq D$, this increase becomes negligible and thus by combining our reduction with Lemma~\ref{l:tw_tsp_approx}, we immediately get the following:
    \begin{lemma}\label{l:tw_tsp_sc_approx}
        Given an instance $(G,I,S)$ of the TW-TSP with service times such that $S\le D$, there exists a polynomial time algorithm that achieves an $O(\log \min(D,\lmax))$ approximation.
    \end{lemma}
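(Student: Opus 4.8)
I may assume $S\ge 2$ (the case $S=1$ is Lemma~\ref{l:tw_tsp_approx} itself) and $\lmin\ge S$ (otherwise $\opt(G,I,S)=0$). The plan is to give an approximation-preserving reduction from $(G,I,S)$ to an ordinary TW-TSP instance $(G^+,I^+)$ --- service time $1$ --- living on a graph $G^+$ of diameter $O(D)$, run the $O(\log\min(D,\lmax))$ algorithm of Lemma~\ref{l:tw_tsp_approx} on $(G^+,I^+)$, and carry the walk it returns back to $G$. Consider first $S$ odd and set $L:=(S-1)/2\ge 1$. Build $G^+$ by attaching to every vertex $v$ carrying at least one request of $I$ a single fresh pendant vertex $u_v$, joined to $v$ by an edge of length $L$; replace each request $\sigma=(v_\sigma,r_\sigma,d_\sigma,\pi_\sigma)$ by $\sigma^+:=(u_{v_\sigma},\,r_\sigma+L,\,d_\sigma-L,\,\pi_\sigma)$, and let $I^+$ collect these. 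In $G^+$ the only reason to step off $v$ onto the dead end $u_v$ is to collect a pendant request, which forces a contiguous block of at least $2L+1=S$ steps away from $v$: the service time has been converted into the length of the pendant edge. Since we add only pendant edges of length $L<S/2$, the diameter $D^+$ of $G^+$ satisfies $D^+\le D+2L<D+S\le 2D$, the windows of $I^+$ are nonempty ($\ge\lmin-2L\ge 1$) and no longer than $\lmax$, and hence the guarantee $O(\log\min(D^+,L^+_{\max}))$ of Lemma~\ref{l:tw_tsp_approx} on $(G^+,I^+)$ is $O(\log\min(D,\lmax))$; the construction is clearly polynomial time.

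The crux is a two-way correspondence between walks, and using a \emph{single} pendant per vertex --- rather than one per request --- is exactly what copes with several requests sharing a location. For the forward inequality $\opt(G^+,I^+)\ge\opt(G,I,S)$, take an optimal walk $W$ for $(G,I,S)$; whenever $W$ idles at some vertex $v$ for a maximal block of $\ell\ge S$ consecutive steps, say over $[a,a+\ell]$, let $W^+$ instead go $v\to u_v$, idle at $u_v$ throughout $[a+L,\,a+\ell-L]$ (a block of $\ell-2L\ge 1$ steps), and return to $v$, arriving exactly at time $a+\ell$; elsewhere $W^+$ copies $W$ verbatim, so it stays perfectly synchronized. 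A one-line computation then shows that whenever such a block of $W$ collected a request $\sigma$ --- meaning the block contained $S$ steps inside $[r_\sigma,d_\sigma]$ --- the single detour it induces collects $\sigma^+$, and it does so simultaneously for \emph{every} co-located $\sigma$ that block collected, which is why one shared pendant suffices. Hence $\rew(W^+,I^+)\ge\rew(W,I,S)$.

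For the reverse direction, translate any walk $W^+$ on $G^+$ into a walk $W$ on $G$ by replacing each excursion of $W^+$ onto a pendant $u_v$ by an idle block of $W$ at $v$ of the same duration --- outside the pendants $W^+$ already moves inside $G$, so this is unambiguous. If $W^+$ collects a pendant request $\sigma^+$ on some excursion, that excursion spans a block of at least $2L+1=S$ steps, and a short window check --- using $d_\sigma-r_\sigma\ge\lmin\ge S$ --- shows the induced idle block of $W$ contains $S$ steps inside $[r_\sigma,d_\sigma]$, so $W$ collects $\sigma$. Applying this to the near-optimal $W^+$ returned by Lemma~\ref{l:tw_tsp_approx} and chaining with the forward bound yields a polynomial-time walk $W$ on $G$ with $\rew(W,I,S)\ge\frac{1}{O(\log\min(D,\lmax))}\,\opt(G,I,S)$, settling the odd case.

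The even case I would reduce to the odd one: run the above on $(G,I,S-1)$ --- valid since $S-1\le D$, $\lmin\ge S>S-1$, and $\opt(G,I,S-1)\ge\opt(G,I,S)$ --- to get a walk $W$ with $\rew(W,I,S-1)\ge\frac{1}{O(\log\min(D,\lmax))}\opt(G,I,S)$, then thin it out as in the proof of Lemma~\ref{l:service-costs}: list the maximal idle blocks of $W$ that collect a request, keep one residue class out of every three consecutive blocks, and for each surviving block absorb the idle time released by its two skipped neighbours on each side; this adds at least $2(S-1)\ge 2$ idle steps around that block --- enough to raise its effective service time from $S-1$ to $S$ while the windows still fit (again because $\lmin\ge S$) --- at the cost of a factor $3$. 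The only genuinely delicate ingredient is the handling of co-located requests through the shared pendant; the parity off-by-one (absorbed by the reduction to $S-1$ and the thinning step) and the recurring window arithmetic are routine, and the hypothesis $S\le D$ enters only to keep the diameter blow-up --- and hence the approximation factor --- logarithmic.
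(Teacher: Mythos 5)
Your proposal is correct and follows essentially the same route as the paper: convert the service time into the length of a pendant edge attached at each request's location, observe that $S\le D$ keeps the diameter of the augmented graph at $O(D)$ and the windows at $O(\lmax)$, invoke the $O(\log\min(D,\lmax))$ TW-TSP algorithm there, and translate walks back and forth (your single-pendant-per-vertex bookkeeping for co-located requests checks out). The only real divergence is cosmetic: the paper doubles all original edge lengths and uses one pendant of length $S$ per request, which sidesteps your parity issue and the extra factor-$3$ thinning step for even $S$ (and note that $\lmin<S$ only forces you to discard the individual requests whose windows are shorter than $S$, not to conclude $\opt(G,I,S)=0$).
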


    \item Finally, we handle the case of large service times, specifically $S\geq D$. In that case, it turns out that we can reduce the instance to one over a uniform complete graph. Then, the TW-TSP-S essentially becomes equivalent to the well-studied \textit{Job Scheduling} problem, for which constant approximations are known.
    \begin{lemma}\label{l:large_service_cost}
        Given an instance $(G,I,S)$ of the TW-TSP with service costs such that $S\le D$, there exists a polynomial time algorithm that achieves an $O(1)$ approximation.
        %There exists an $O(1)$-approximation algorithm for TW-TSP-SC with $S\geq D$.
    \end{lemma}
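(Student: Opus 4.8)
\smallskip
\noindent\textbf{Proof plan.} I read the hypothesis as $S\ge D$, consistent with the discussion immediately above (for $S\le D$ an $O(1)$ bound would be strictly stronger than Lemma~\ref{l:tw_tsp_sc_approx} and, taking $S=1$, would resolve plain TW-TSP). The plan is a pair of reductions: first I would collapse $G$ to a uniform complete graph, and then I would view the resulting instance as a single-machine weighted-throughput scheduling problem with essentially equal processing times, which admits a polynomial-time exact algorithm, hence an $O(1)$-approximation.

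\noindent\emph{Step 1 (uniform metric).} Let $G_D$ be the complete graph on $V$ in which every edge has length $D$. Since $\ell(u,v)\le D$ for all $u,v$, any walk feasible for $(G_D,I,S)$ is feasible for $(G,I,S)$, so $\opt(G,I,S)\ge\opt(G_D,I,S)$. For the reverse direction I would use $S\ge D$: travel along any edge of $G$ costs at most one service unit, so passing to $G_D$ delays each move by at most $S$; starting from an optimal walk for $(G,I,S)$ and keeping only every other serviced request (the half carrying larger total reward) opens a gap of at least $2S\ge S+D$ between consecutive kept requests, which is exactly enough slack to absorb the extra travel on $G_D$ while still servicing each kept request at its original time. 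This gives $\opt(G_D,I,S)=\Theta(\opt(G,I,S))$, so it suffices to $O(1)$-approximate the instance on $G_D$; this is the only place where $S\ge D$ is used.

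\noindent\emph{Step 2 (reduction to scheduling).} On $G_D$ a walk is a sequence of idle ``stops'' at vertices, consecutive stops separated by a travel time of exactly $D$ (or $0$ when a stop repeats its vertex, in which case the stops merge). I would model a stop at $v$ over a time interval as a job whose weight is the total reward of all requests at $v$ whose window contains a length-$S$ sub-window of that interval; then the instance on $G_D$ becomes a single-machine weighted-throughput problem in which each job needs $S$ units of processing plus a $D\le S$ travel buffer --- i.e.\ equal processing times up to a factor $2$. Maximizing weighted throughput on one machine with equal processing times and release/deadline windows is solvable exactly in polynomial time (e.g.\ via Baptiste's dynamic program for $1\mid r_j,\,p_j=p\mid\sum w_jU_j$), which would deliver the claimed polynomial-time $O(1)$-approximation.

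\noindent\emph{Main obstacle.} The hard part is making Step 2 precise. The difficulty is that several requests sharing a vertex can all be served by one common idle period, and idle periods can be longer than $S$; so I would need to (a) reduce ``which requests does a single stop at $v$ collect'' to a stabbing problem over the requests' feasible start-time intervals, (b) show that a polynomial-size set of candidate stop start times (and lengths) loses only a constant factor --- the same structural phenomenon underlying the tractability of equal-processing-time scheduling --- and (c) avoid double-counting reward shared between overlapping stops. The same normalization care resurfaces in the reverse direction of Step 1. Once this bookkeeping is in place, composing it with Step 1 yields the guarantee.
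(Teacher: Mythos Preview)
Your plan is essentially the paper's proof. Step~1 is identical: the paper passes to the uniform complete graph $G_U$ with all edge lengths $D$, notes that any walk on $G_U$ lifts to $G$, and shows $\opt(G_U,I,S)\ge\frac12\opt(G,I,S)$ by the same odd/even trick (skip every other serviced request to free $S\ge D$ units of slack per transition). Step~2 is also the same reduction to single-machine throughput scheduling, but the paper frames it more simply than you do: each request $\sigma_i$ becomes a single job with release $r_i$, deadline $d_i+D$, processing time $S+D$, and reward $\pi_i$, and the paper then invokes the constant-factor approximations of Bar-Noy et al.\ rather than Baptiste's exact DP for equal processing times; your choice is fine too and indeed slightly sharper.

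Your ``main obstacle'' is largely self-inflicted. With the one-job-per-request encoding there is no stabbing problem, no candidate start-time discretization, and no double counting to manage: the only residual subtlety is when several requests sit at the \emph{same} vertex and a single length-$S$ stop could serve them all, in which case the scheduling instance pessimistically charges $S+D$ per request. The paper does not address this, and in its pipeline it need not (the preceding reduction places each request at its own private vertex); if you want Lemma~\ref{l:large_service_cost} to stand alone, a one-line fix is to merge colocated requests whose windows share a common length-$S$ subinterval into a single job before scheduling.
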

\end{enumerate}
\noindent The proof of Lemma~\ref{l:offline-alg} follows immediately from Lemma~\ref{l:tw_tsp_sc_approx} and Lemma~\ref{l:large_service_cost}. We comment that any improvement in the best known approximation algorithm for TW-TSP will immediately imply an improvement for all the results that this work presents. Lemma~\ref{l:large_service_cost} essentially enables us to assume that in all instances of interest, $S\leq D$. Under this assumption, our reduction used in the proof of Lemma~\ref{l:tw_tsp_sc_approx} essentially states that TW-TSP-S becomes equivalent to TW-TSP in graphs of diameter $\mathcal{O}(D)$ and maximum window size $\mathcal{O}(\lmax)$. As an immediate corollary, given an offline TW-TSP algorithm that achieves a competitive ratio of $\alpha(D,\lmax)$, we immediately obtain an offline $O(\alpha(D,\lmax))$ approximation for TW-TSP-S, that can be used in order to substitute Lemma~\ref{l:offline-alg} in our analysis and improve the competitive ratio of Theorem~\ref{t:main}.

%\newpage

\section{The online algorithm}\label{sec:online}
In this section we present an online algorithm that takes as input a pre-computed walk over the predicted request sequence and solves an appropriate online matching problem in order to construct detours that capture true requests, while taking into account the possible errors in the predictions. The formal guarantee of our algorithm is given in Lemma~\ref{l:online-alg}, which we restate for the reader's convenience:
\newtheorem*{lemma:matching}{Lemma~\ref*{l:online-alg}}
\begin{lemma:matching}
    Given an instance $(G,I,I',M)$ of the TW-TSP with predictions satisfying $\twem\leq\lmin/2$ and $\locem\leq (\lmin-1)/4$; a walk $W'\in \mathcal{W}(G)$; and any integer $S'\geq 2\locem + 1$, there exists an online algorithm (Algorithm~\ref{alg:matching_alg}) that returns a walk $W\in\mathcal{W}(G)$ with expected reward
    \[\mathbb{E} \left[ \rew(W,I,1) \right] \ge \frac{1}{6\rewem}\cdot \rew(W',I',S').\]
\end{lemma:matching}

We begin by establishing some notation. Let $W'\in\mathcal{W}(G)$ be any walk that services some predicted requests in $I'$ with a service time of $S'$. We use $\sigma'_i=(v'_i, r'_i, d'_i, \pi'_i)$ to denote the predicted requests in $I'$ that are covered by $W'$, ordered in the sequence in which they are visited by $W'$. Likewise, we use $\sigma_i=(v_i, r_i, d_i, \pi_i)\in I$ to denote the true request matched to the prediction $\sigma'_i$, that is, $\sigma'_i=M(\sigma_i)$.

At a high level, our algorithm follows the walk $W'$, but when it reaches a predicted request $\sigma'_i$, it considers taking a detour to service a true request that is available at that point of time. To this end, we define the set of "reachable" true requests as follows.

\begin{definition}
Given a partial walk $W$ that is at request $\sigma'_i\in I'$ at time $t$, we define the set of reachable requests $R_i(W,t)$ to be the set of all $\sigma\in I$ such that:
\begin{enumerate}
    \item $r_\sigma\le t\le d_\sigma-\ell(v'_i, v_\sigma)-1$.
    \item $2\ell(v'_i, v_\sigma)+1\le S'$.
\end{enumerate}
\end{definition}

Our algorithm considers all of the reachable requests that have not been covered by the walk as yet, chooses the one with the highest reward, and takes a detour to visit and cover the request, before returning to $\sigma'_i$ and resuming the walk. In order to deal with time window errors, our algorithm starts the walk a little early, or on time, or a little late, as in the proof of Lemma~\ref{l:relating-opts}. The algorithm is described below formally.

\begin{algorithm}[h!]
\caption{Online algorithm for TSP-TW with predictions}
\label{alg:matching_alg}
{\bf Offline input:} Graph $G$, predicted requests $I'$, walk $W'\in \mathcal{W}(G)$, service times $S'$.\\
{\bf Online input:} True requests $I$.\\
{\bf Output:} Walk $W\in \mathcal{W}(G)$.
\begin{algorithmic}[1]
    \State Let $K=\lmin/2$. Select $\epsilon$ uniformly at random from $\{-1, 0, 1\}$.
    %Select a walk $W_{shift}$ from the set $\{W',W'_+, W'_-\}$ uniformly at random.
    \State Define the set of covered requests $C=\emptyset$. 
    \For{$i\gets 1$ to $|\cov(W',I',S')|$}
        \State Let $t'_i$ denote the time at which $W'$ visits $\sigma'_i$. 
        \State Set $t_i\gets t'_i+\epsilon K$.
        \State Visit $v'_i$ at time $t_i$.
        \State Construct the set $R_i(W,t_i)$ of requests in $I$ reachable at time $t_i$. 
        \If {$R_i(W,t_i)\setminus C=\emptyset$}
            \State Do nothing.
        \Else
            \State Let $\hat{\sigma}$ be the highest reward request in $R_i(W,t_i)\setminus C$.
            \State Visit $v_{\hat{\sigma}}$; spend one unit of idle time at $v_{\hat{\sigma}}$; return to $v'_{i}$. \label{step:detour}
            \State Set $C\gets C\cup\{\hat{\sigma}\}$.
        \EndIf
    \EndFor
\end{algorithmic}
\end{algorithm}

\remove{
\begin{algorithm}[h!]
\caption{Online algorithm for TSP-TW with predictions that takes an offline walk computed on predicted requests as input.}\label{alg:matching_alg}
{\bf Offline input:} Graph $G$, predicted requests $I'$, walk $W'\in \mathcal{W}(G)$, service times $S'$.\\
{\bf Online input:} True requests $I$.\\
{\bf Output:} Walk $W\in \mathcal{W}(G)$
\begin{algorithmic}[1]
    \smallskip
    \State Let $K=\lmin/2$. Select $\epsilon$ uniformly at random from $\{-1, 0, 1\}$.
    %Select a walk $W_{shift}$ from the set $\{W',W'_+, W'_-\}$ uniformly at random.
    \smallskip
    \State Define the set of covered requests $C=\emptyset$. 
    %\State Based on the value of $W_{shift}$, set time-steps $\{t_i\}$ to either $\{t'_i\}$, $\{t'_i+K\}$ or $\{t'_i-K\}$ accordingly.
    \smallskip
    \For{$i=1$ to $|Cov(W',I',S')|$}
    %\State At $t=0$, set $C=\emptyset$ and start following walk $W_{shift}$, visiting the predicted requests $\sigma'_i$ in order.
    \smallskip
    \State At steps $t=t_i$ for some $i$:
    \smallskip
    \State $\quad$ Construct the set of \textit{available} requests $A_i$ that consists of all requests $\sigma_j \in I$ such that:
    \smallskip
    \Statex $\quad$ $\quad$ (i) $t_i \in [r_j, d_j - 1 -\ell(v'_i, v_j)]$.
    \smallskip
    \Statex $\quad$ $\quad$ (ii) $2\ell(v'_i, v_j) + 1 \leq S'$.
    \smallskip
    \State $\quad$ If $A_i \cap \bar{C} = \emptyset$ then do nothing.
    \smallskip
    \State $\quad$ Otherwise, let $\sigma_j$ be the request in $A_i\cap \bar{C}$ with the highest reward. Do the following:
    \smallskip
    \State $\quad$ $\quad$ Move from the current vertex $v'_i$ to vertex $v_j$.
    \smallskip
    \State $\quad$ $\quad$ Remain idle on $v_j$ for one step.
    \smallskip
    \State $\quad$ $\quad$ Return to vertex $v'_i$.
    \smallskip
    \State $\quad$ $\quad$ Set $C\leftarrow C\cup \{\sigma_j\}$.
    \smallskip
    \State At all other steps, simply follow walk $W$.
\end{algorithmic}
\end{algorithm}
}

 %Our algorithm begins by flipping a random coin in order to decide whether to shift walk $W'$ or not. This is necessary due to the fact that predictions are serviced within their respective windows that might be different than the windows of the actual requests. By (possibly) shifting $W'$ by $K\geq \twem$ steps in either direction, we guarantee that the service time of the predictions is going to align with the true time-window of the corresponding request in one of the three possible outcomes. 

 %After possibly shifting the walk, our algorithm begins to follow it. Whenever the walk remains idle in order to service a prediction for $S'$ steps, our algorithm identifies a set of available true requests in $I$ and takes a detour in order to service the one that hasn't been serviced already and has the highest reward. We note that the definition of available requests is such that taking a detour to service them will always guarantee covering them in their respective windows \textit{and} making it back to our original position all within $S'$ time steps.

We begin our analysis by noting that the walk $W$ constructed by the algorithm is always able to visit the vertices $v'_i$ corresponding to requests $\sigma'_i\in \cov(W',I',S')$ feasibly at the desired times $t_i$. This is because, by construction, the length of the detours that the walk $W$ takes in Step~\ref{step:detour} is always at most $S'$ -- the amount of idle time $W'$ spends at $v'_i$ -- by virtue of the fact that $\hat{\sigma}\in R_i(W,t_i)$ and therefore, $2\ell(v'_i,v_{\hat{\sigma}})+1\le S'$. Therefore, all of the requests $\hat{\sigma}$ visited in Step~\ref{step:detour} are indeed visited by the walk $W$. 

We now relate the total reward covered by $W$ to the reward contained in the true requests $\sigma_i$ corresponding to $\sigma'_i\in \cov(W',I',S')$. To do so, we first note that with constant probability each such request is reachable by $W$.

%In the following, we argue that due to the sufficiently large service cost of $S'$ and the randomized shifting of walk $W'$, each true request $\sigma_i$ will be available to its corresponding prediction with probability at least $1/3$.

\begin{claim}\label{c:availability}
    For each $i$, $\sigma_i\in R_i(W,t_i)$ with probability at least $1/3$.
\end{claim}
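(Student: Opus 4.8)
The plan is to unwind the definition of $R_i(W,t_i)$ and show that the randomly chosen time $t_i = t'_i + \epsilon K$ lands $\sigma_i$ inside its (true) time window, and separately that the distance condition holds deterministically. For the distance condition, recall that $\sigma'_i = M(\sigma_i)$, so $\ell(v'_i, v_{\sigma_i}) = \loce(\sigma_i,\sigma'_i) \le \locem$, whence $2\ell(v'_i,v_{\sigma_i})+1 \le 2\locem+1 \le S'$ since $S' \ge 2\locem+1$ by hypothesis. So condition (2) in the definition of reachability holds with probability $1$, and the entire claim reduces to showing condition (1), namely $r_i \le t_i \le d_i - \ell(v'_i,v_{\sigma_i}) - 1$, holds with probability at least $1/3$.

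Next I would exploit the structure of $\epsilon$. Since $\epsilon$ is uniform on $\{-1,0,1\}$, each of the three shifts $t_i \in \{t'_i - K, t'_i, t'_i + K\}$ occurs with probability exactly $1/3$, so it suffices to show that for \emph{every} $i$, \emph{at least one} of these three choices puts $\sigma_i$ into its time window; then conditioning on that good value of $\epsilon$ gives probability $\ge 1/3$. This is exactly the three-case argument from the proof of Lemma~\ref{l:relating-opts}: there, the walks $W'$, $W'_1$, $W'_2$ (shifted by $0$, $+K$, $-K$) were shown to collectively cover each predicted request; here I replay the same case analysis but with the roles of true and predicted requests swapped, using $K = \lmin/2 \ge \twem$. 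Let $t = t'_i$ be the (unshifted) visit time. I expect the following to be essentially an invariant that I should either extract from the construction of $W'$ or state as a hypothesis on the input walk: when $W'$ visits $\sigma'_i$ it does so feasibly with service time $S'$, i.e. $r'_i \le t'_i$ and $t'_i + S' \le d'_i$. (Actually what I really need is just that $t'_i$ lies far enough inside $[r'_i,d'_i]$; since $\sigma'_i \in \cov(W',I',S')$ and $S' \ge 2\locem+1 \ge 1$, we get $r'_i \le t'_i \le d'_i - 1$, and more precisely $t'_i \le d'_i - S'$.)

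With that in hand the three cases are: \textbf{(i)} if $r_i \le t'_i$ and $t'_i \le d_i - \ell(v'_i,v_{\sigma_i}) - 1$, take $\epsilon = 0$ and we are done. \textbf{(ii)} if $r_i > t'_i$, take $\epsilon = +1$, so $t_i = t'_i + K$; then $t_i = t'_i + K \ge r'_i + K \ge r'_i + \twem \ge r_i$ using $K = \lmin/2 \ge \twem$ and $|r_i - r'_i| \le \twem$; and for the upper bound $t_i + \ell(v'_i,v_{\sigma_i}) + 1 \le t'_i + K + \locem + 1 \le t'_i + \lmin/2 + \locem + 1$, which I want to bound by $d_i$ — here I would use $t'_i \le d'_i - S' = d'_i - 2\locem - 1$ (if $S' = 2\locem+1$; more generally $t'_i \le d'_i - S' \le d'_i - 2\locem-1$), $d'_i \le d_i + \twem \le d_i + \lmin/2$, and then $t'_i + \lmin/2 + \locem + 1 \le d_i + \lmin/2 - 2\locem - 1 + \lmin/2 + \locem + 1 = d_i + \lmin - \locem \le d_i$ provided $\lmin \le \locem$... which is false. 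So I need to be more careful: the bound I actually get from $r_i > t'_i$ is that $t'_i < r_i$, i.e. $t'_i$ is near the left end of $\sigma_i$'s window, so $t'_i < r_i \le d_i - \lmin$; then $t_i + \ell + 1 \le t'_i + \lmin/2 + \locem + 1 < d_i - \lmin + \lmin/2 + \locem + 1 = d_i - \lmin/2 + \locem + 1 \le d_i$ since $\locem \le (\lmin-1)/4 < \lmin/2 - 1$. Good. \textbf{(iii)} symmetrically if $t'_i > d_i - \ell(v'_i,v_{\sigma_i}) - 1$, take $\epsilon = -1$; then $t_i = t'_i - K$, and an analogous computation using $t'_i > d_i - \locem - 1 \ge r_i + \lmin - \locem - 1$ shows $t_i \ge r_i$ while $t_i + \ell + 1 \le d_i$ follows from $t'_i \le d_i$ (or the trivial bound) minus $K$. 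In each case the relevant value of $\epsilon$ is chosen with probability $1/3$, proving the claim. The main obstacle is bookkeeping: getting the precise feasibility bound on $t'_i$ that the argument needs (essentially $r'_i \le t'_i \le d'_i - S'$, which should follow from $\sigma'_i \in \cov(W',I',S')$) and then being careful with the inequalities $\twem \le \lmin/2$ and $\locem \le (\lmin-1)/4$ so that the shifted times stay inside the true windows — exactly the same delicate constant-chasing as in Lemma~\ref{l:relating-opts}, just transposed.
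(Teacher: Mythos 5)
Your proposal is correct and follows the paper's own proof essentially verbatim: condition (2) of reachability holds deterministically from $\ell(v'_i,v_i)\le\locem$ and $S'\ge 2\locem+1$, and condition (1) is handled by the same three-case analysis over the shifts $t'_i,\ t'_i\pm K$, using the coverage bounds $r'_i\le t'_i\le d'_i-S'$ together with $K=\lmin/2\ge\twem$ and $\locem\le\lmin/2$. Two small slips in your bookkeeping: in case (iii) the justification ``$t'_i\le d_i$'' need not hold (one only knows $t'_i\le d'_i-S'\le d_i+\twem-S'$), so the upper bound should be obtained as in the paper via $t_i=t'_i-K\le d'_i-S'-\twem\le d_i-S'\le d_i-\ell(v'_i,v_i)-1$, i.e.\ exactly from the invariant $t'_i\le d'_i-S'$ that you already singled out; and in case (ii) the inequality $(\lmin-1)/4<\lmin/2-1$ fails for small $\lmin$, though integrality (or the weaker estimate $\locem\le\lmin/2$, which is all the paper uses there) closes that gap.
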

\begin{proof}
    Recall that by definition we have $\sigma'_i=M(\sigma_i)$ and so, $2\ell(v'_i,v_i)+1\le 2\locem+1\le S'$. So the request $\sigma$ always satisfies the second requirement in the definition of the reachable set $R_i(W,t_i)$. Let us now consider the first requirement and recall that $t_i=t'_i+\epsilon K$ where $\epsilon\in\{-1,0,1\}$. We will now argue that $t_i \in [r_i, d_i - 1 -\ell(v'_i, v_i)]$ for at least one of the three choices of $\epsilon$. The claim then follows from the uniformly random choice of $\epsilon$.
    \begin{enumerate}
        \item If $t'_i \in [r_i, d_i - 1 -\ell(v'_i, v_i)]$, then the claim holds for $\epsilon=0$ and $t_i=t'_i$.
        \item Suppose that $t'_i<r_i$. Then, for $\epsilon=1$ we have that $t_i=t'_i+K \geq r'_i + \twem \geq r_i$, and also $t_i=t'_i + K < r_i + K < d_i - L_{min} + L_{min}/2$ and thus $t_i=t'_i + K \leq d_i -1 - \ell(v'_i, v_i)$ since $\ell(v'_i, v_i) \leq \locem \leq L_{min}/2$. Thus, in this case we have $t_i=t'_i+K \in [r_i, d_i - 1 -\ell(v'_i, v_i)]$ with the choice of $\epsilon=1$.
        \item Finally, suppose that $t'_i>d_i - 1 -\ell(v'_i, v_i)$. Then, for $\epsilon=-1$ we have that $t_i=t'_i - K \leq d'_i - S' - \twem \leq d_i - S'$ and thus $t'_i - K \leq d_i -1 - \ell(v'_i, v_i)$ since $S'\geq 2\locem + 1 \geq \ell(v'_i, v_i) + 1$. Also, $t_i=t'_i - K > d_i - 1 -\ell(v_i,v'_i) - L_{min}/2 > r_i + L_{min}/2 -\ell(v_i,v'_i)  - 1$ and thus $t'_i - K \geq r_i$, since $\ell(v_i,v'_i) \leq \locem \leq L_{min}/2$. Thus, in this case we have $t_i=t'_i-K \in [r_i, d_i - 1 -\ell(v'_i, v_i)]$ with the choice of $\epsilon=-1$.
    \end{enumerate}

\end{proof}

We are now ready to prove Lemma~\ref{l:online-alg} via a matching-type argument. To account for the reward covered by the walk $W$ constructed by the algorithm, we will employ a standard charging scheme. Every time the algorithm takes a detour to cover some true request $\hat{\sigma}$ from a predicted request $\sigma'_i$ in Step~\ref{step:detour}, we will credit half of the earned reward $\pi_{\hat{\sigma}}$ to $\hat{\sigma}$ itself, and half of the reward to the request $\sigma_i$. Formally, let $\cred(\sigma)$ denote the total credit received by $\sigma\in I$. Then during Step~\ref{step:detour} we will increment both $\cred(\hat{\sigma})$ and $\cred(\sigma_i)$ by $\pi_{\hat{\sigma}}/2$.

Now consider some $\sigma_i\in I$ corresponding to $\sigma'_i\in\cov(W',I',S')$. By Claim~\ref{c:availability}, this request is in $R_i(W,t_i)$ with probability at least $1/3$. If at time $t_i$, the request has already been covered by $W$, then we get $\cred(\sigma_i)\ge \pi_i/2$. Otherwise, we pick a $\hat{\sigma}\in R_i(W,t_i)$ with $\pi_{\hat{\sigma}}\ge \pi_i$, and therefore, once again we get $\cred(\sigma_i)\ge \pi_i/2$.

Putting everything together, we get
\begin{align*}
\operatorname{E}[\rew(W,I,S)]  = \operatorname{E}\left[\sum_{\sigma\in I} \cred(\sigma)\right] 
& \ge \sum_{i: \sigma'_i\in\cov(W',I',S')} \operatorname{E}\left[\frac{\pi_i}{2} \mathbbm{1}[\sigma_i\in R_i(W,t_i)]\right]\\
& \ge \frac{1}{3}\cdot \sum_{i: \sigma'_i\in\cov(W',I',S')} \frac{\pi_i}{2}\\
& \ge \frac{1}{6} \cdot \frac{1}{\rewem} \cdot \sum_{i: \sigma'_i\in\cov(W',I',S')} \pi'_i\\
& = \frac{1}{6\rewem}\cdot \rew(W',I',S')    
\end{align*}
This completes the proof of the lemma.

\section{Lower bounds}\label{sec:lower_bound}
In this section, we present lower bounds that complement our results. First, we will motivate the need for predictions in Section~\ref{sec:lb-online}. Then, in Section~\ref{s:loc-err-lb} we will show that the competitive ratio of TW-TSP with predictions must scale linearly with the error in locations. Finally, in Section~\ref{s:lb-service} we argue the need for non-zero service times in the definition of TW-TSP with predictions.

\subsection{Lower bounds for online TW-TSP without predictions}
\label{sec:lb-online}
We argue that Online TW-TSP does not admit any reasonable competitive ratio in the absence of predictions. In the case of deterministic algorithms where their entire behavior is predictable, simple instances with only $2$ vertices and appropriately small time-windows suffice to argue that no bounded guarantee for the approximation ratio is achievable.

\begin{lemma}\label{l:unbounded_deter}
The competitive ratio of any deterministic online algorithm for Online TW-TSP on instances with $\lmin \leq D$ is unbounded.
\end{lemma}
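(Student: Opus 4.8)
The plan is to construct, against an arbitrary deterministic online algorithm $\mathcal{A}$, a family of instances on a fixed two-vertex graph on which $\mathcal{A}$ covers nothing while the offline optimum covers an arbitrary number of requests. Concretely, I would take $G$ to have vertices $a,b$ joined by a single edge of length $L\ge 1$ (so $D=L$), and for a parameter $N$ release requests $\sigma_1,\dots,\sigma_N$, all of reward $1$, at the integer times $t_i := i(L+2)$, with $\sigma_i$ having the length-one window $[t_i,t_i+1]$ (hence $\lmin=1\le D$, as required). The location $v_{\sigma_i}\in\{a,b\}$ of $\sigma_i$ will be chosen adversarially based on the behavior of $\mathcal{A}$ on $\sigma_1,\dots,\sigma_{i-1}$; this is legitimate because nothing is revealed to $\mathcal{A}$ during $(t_{i-1},t_i)$, so the position of $\mathcal{A}$ at the integer time $t_i$ is a deterministic function of $\sigma_1,\dots,\sigma_{i-1}$ alone and can be computed before $\sigma_i$ is placed.

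The first step is to pick the locations so that $\mathcal{A}$ misses every request. At time $t_i$ the algorithm is either parked at one of $a,b$ or partway along the edge, in which latter case at the integer time $t_i$ it sits at (integer) distance at least $1$ from both endpoints. I would place $\sigma_i$ at whichever vertex $\mathcal{A}$ is not occupying (either one, if $\mathcal{A}$ is on the edge); then at time $t_i$ the algorithm is at distance at least $1$ from $v_{\sigma_i}$, so the earliest moment it could begin the one unit of idle service at $v_{\sigma_i}$ is $t_i+1$, which already exceeds $d_{\sigma_i}-1=t_i$. Hence $\sigma_i$ is never covered, and summing over $i$ gives $\rew(W,I)=0$ for the walk $W$ output by $\mathcal{A}$.

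The second step is to show that the offline optimum is large regardless of the resulting location sequence $v_{\sigma_1},\dots,v_{\sigma_N}$. A single walk covers all $N$ requests: begin at $v_{\sigma_1}$, idle there through time $t_1+1$, and between consecutive requests either remain in place (when $v_{\sigma_{i+1}}=v_{\sigma_i}$) or cross the edge, which costs $L$ time and fits inside the gap $t_{i+1}-(t_i+1)=L+1$, arriving in time to idle at $v_{\sigma_{i+1}}$ during $[t_{i+1},t_{i+1}+1]$. Thus $\opt(G,I)\ge N$, and combined with the first step, for every $N$ there is an instance with $\lmin\le D$ on which $\mathcal{A}$ earns $0$ while the optimum earns at least $N$; no (even additive) competitive guarantee survives, proving the claim.

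I do not expect a genuine obstacle here: the only things to get right are the bookkeeping about "mid-edge at an integer time" states and the off-by-one in the covering convention (service time $1$ with window $[t_i,t_i+1]$ forces idling to begin exactly at $t_i$). The conceptual point — which is really all that is happening — is that a deterministic server on a diameter-$D$ graph is helpless against a length-$1$ window revealed at a vertex it is not currently sitting on, whereas the offline optimum, knowing the locations in advance, can simply be pre-positioned.
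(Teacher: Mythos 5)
Your proposal is correct and follows essentially the same strategy as the paper's proof: a two-vertex graph of diameter $D$, requests placed adversarially at whichever vertex the deterministic server does not occupy at the release time, with windows too short to reach and serve ($\lmin\le D$) and inter-request spacing large enough that the offline optimum covers every request. The only differences (unit-length windows and spacing $L+2$ instead of windows of length $L\le D$ and spacing $2D$) are cosmetic and do not affect the argument.
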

\begin{proof}
    Let $\deter$ by any deterministic algorithm and let $G$ be the line graph with just two vertices $v_1,v_2$ connected via an edge of length $D$. Since $\deter$ is deterministic, we can assume knowledge of its position at any step $t$ as soon as we have specified all requests with release time $\leq t$. We will now construct a request sequence $I$ that uses the information.

    For the first $D$ time-steps, we don't release any request. Then, at $t=D$, let $v_D\in \{v_1,v_2\}$ be the position that $\deter$'s walk is currently at and likewise let $v'_D$ be the other vertex of $G$.
    We construct the first request to be $\sigma=(v'_D,D,D+L,1)$ for any $L\leq D$. Clearly, $\deter$ cannot service this request as even for $L=D$ it arrives on $v'_D$ at deadline and cannot service it for one step. Then, we don't release any new request for the next $2D$ steps, and at $t=3D$ we repeat the same process, by requesting the vertex that $\deter$ doesn't currently occupy. Likewise, we repeat the same process at $t=5D$, $t=7D$ etc. Independently of the size of our request sequence, the total reward collected by $\deter$ is $0$.

    On the other hand, it is not hard to see that if our request sequence $I$ has $N$ requests in total, then $\opt(G,I,1)=N$. This is due to the fact that requests are spaced $2D$-steps from each other, and thus an optimal offline algorithm that had knowledge of the entire sequence in advance would always be able to arrive at each request on time, servicing it within its respective time-window.
\end{proof}

For randomized algorithms, a slight improvement can be achieved. In particular, the randomized algorithm that picks a vertex uniformly at random and then remains idle on it for the entire sequence achieves a competitive ratio of $1/n$. It turns out that this is actually the best possible ratio that a randomized algorithm can achieve on Online TW-TSP:

\newtheorem*{theorem:lb-unbounded}{Theorem~\ref*{l:unbounded_rand}}
\begin{theorem:lb-unbounded}
The competitive ratio of any randomized online algorithm for Online TW-TSP on instances with $\lmin \leq D$ is at most $1/n$.
\end{theorem:lb-unbounded}
\begin{proof}
    Let $G(V,E,\ell)$ be the uniform complete graph of $n=|V|$ vertices, where all edges have a length of $D$. Fix any integer $N$ and let vertices $v_1,v_2,\dotsc , v_N\in V$ be drawn independently and uniformly at random. Next, we fix any window length $L\le D$ and consider the (random) request sequence on these vertices $I=\{\sigma_i\}_{i=1}^N$ for $\sigma_i = (v_i, (2i-1)D, (2i-1)D+L, 1)$. From Yao's mininmax principle, %\dccomment{reference?}, 
    a lower bound on the (expected) competitive ratio of deterministic algorithms on this randomized instance will imply the same lower bound for randomized algorithms.

    Since the time-windows are spaced $2D$-away from each other, it is not hard to see that for any realization of $I$, $\opt(G,I,1)=N$. On the other hand, since $D\geq L$, we get that the only way to service a request is to be on its corresponding vertex on release time. Since the vertices are random, for any deterministic algorithm this happens with probability  precisely $1/n$, and thus the expected reward of any deterministic algorithm on this instance is $N/n$, proving the claim.
\end{proof}

\subsection{Tight dependence on location error}
\label{s:loc-err-lb}

In this section we show that a linear dependency on the location error is unavoidable for any randomized online algorithm for the TW-TSP with predictions, even assuming exponential computational power. In other words, we formally prove Theorem~\ref{t:lb}, which we re-state for the reader's convenience: %Essentially, our results state that there are instances where even if we have perfect knowledge of the time-windows and rewards, a location error of $\Lambda_M = S$ allows for enough uncertainty so that the competitive ratio of any algorithm is $\Omega(\frac{1}{S})$. 

\newtheorem*{theorem:lb}{Theorem~\ref*{t:lb}}
\begin{theorem:lb}
    For any $S>0$, there exists an instance $(G, I, I', M)$ of the TW-TSP with predictions satisfying $\twem=0$, $\rewem=1$, and $\locem=S$ such that the competitive ratio of any randomized online algorithm taking the tuple $(G, I', \locem)$ as offline input and $I$ as online input asymptotically approaches $1/(S+1)$.
\end{theorem:lb}

\begin{proof}
Fix any $S>0$ and let $K$, $C$ and $N$ be integer parameters that will be specified later. We construct a graph $G=\cup_{i=0}^{N-1}G_i$ that consists of $N$ copies $G_0,\dotsc , G_{N-1}$ of the complete graph on $C$ vertices with all edge lengths equal to $S$, arranged in a way so that each vertex in $G_i$ connects to each vertex in $G_{i+1}$ with an edge of length $KS$. A pictorial example for small values of $C$ and $N$ is shown in Figure~\ref{fig:lb}. 

\begin{figure}[h]
\centering
\includegraphics[width=0.8\textwidth]{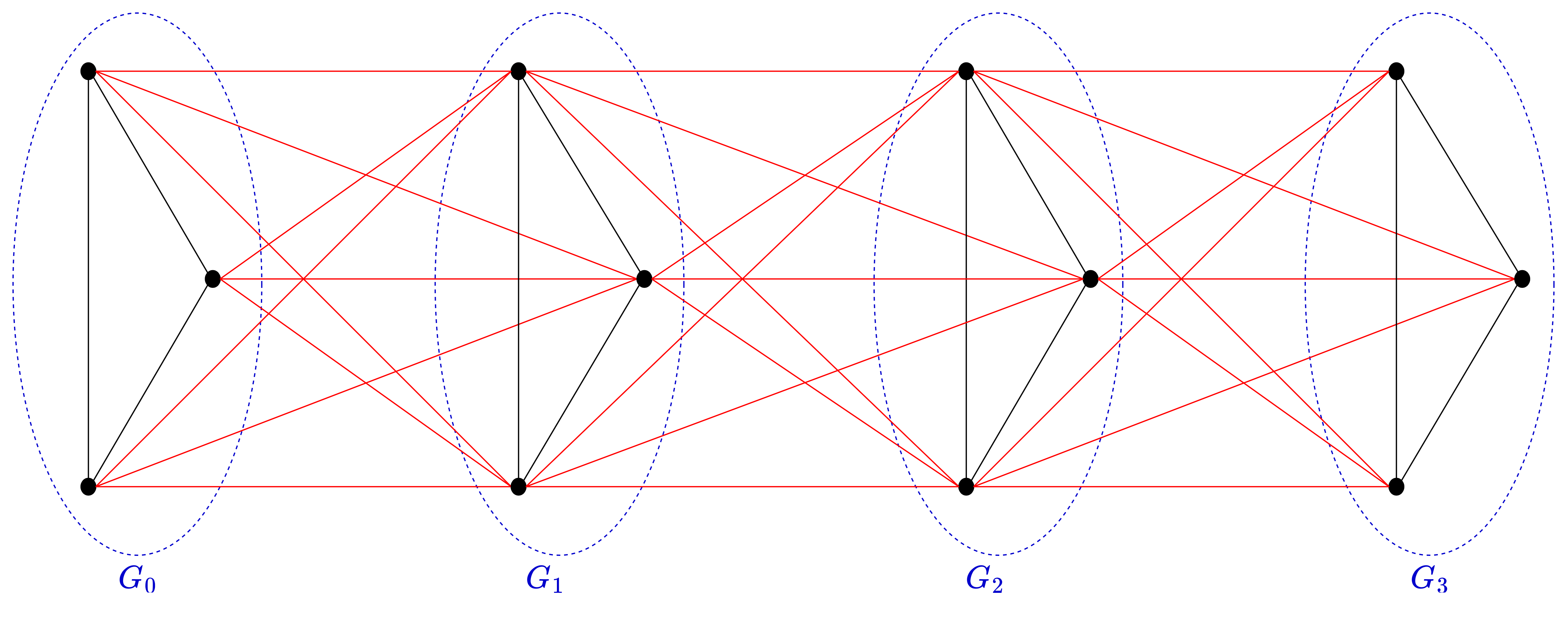}
\caption{An example of $G$ for $C=3$ and $N=4$. Black edges have a length of $S$ and red edges have a length of $KS$.}
\label{fig:lb}
\end{figure}

Next, we select independently and uniformly at random one vertex $v_i$ from each sub-graph $G_i$ and construct the (randomized) request sequence $I=\{\sigma_i\}_{i=0}^{N-1}$ where $\sigma_i = (v_i, r_i, r_i + KS, 1)$ for $r_i = i\cdot (KS+1)$. As for the predictions, we simply construct a second instance $I'$ in the same manner and offer it as an offline prediction for $I$. Observe that for all possible constructions of $I$ and $I'$ it holds that there exists a matching $M$ between them with $\twem=0$, $\rewem = 1$ and $\locem = S$, namely the matching that pairs together vertices from the same sub-graphs. This prediction provides no information to the algorithm other than the fact that the (true) instance $I$ was constructed via the above randomized approach. Also, notice that the location error $\locem$ can be arbitrarily small compared to the window length $L=KS$ by choosing appropriately large $K$.

It is easy to see that for all possible realizations of $I$ it holds that $\opt(G,I,1) = N$. Indeed, consider the walk that starts from the (random) vertex $v_0$, remains idle for $1$ step and then visits vertex $v_1$, remains idle for one step, visits $v_2$, etc. Such a walk would visit each vertex $v_i$ at step $t=i\cdot KS + i = r_i$ and thus would service all the requests in $I$, achieving a total reward of $N$. Next, we will show that the expected reward of any deterministic algorithm on the random sequence $I$ approaches $N/S$. Using Yao's minimax principle, this will immediately translate to a lower bound that approaches $1/S$ for randomized algorithms, completing the proof of the theorem.

Fix any deterministic algorithm for TW-TSP with predictions on instance $(G,I,I',M)$. Note that since the predictions $I'$ supply zero information, it suffices to analyze the algorithm as a deterministic online algorithm for Online TW-TSP on instance $I$. The key observation is that due to the fact that both the time windows of the requests and the edges that connect different sub-graphs have a length of $KS$, it is impossible for the algorithm to know on what vertex $v_i$ of subgraph $G_i$ the request is going to arrive before visiting some possibly different vertex of the same subgraph. In particular, if the algorithm is in subgraph $G_j$ for $j<i$ at time $r_i$, then it cannot reach vertex $v_i$ before the time window of request $i$ ends. Thus, in order for any deterministic algorithm to serve the request on sub-graph $G_i$, it first has to visit some vertex $v'_i$ of $G_i$. If it so happens that $v'_i=v_i$ then it can immediately service the request, otherwise it has to travel a distance of $S$ in order to reach $v_i$.

We partition the set of requests into two sets $N_+$ and $N_-$ based on whether the deterministic algorithm happens to arrive on the correct vertex of the sub-graph or not. All the requests in $N_+$ can be serviced without any extra delay, exactly as done by the optimal walk. On the other hand, servicing a request in $N_-$ requires the algorithm to spend an extra time of $S$ in order to transition to the right vertex. Since the time-windows have a length of $KS$, this can be done at most $K$ times before the algorithm runs out of slack to spare. When this happens, the algorithm would have to skip the next $S$ requests in order to recover enough slack to fix its next mistake. 

Formally, consider the last request in $N_-$ that the algorithm feasibly serves, call it $l$. Let $A$ be the number of requests in $N_-$ the algorithm serves through extra delay prior to $l$, and let $B$ be the number of requests the algorithm skips in $N_+$ or $N_-$ prior to $l$. Then in order for the algorithm to have reached $l$ before its time window ends, it must be the case that the total extra delay incurred by the algorithm, namely $AS-B$, is no more than $KS$. Rearranging we get:
\[A(S+1)-(A+B)\le KS, \quad\text{ or, }\quad A \le \frac{A+B}{S+1} + \frac{KS}{S+1} < \frac{N}{S+1} + K \]

\noindent
Thus, we get that the total expected reward gathered by the algorithm is at most
\[\mathbb{E}[|N_+|] + A + 1\le \mathbb{E}[|N_+|] + \frac{N}{S+1} + K + 1
%K + \frac{|N_-| - K}{S}
\]
Finally, using the fact that $\mathbb{E}[|N_+|] = N/C$, we get that the competitive ratio of any deterministic algorithm on the random instance $I$ is at most
\[\frac{1}{C} + \frac{K+1}{N} + \frac{1}{S+1}\]
Choosing $N>>K$ and $C$ sufficiently large, we can make this competitive ratio asymptotically approach $1/(S+1)$ as desired.
\end{proof}

\subsection{Comparing the optimal with and without service times}
\label{s:lb-service}

%If $L>D$, we can construct a trivial example where the gap between the two optimals is unbounded. Indeed, consider a graph $G$ that consists of a single vertex $v$, as well as a request sequence $I=\{\sigma_i\}_{i=1}^N$ where $\sigma_i = (v, 0, L, 1)$ for all $i\in [N]$. Then, $\opt(G,I,0)$ trivially serves all $N$ requests on $t=0$ by just remaining idle on $v$ since service costs are $0$. On the other hand, $\opt(G,I,1)$ can clearly serve at most $L$ requests, since doing so would require a total service time of $L$, exceeding all other deadlines. Thus, in that case the ratio between $\opt(G,I,1)$ and $\opt(G,I,0)$ becomes $L/N$ which can become arbitrarily small for large enough $N$.

As we saw in Lemma~\ref{l:service-costs}, the gap between $\opt(G,I,1)$ and $\opt(G,I,S)$ depends linearly on the service time $S$. In this section, we study the gap between $\opt(G,I,1)$ and $\opt(G,I,0)$, showing that there exists a much sharper separation between them.
\begin{theorem}\label{t:vs_no_service}
    For any integers $L$ and $D$, $L\le D$, there exists an offline instance $(G,I)$ of the TW-TSP where $D$ is the diameter of the network and every request has a time window length equal to $L$, such that $\opt(G,I,1)\le \frac{L}{D+1}\cdot \opt(G,I,0)$.
\end{theorem}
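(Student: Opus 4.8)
The plan is to use a single path graph carrying one unit-reward request per vertex, with time windows arranged so that a left-to-right unit-speed sweep collects \emph{every} request when service times are $0$, while any walk that must idle one unit at each request it serves runs out of time-window slack after only $L$ requests.

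Concretely, let $G$ be the path on the $D+1$ vertices $u_0,u_1,\dots,u_D$ with every edge of length $1$, so that the diameter of $G$ is exactly $D$; and for each $i\in\{0,1,\dots,D\}$ place a single request $\sigma_i$ at $u_i$ with reward $1$ and time window $[\,i+1,\ i+1+L\,]$. Every window then has length exactly $L$ (and $L\le D$ by hypothesis), and since there are $D+1$ requests in total, $\opt(G,I,0)\le D+1$. Conversely, the walk that idles at $u_0$ until time $1$ and then moves rightward at unit speed sits at $u_i$ at time $i+1\in[\,i+1,\ i+1+L\,]$, so with service time $0$ it covers every $\sigma_i$; hence $\opt(G,I,0)=D+1$.

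The crux is the bound $\opt(G,I,1)\le L$. Fix any walk $W$ and let $\sigma_{p_1},\dots,\sigma_{p_k}$ be the requests it covers with service time $1$, indexed so that the service-start times satisfy $\tau_1<\tau_2<\dots<\tau_k$. Since $W$ is still at $u_{p_m}$ at step $\tau_m+1$ (the idle unit), the earliest it can begin service at $u_{p_{m+1}}$ is $\tau_m+1+\ell(u_{p_m},u_{p_{m+1}})$, so $\tau_{m+1}\ge\tau_m+1+|p_{m+1}-p_m|$. Telescoping over $m$ and using $\sum_{m=1}^{k-1}|p_{m+1}-p_m|\ge|p_k-p_1|\ge p_k-p_1$ gives $\tau_k\ge\tau_1+(k-1)+(p_k-p_1)$. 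Feasibility of the first and last services forces $\tau_1\ge r_{\sigma_{p_1}}=p_1+1$ and $\tau_k\le d_{\sigma_{p_k}}-1=p_k+L$. Combining these three facts yields $p_k+L\ge(p_1+1)+(k-1)+(p_k-p_1)=p_k+k$, i.e. $k\le L$. So no walk covers more than $L$ requests with service time $1$, and therefore $\opt(G,I,1)\le L=\frac{L}{D+1}\cdot(D+1)=\frac{L}{D+1}\cdot\opt(G,I,0)$, as required.

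The one point that needs care is that the upper bound must hold against \emph{arbitrary} walk behavior --- moving back and forth, idling at non-request vertices, and so on --- but the shortest-path/triangle-inequality telescoping above handles all of this uniformly, using only that a walk travels at unit rate and pays an extra idle unit per served request; I expect this to be the only nontrivial step. If one additionally wants the gap to hold for instances whose optimum is arbitrarily large, one can concatenate $M$ time-shifted copies of the gadget on the same path, spaced far enough apart (say by more than $2D+L$ time units) that the windows of different copies are disjoint and the sweep can reset to $u_0$ between copies; the per-copy counting bound is unaffected, giving $\opt(G,I,1)\le ML$ and $\opt(G,I,0)=M(D+1)$.
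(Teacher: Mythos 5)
Your proposal is correct and uses essentially the same construction and argument as the paper: a path on $D+1$ unit-length edges with one unit-reward request per vertex and staggered length-$L$ windows, a left-to-right sweep achieving $\opt(G,I,0)=D+1$, and a counting argument showing any unit-service walk covers at most $L$ requests. The only cosmetic difference is that you derive the bound $k\le L$ by telescoping service-start times against the triangle inequality, whereas the paper first reduces without loss to walks starting at $v_0$ and then observes that after serving $x$ requests the walk cannot reach $v_j$ before step $j+x$; both formalize the same slack-counting idea.
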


\begin{proof}
%The case of $D\geq L$ is slightly more interesting as we can a ratio of $L/(D+1)$. Indeed, 
Consider the line graph $G$ with vertices $v_0$ through $v_{D}$ connected sequentially via edges of length $1$. Clearly, the diameter of $G$ is $D$. Next, consider the sequence of $(D+1)$-requests $I=\{\sigma_i\}_{i=0}^D$ where $\sigma_i = (v_i, i, L+i, 1)$. Observe that $\opt(G,I,0)=D+1$ as simply following the walk from $v_0$ to $v_D$ will cover all the requests if the service costs are $0$. 

On the other hand, it is not very hard to see that $\opt(G,I,1)=L$. First, observe that without loss we can assume that the optimal walk starts on $v_0$. If not, let $\sigma_i$ be the \textit{first} request served by the optimal. Since $\sigma_i$ cannot be served prior to step $r_i=i$, we could instead start at $v_0$ and take $i$ steps in order to reach $v_i$ and then follow the original walk, achieving precisely the same reward. 

Our argument is completed by the simple observation that any walk that starts from $v_0$ and has served $x$ requests with service cost $1$ \textit{cannot} visit vertex $v_j$ prior to step $j+x$. Thus, as soon as $x$ becomes $L$ all the future time-windows will be missed.    
\end{proof}

Theorem~\ref{t:vs_no_service} states that $\opt(G,I,0)$ is a much stronger benchmark than $\opt(G,I,1)$. However, it doesn't exclude the possibility of designing an algorithm that is competitive against this stronger benchmark $\opt(G,I,0)$. We will next show that no randomized online algorithm with predictions can obtain a better than linear competitive ratio against this benchmark. Intuitively, requiring the algorithm to spend non-zero time at each request also allows the algorithm to recover from possible mistakes due to the prediction errors by skipping requests that the optimum services with a delay of $1$. A similar approach is not possible in the $0$ service time setting.  

%that if we consider the TW-TSP with predictions and service time of $0$ (rather than $1$) we can achieve guarantees with respect to $\opt(G,I,0)$ that scale linearly to the prediction error, similarly to Theorem~\ref{t:main}. It turns out that doing so is not possible, as we prove via the following lower bound, which is essentially the same as the one from Theorem~\ref{t:lb}.

\newtheorem*{theorem:lb-for-0-sc}{Theorem~\ref*{t:lb_for_0_sc}}
\begin{theorem:lb-for-0-sc}
%\begin{theorem}\label{t:lb_for_0_sc}
    For any $S>0$, there exists an instance $(G,I,I',M)$ of the TW-TSP with predictions and service times $0$, satisfying $\twem=0$, $\rewem=1$, and $\locem=S$, such that any randomized online algorithm taking the tuple $(G, I', \locem)$ as offline input and $I$ as online input achieves a reward no larger than $O(1/n)\cdot\opt(G,I,0)$. Here $n$ is the number of vertices in $G$.
\end{theorem:lb-for-0-sc}
\begin{proof}
We construct the same graph $G=\cup_{i=0}^{N-1}G_i$ as in Theorem~\ref{t:lb}, that consists of $N$ copies of the complete graph with edge lengths $S$, connected sequentially with edges of length $KS$ (see Figure~\ref{fig:lb}). As for the request sequence, we once again select independently and uniformly at random one vertex $v_i$ from each sub-graph $G_i$ and construct the (randomized) request sequence $I=\{\sigma_i\}_{i=0}^{N-1}$ where $\sigma_i = (v_i, iKS, (i+1)KS-1, 1)$; notice that release times are slightly different from Theorem~\ref{t:lb} to account for the absence of service costs. 

For the predictions, we simply construct a second instance $I'$ in the same manner and offer it as an offline prediction for $I$. By matching the requests on the same sub-graph together, we get $\twem = 0$, $\rewem = 1$ and $\locem = S$. As it clearly holds that $\opt(G,I,0) = N$ for any realization of $I$, to prove our theorem it suffices to argue that any deterministic algorithm  for TW-TSP with predictions on instance $(G,I,I',M)$ gets an expected reward of $\mathcal{O}(N/n)$ and apply \textit{Yao's minimax principle}. Furthermore, since the prediction $I'$ does not provide any information on $I$, it suffices to bound the reward of deterministic algorithms for Online TW-TSP on (online) instance $I$.

Fix any deterministic algorithm for Online TW-TSP on instance $I$. Observe that since edges between different sub-graphs have a length of $KS$ and time-windows have a length of $KS-1$, it is impossible for the algorithm to service some request $\sigma_i$ unless at $t=r_i$ it is already in some vertex of sub-graph $G_i$. For the same reason, it is not possible to service any request $\sigma_j$ after servicing some other request $\sigma_i$ with $i>j$. Finally, since all requests can be reached by their release time if the algorithm starts from a vertex in $G_0$, we can assume without loss that this is indeed the case.

We partition our set of $N$ requests into two sets $N_+$ and $N_-$ based on whether the deterministic algorithm \textit{happens} to arrive on the correct vertex of the sub-graph before the request's release time or not. From the random construction of our instance, we have that $\mathbb{E}[|N_+|] \leq N/C$. For requests in $N_-$, if the algorithm wishes to service them it has to take a detour of length $S$ in order to reach the correct vertex. Our proof relies on the fact that after servicing $K$ requests in $N_{-}$, the algorithm can no longer service any other request. To see this, let $v_j$ be the $K$-th request in $N_{-}$ that was serviced by the deterministic algorithm. As we can already established, any request $v_i$ with $i<j$ can no longer be serviced. On the other hand, since without loss the algorithm starts at a vertex of $G_0$, just reaching a vertex in $G_i$ while taking $K$ detours of length $S$ requires at least $iKS + K>d_i$ time-steps. Putting everything together, we get that the expected reward of the algorithm is at most $N/C+K = \mathcal{O}(N/n)$ by setting $K=\mathcal{O}(1)$ and $N=2K$, since $n=NC$.
\end{proof}

 %In Theorem~\ref{t:lb_for_0_sc}, we show that there are instances in the $0$-service time model where the competitive ratio of any randomized algorithm is $\mathcal{O}(1/n)$, even with predictions. Thus, achieving approximation guarantees that scale with the prediction error in this setting is not possible, motivating our use of the $1$-service time model. Intuitively, the reason why this happens is that in the $1$-service time model, the algorithm can recover from possible mistakes done due to the prediction errors by skipping requests that the optimal services with a delay of $1$. A similar approach is not possible in the $0$ service time setting, which is precisely why we get the above hardness result.  

%\newpage
\section{Many-to-one matching}\label{sec:extension_proofs}
In this section we will consider the many-to-one matching model for prediction errors introduced in Section~\ref{sec:approach}. We will prove Theorem~\ref{t:multi-matching}:

%Finally, in this section we present the proofs of our claims for all the extensions of our model that Section~\ref{sec:approach} addresses, namely Theorem~\ref{t:multi-matching} and Corollaries~\ref{cor:incomplete_matching}  and~\ref{c:random-rewards}. The proofs for the two corollaries rely on straight-forward adaptations of Lemmas~\ref{l:service-costs} through~\ref{l:online-alg} in order to be applicable in the respective generalized models, and are thus moved to Appendix~\ref{s:app-ext-proofs}. On the other hand, the proof of Theorem~\ref{t:multi-matching} is significantly more involved, and is thus presented in this section. We restate the theorem for the reader's convenience.

\newtheorem*{theorem:multi-matching}{Theorem~\ref*{t:multi-matching}}
\begin{theorem:multi-matching}
    Given an instance $(G, I, I', M)$ of the TW-TSP with predictions where $M$ is a many-to-one matching with errors as defined above, and satisfying $\twem\le\lmin/2$ and $\locem\leq \lmin/2$, there exists a polynomial-time online algorithm that takes the tuple $(G, I', \locem)$ as offline input and $I$ as online input, and constructs a walk $W\in \mathcal{W}(G)$ such that
    \[\mathbb{E}[\rew(W,I)]\ge \frac{1}{O(\locem\cdot \rewem^2\cdot \log \min(D,\lmax))}\cdot \opt(G,I)\]
\end{theorem:multi-matching}

\paragraph{Proof of Theorem~\ref{t:multi-matching}.} Clearly, Lemmas~\ref{l:service-costs} and~\ref{l:offline-alg} still hold in the many-to-one matching setting, as they are independent of the matching definition and simply state properties of the (offline) TW-TSP problem with service times. Furthermore, Lemma~\ref{l:relating-opts} only considers the matching \textit{from} the true requests \textit{to} the predictions. Since each request is still matched to a unique prediction, our analysis goes through without any changes. Actually, observe that our definition of the location error $\locem$ is slightly stronger in that case, as it also includes the service time and the return time. As a consequence, we can weaken our assumptions to $\locem \leq \lmin/2$ and work with a service time of $S'=\locem$ rather than $2\locem+1$.

We combine all these components to pre-compute a walk $W'\in\mathcal{W}(G)$ such that
\[\rew(W',I',S')\ge \frac{1}{O(\locem\cdot\rewem\cdot\log \min(D,\lmax))}\cdot \opt(G,I)\]
for $S'=\locem$, assuming that $\locem \le \lmin/2$ and $\twem\le \lmin/2$. It remains to address the online component of our analysis, namely to argue how we can construct an online walk $W\in\mathcal{W}(G)$ such that
\[\E\left[\rew(W,I,1)\right]\ge \frac{1}{\mathcal{O}(\rewem)}\cdot \rew(W',I',S').\]

Recall that in the (standard) one-to-one matching setting, this is achieved via Algorithm~\ref{alg:matching_alg} whose formal guarantee is stated in Lemma~\ref{l:online-alg}. Algorithm~\ref{alg:matching_alg} constructs a matching between $I'$ and $I$ in an online fashion. For this many-to-one setting, we have to cover more than one true request per prediction. In order to achieve this, we solve instances of the \textit{Orienteering} problem, rooted at each predicted request we visit.

At a high level, our algorithm follows the walk $W'$, but when it reaches a predicted request $\sigma'_i$, it considers taking a detour of length $S'$ to gather as much reward from servicing true requests that are available at that point of time as possible. To that end, we update our definition of  "reachable" true requests in order to \textit{just} address the discrepancy in the predicted time-windows. 

\begin{definition}
Given a partial walk $W$ that is at request $\sigma'_i\in I'$ at time $t$, we define the set of reachable requests $R_i(W,t)$ to be the set of all $\sigma\in I$ such that $r_\sigma\le t\le d_\sigma-\ell(v'_i, v_\sigma)-1$.
\end{definition}

Notice that this definition is weaker than the one provided for the one-to-one setting, as the closeness constraint over the reachable requests will be imposed by restricting the length of the detour when approximating the corresponding orienteering problem. Our algorithm considers all of the reachable requests that have not been covered by the walk as of yet, and approximates the maximum reward cycle of length $S'$ that starts on the predicted request, covers a subset of the reachable true requests that have not been covered and then returns to the predicted vertex. Once again, in order to address possible errors in the predicted time-windows, our algorithm starts the walk a little early, or on time, or a little late. The algorithm is described below formally.

\begin{algorithm}[h!]
\caption{Online algorithm for TSP-TW with predictions in the one to many matching setting.}
\label{alg:matching_alg_one_to_many}
{\bf Offline input:} Graph $G$, predicted requests $I'$, walk $W'\in \mathcal{W}(G)$, service times $S'$.\\
{\bf Online input:} True requests $I$.\\
{\bf Output:} Walk $W\in \mathcal{W}(G)$.
\begin{algorithmic}[1]
    \State Let $K=\lmin/2$. Select $\epsilon$ uniformly at random from $\{-1, 0, 1\}$.
    %Select a walk $W_{shift}$ from the set $\{W',W'_+, W'_-\}$ uniformly at random.
    \State Define the set of covered requests $C=\emptyset$. 
    \For{$i\gets 1$ to $|Cov(W',I',S')|$}
        \State Let $t'_i$ denote the time at which $W'$ visits $\sigma'_i$. 
        \State Set $t_i\gets t'_i+\epsilon K$.
        \State Visit $v'_i$ at time $t_i$.
        \State Construct the set $R_i(W,t_i)$ of requests in $I$ reachable at time $t_i$. 
        \State Use $\mathrm{ORIEN}$ to compute a cycle $W_i$ of length $|W_i|\le S'$ that is rooted at vertex $v'_i$ and collects as much reward as possible from the set $C_i\subseteq R_i(W,t_i)\setminus C$ with a service time of $1$.
        \State Let $C_i\subseteq R_i(W,t_i)\setminus C$ be the set of requests covered by $\mathrm{ORIEN}$.
        \State Take a detour by following $W_i$ and servicing requests in $C_i$; on return, remain idle for $S'-|W_i|\geq 0$ steps.
        \State Set $C\gets C\cup C_i$.
    \EndFor
\end{algorithmic}
\end{algorithm}

Our algorithm uses the $(2+\epsilon)$-approximation algorithm for rooted orienteering that is presented in~\cite{CKP12} as a sub-routine to decide which true requests will be served during the service times. We refer to this algorithm as $\mathrm{ORIEN}$. We comment that typically the end vertex is not specified in an instance of Orienteering, but nonetheless we can try all possible final vertices $v_j$ and shorten the walk length from $S'$ to $S'-\ell(v'_i,v_j)$, all in polynomial time. Furthermore, Orienteering is typically defined with service time $0$, but we can still address this by adding extra edges on the graph of length $1/2$ to new, \textit{request vertices}, and move the requests there with a service time of $0$. That way, service times can be expressed as edge lengths.

Since $\tau_M$ is defined as the maximum time-window error among all \textit{matched} pairs, and furthermore $\tau_M \leq K = \lmin/2$ and $\locem \geq 1 + 2\ell(v_{\sigma'},v_\sigma)$ for all $\sigma$ such that $M(\sigma)=\sigma'$, we can directly apply the proof of Claim~\ref{c:availability} to get the following:
\begin{claim}\label{c:availablity_one_to_many}
    For each $i$ such that $\sigma'_i\in \cov(W',I',S')$, and each $j$ such that $\sigma'_i = M(\sigma_j)$, it holds that $\sigma_j\in R_i(W,t_i)$ with probability at least $1/3$. 
\end{claim}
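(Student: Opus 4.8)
The plan is to reuse the argument in the proof of Claim~\ref{c:availability} almost verbatim, after noting the two places where the many-to-one setting differs. First, the reachability set here is \emph{weaker}: since the closeness of the served true requests is now enforced by restricting the detour length inside $\mathrm{ORIEN}$ rather than by the definition of $R_i(W,t)$, membership in $R_i(W,t)$ only requires $r_\sigma\le t\le d_\sigma-\ell(v'_i,v_\sigma)-1$. So it suffices to exhibit, for each fixed $i$ with $\sigma'_i\in\cov(W',I',S')$ and each $j$ with $\sigma'_i=M(\sigma_j)$, at least one offset $\epsilon\in\{-1,0,1\}$ for which the shifted visit time $t_i=t'_i+\epsilon K$ lands in $[\,r_j,\ d_j-1-\ell(v'_i,v_j)\,]$; the claim then follows since $\epsilon$ is uniform on a set of size three and the three cases below are mutually exclusive.

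Second, I would record the parameter facts I will use. From $\sigma'_i\in\cov(W',I',S')$ we have $t'_i\in[\,r'_i,\ d'_i-S'\,]$. From $\sigma'_i=M(\sigma_j)$ together with $\twem\le K=\lmin/2$ we have $|r_j-r'_i|\le K$ and $|d_j-d'_i|\le K$. From the many-to-one definition of $\locem$, the degenerate tour ``go from $v'_i$ to $v_j$, idle one unit, return'' is a feasible orienteering tour rooted at $v'_i$, so $2\ell(v'_i,v_j)+1\le\locem=S'$, hence $\ell(v'_i,v_j)\le(S'-1)/2\le\lmin/4$. Finally $d_j-r_j\ge\lmin$. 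Now I would split on the location of $t'_i$ relative to the target interval: if $t'_i$ already lies in it, take $\epsilon=0$; if $t'_i<r_j$, take $\epsilon=1$, where $t'_i+K\ge r'_i+K\ge r_j$ and, by integrality, $t'_i+K\le r_j+K-1\le d_j-1-\ell(v'_i,v_j)$ using $d_j-r_j\ge\lmin$ and $K+\ell(v'_i,v_j)\le\lmin$; and if $t'_i>d_j-1-\ell(v'_i,v_j)$, take $\epsilon=-1$, where $t'_i-K\le d'_i-S'-K\le d_j-S'\le d_j-1-\ell(v'_i,v_j)$ (using $\ell(v'_i,v_j)+1\le S'$) and, by integrality, $t'_i-K\ge d_j-\ell(v'_i,v_j)-K\ge r_j+\lmin/4\ge r_j$.

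The only obstacle worth flagging is the arithmetic in the last two cases: one must keep track of the fact that the detour budget here is $S'=\locem$ (rather than $2\locem+1$ as in Algorithm~\ref{alg:matching_alg}) and that the location hypothesis has been weakened to $\locem\le\lmin/2$. The reason it still goes through is precisely the bound $\ell(v'_i,v_j)\le(\locem-1)/2$, which leaves just enough slack, while integrality of release times, deadlines, and edge lengths is what converts the strict $t'_i$-inequalities of the two boundary cases into the required weak bounds on $t_i$. With these checks in place the statement is immediate, matching the sentence preceding the claim that this is ``directly'' the proof of Claim~\ref{c:availability}.
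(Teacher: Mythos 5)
Your proposal is correct and takes essentially the same route as the paper, which simply asserts that the proof of Claim~\ref{c:availability} applies directly because $\twem\le K=\lmin/2$ and $2\ell(v'_i,v_j)+1\le\locem=S'$ for matched pairs; you spell out the case analysis that the paper leaves implicit, with the parameter bookkeeping ($S'=\locem$, $\ell(v'_i,v_j)\le(\locem-1)/2\le\lmin/4$) done correctly.
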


Next, let $N_i\subseteq I$ be the set of requests that are matched to prediction $\sigma'_i\in\cov(W',I',S')$. Let $U_i\subseteq N_i$ be the set of requests in $I$ that are mapped to prediction $\sigma'_i$ \textit{and} were not covered by the algorithm during the detours for predictions $\sigma'_1$ through $\sigma'_{i-1}$; i.e.  $U_i = \{\sigma_j: \sigma'_i = M(\sigma_j)\}\setminus \cup_{j<i}C_{j}$. By definition of the reward error we have that $\pi'_i \le \rewem \cdot \sum_{j:\sigma_j\in N_i}\pi_j$ and thus we get that
\[\rew(W',I',S') \le {\rewem}\cdot \sum_{i: \sigma'_i\in\cov(W',I',S')}\left(\sum_{j:\sigma_j\in N_i}\pi_j\right)\]

To account for the reward covered by the walk $W$ constructed by the algorithm, we employ the following charging scheme. Every time the algorithm takes a detour during the service time of some prediction $\sigma'_i$ and serves some request $\sigma\in C_i$ that is matched to prediction $\hat{\sigma}'$ (i.e. $\hat{\sigma}'=M(\sigma)$), we credit half of the reward to prediction $\sigma'_i$ and half of the reward to prediction $\hat{\sigma}'$. We use $\cred(\sigma')$ to denote the total credit received by prediction $\sigma'\in I'$. 

\begin{claim}
    For each $\sigma'_i\in \cov(W',I',S')$, it holds $\mathbb{E}\left[\cred(\sigma'_i)\right]\geq \Omega(1)\cdot \sum_{j:\sigma_j\in N_i}\pi_j$.
\end{claim}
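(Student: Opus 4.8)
The plan is to bound, for one fixed prediction $\sigma'_i\in\cov(W',I',S')$, the credit that accumulates on $\sigma'_i$ over the execution of Algorithm~\ref{alg:matching_alg_one_to_many}, whose only randomness is the draw of $\epsilon\in\{-1,0,1\}$. First I would classify the requests of $N_i$ by which of the three time-shifts exposes them. For $e\in\{-1,0,1\}$ let $N_i^e\subseteq N_i$ be the set of $\sigma_j\in N_i$ that belong to $R_i(W,t_i)$ in the (deterministic) run with $\epsilon=e$. Whether $\sigma_j\in R_i(W,t_i)$ depends only on $e$, on the fixed visiting time $t'_i$, and on the data of $\sigma_j$ --- not on the history of already-covered requests --- so this is a well-defined partition that does not depend on the coin flip. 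Claim~\ref{c:availablity_one_to_many}, together with the fact that $\epsilon$ is uniform on three values, says every $\sigma_j\in N_i$ lies in $N_i^e$ for at least one $e$; hence if $e^\star\in\{-1,0,1\}$ maximizes $\sum_{j:\sigma_j\in N_i^{e^\star}}\pi_j$ then $\sum_{j:\sigma_j\in N_i^{e^\star}}\pi_j\ge\frac13\sum_{j:\sigma_j\in N_i}\pi_j$, and $e^\star$ is determined by the instance alone.

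I would then condition on the event $\{\epsilon=e^\star\}$, which has probability $\frac13$ and under which the entire run is deterministic, and split on how much of the reward of $N_i^{e^\star}$ the final walk $W$ ever covers. \textbf{Case 1:} $W$ covers requests of total reward at least $\frac12\sum_{j:\sigma_j\in N_i^{e^\star}}\pi_j$. Every such covered request $\sigma_j$ satisfies $M(\sigma_j)=\sigma'_i$, so by the charging rule it deposits $\pi_j/2$ on $\sigma'_i$; summing gives $\cred(\sigma'_i)\ge\frac14\sum_{j:\sigma_j\in N_i^{e^\star}}\pi_j$. \textbf{Case 2:} otherwise. Then at the time $t_i$ when the algorithm reaches $\sigma'_i$, the still-uncovered part of $N_i^{e^\star}$ carries reward exceeding $\frac12\sum_{j:\sigma_j\in N_i^{e^\star}}\pi_j$, and since $\epsilon=e^\star$ this set sits inside $R_i(W,t_i)\setminus C$. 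Shortcutting the location-error tour of $\sigma'_i$ --- which by definition starts and ends at $v'_i$, visits every request of $N_i$ with one unit of idle time, and has length at most $\locem=S'$ --- yields, by the triangle inequality, a rooted cycle of length at most $S'$ that covers exactly this uncovered subset; this is a feasible solution for the orienteering instance $\mathrm{ORIEN}$ solves at $\sigma'_i$. Hence $\mathrm{ORIEN}$ returns a set $C_i$ of reward at least a constant fraction of $\frac12\sum_{j:\sigma_j\in N_i^{e^\star}}\pi_j$, and the charging rule deposits half of $\sum_{\sigma\in C_i}\pi_\sigma$ on $\sigma'_i$, so again $\cred(\sigma'_i)=\Omega(1)\cdot\sum_{j:\sigma_j\in N_i^{e^\star}}\pi_j$.

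In both cases, conditioned on $\epsilon=e^\star$ we obtain $\cred(\sigma'_i)=\Omega(1)\cdot\sum_{j:\sigma_j\in N_i^{e^\star}}\pi_j=\Omega(1)\cdot\sum_{j:\sigma_j\in N_i}\pi_j$ deterministically. Since $\cred(\sigma'_i)\ge0$ always and $\Pr[\epsilon=e^\star]=\frac13$, the inequality survives taking the expectation, which is exactly the claim.

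\textbf{Main obstacle.} The delicate point is Case 2: one has to absorb three independent losses at once --- the factor $3$ for not knowing the correct time-shift, the factor $2$ for reward that an earlier detour has already collected, and the orienteering approximation factor --- while certifying that the detour $\mathrm{ORIEN}$ actually produces is executable, i.e.\ that it fits inside the $S'$ idle units $W'$ reserves at $v'_i$ \emph{and} still reaches every chosen true request within its (error-perturbed) time window. The first is guaranteed by the length budget $S'$ hardwired into $\mathrm{ORIEN}$; the second is precisely what the reachable set $R_i(W,t_i)$ and the random offset $t_i=t'_i+\epsilon K$ are designed to deliver (this is the content of, and the reason for, Claim~\ref{c:availablity_one_to_many}); and the existence of a \emph{profitable} short detour, not merely a short one, is where the many-to-one definition of $\locem$ and the choice $S'=\locem$ enter, through the shortcutting step above.
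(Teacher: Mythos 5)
Your proof is correct, and it uses the same core ingredients as the paper's: the credit that $\sigma'_i$ collects as the matched prediction of requests in $N_i$ covered by earlier detours, the credit it collects as the host of its own detour, the feasibility of a shortcut of the length-$\locem$ "location-error tour" as a witness cycle for the orienteering instance, and Claim~\ref{c:availablity_one_to_many} for reachability. Where you genuinely differ is in the bookkeeping of the randomness over $\epsilon$. The paper splits $N_i$ into the (random) sets $N_i\setminus U_i$ and $U_i$ of requests covered before versus not covered before the $i$-th detour, credits the former directly, and applies the per-request $1/3$ reachability probability to the latter; since $U_i$ and the reachability events both depend on the same draw of $\epsilon$, this requires (and the paper leaves implicit) an argument that the two can be combined in expectation. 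You instead fix, by pigeonhole, a single shift $e^\star$ capturing at least a third of $\sum_{j:\sigma_j\in N_i}\pi_j$, condition on $\epsilon=e^\star$ (an event of probability exactly $1/3$, under which the run is deterministic), and then do a clean deterministic case split on whether at least half of the reward of $N_i^{e^\star}$ is ever covered. This sidesteps the correlation between the random set of already-covered requests and the reachability events, at the price of a slightly worse (still absolute) constant, roughly $1/(36(2+\epsilon_0))$ versus the paper's $1/(6(2+\epsilon_0))$-type bound; either suffices for the $\Omega(1)$ statement and for Theorem~\ref{t:multi-matching}.
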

\begin{proof}
    Fix any $\sigma'_i\in \cov(W',I',S')$. If $\sigma_j\in N_i\setminus U_i$, then $\sigma_j$ is covered by $W$ and since $\sigma'_i = M(\sigma_j)$, we get that $\cred(\sigma'_i)$ was incremented by $\pi_j/2$ at $\sigma_j$'s cover time. Thus,
    \begin{equation}\label{eq:one_to_many1}
       \cred(\sigma'_i) \ge \frac{1}{2}\cdot \sum_{j\in N_i\setminus U_i}\pi_j     
    \end{equation}

    We now analyze the requests $\sigma_j\in U_i$ that have not been covered prior to $\sigma'_i$'s detour. Observe that (i) each of these requests is reachable with probability at least $1/3$ due to Claim~\ref{c:availablity_one_to_many}, (ii) the set of reachable requests in $U_i$ can clearly by covered by one cycle of length $S'$ (due to $S'=\locem$ and $\locem$'s definition), and (iii) the use of $\mathrm{ORIEN}$ guarantees as a $(2+\epsilon)$-approximation of the maximum reward cycle, we get that the expected total reward collected during the detour of $\sigma'_i$ will be at least $\frac{1}{(2+\epsilon)}\cdot (1/3)\cdot \sum_{j:\sigma_j\in U_i}\pi_j$. Since half of this reward will be stored to $\sigma'_i$, we have
    \begin{equation}\label{eq:one_to_many2}
       \mathbb{E}\left[\cred(\sigma'_i)\right]\ge \frac{1}{6(2+\epsilon)}\cdot \sum_{j\in U_i}\pi_j     
    \end{equation}
    The claim is derived from equations~\eqref{eq:one_to_many1} and~\eqref{eq:one_to_many2}.
\end{proof}

\noindent The proof of Theorem~\ref{t:multi-matching} is completed via the above claim. In particular, we get that
\[\rew(W,I,S) = \sum_{\sigma'\in I'} \cred(\sigma')\ge \frac{\Omega(1)}{\rewem} \cdot \rew(W',I',S').\]

%\newpage
%\section{Conclusion}\label{sec:conclusion}
%\input{conclusion}%%%%%%%%%%%%%%%%%%%%%%%%%%%%%%%%%%%%%%%%%%%%%%%%%
%%%%%%%%%%%%%%%%%%%%%%%%%%%%%%%%%%%%%%%%%%%%%%%%%

%%%%%%%%%%%%%%%%%%%%%%%%%%%%%%%%%%%%%%%%%%%%%%%%%
%%%%%%%%%%%%%%%%%%%%%%%%%%%%%%%%%%%%%%%%%%%%%%%%%

%%%%%%%%%%%%%%%%%%%%%%%%%%%%%%%%%%%%%%%%%%%%%%%%%
%%%%%%%%%%%%%%%%%%%%%%%%%%%%%%%%%%%%%%%%%%%%%%%%%

% Input the bibliography in a new page
\newpage
\bibliographystyle{alpha}
\bibliography{bibliog}

\newcommand{\etalchar}[1]{$^{#1}$}
\begin{thebibliography}{BNBYF{\etalchar{+}}00}

\bibitem[AGKK20]{AGKK20}
Antonios Antoniadis, Themis Gouleakis, Pieter Kleer, and Pavel Kolev.
\newblock Secretary and online matching problems with machine learned advice.
\newblock In {\em Annual Conference on Neural Information Processing Systems
  (NeurIPS)}, 2020.

\bibitem[ALT21]{ALT21}
Yossi Azar, Stefano Leonardi, and Noam Touitou.
\newblock Flow time scheduling with uncertain processing time.
\newblock In {\em Symposium on Theory of Computing (STOC)}, 2021.

\bibitem[AMN98]{AMN98}
Esther~M. Arkin, Joseph B.~M. Mitchell, and Giri Narasimhan.
\newblock Resource-constrained geometric network optimization.
\newblock In {\em Symposium on Computational Geometry (SoCG)}, 1998.

\bibitem[APT21]{APT21}
Yossi Azar, Debmalya Panigrahi, and Noam Touitou.
\newblock Online graph algorithms with predictions.
\newblock In {\em ACM-SIAM Symposium on Discrete Algorithms (SODA)}, 2021.

\bibitem[AV15]{azar15}
Yossi Azar and Adi Vardi.
\newblock {TSP} with time windows and service time.
\newblock {\em CoRR}, abs/1501.06158, 2015.

\bibitem[BBCM04]{BBCM04}
Nikhil Bansal, Avrim Blum, Shuchi Chawla, and Adam Meyerson.
\newblock Approximation algorithms for deadline-{TSP} and vehicle routing with
  time-windows.
\newblock In {\em Symposium on the Theory of Computing (STOC)}, 2004.

\bibitem[BBFL23]{BBFL23}
Magnus Berg, Joan Boyar, Lene~M. Favrholdt, and Kim~S. Larsen.
\newblock Online minimum spanning trees with weight predictions.
\newblock {\em CoRR}, abs/2302.12029, 2023.

\bibitem[BCK{\etalchar{+}}03]{BCKLM03}
Avrim Blum, Shuchi Chawla, David Karger, Terran Lane, and Maria Minkoff.
\newblock Approximation algorithms for orienteering and discounted-reward
  {TSP}.
\newblock In {\em Foundations of Computer Science (FOCS)}, 2003.

\bibitem[BFM{\etalchar{+}}21]{BFMMN21}
Niv Buchbinder, Yaron Fairstein, Konstantina Mellou, Ishai Menache, and
  Joseph~(Seffi) Naor.
\newblock Online virtual machine allocation with lifetime and load predictions.
\newblock In {\em International Conference on Measurement and Modeling of
  Computer Systems (SIGMETRICS)}, 2021.

\bibitem[BLMS{\etalchar{+}}22]{bernardini2022universal}
Giulia Bernardini, Alexander Lindermayr, Alberto Marchetti-Spaccamela, Nicole
  Megow, Leen Stougie, and Michelle Sweering.
\newblock A universal error measure for input predictions applied to online
  graph problems.
\newblock {\em CoRR}, abs/2205.12850, 2022.

\bibitem[BMS20]{BMS20}
{\'{E}}tienne Bamas, Andreas Maggiori, and Ola Svensson.
\newblock The primal-dual method for learning augmented algorithms.
\newblock In {\em Annual Conference on Neural Information Processing Systems
  (NeurIPS)}, 2020.

\bibitem[BNBYF{\etalchar{+}}00]{BBFNS01}
Amotz Bar-Noy, Reuven Bar-Yehuda, Ari Freund, Joseph Naor, and Baruch Schieber.
\newblock A unified approach to approximating resource allocation and
  scheduling.
\newblock In {\em Symposium on the Theory of Computing (STOC)}, 2000.

\bibitem[BNGNS99]{BGNS01}
Amotz Bar-Noy, Sudipto Guha, Joseph Naor, and Baruch Schieber.
\newblock Approximating the throughput of multiple machines under real-time
  scheduling.
\newblock In {\em Symposium on the Theory of Computing (STOC)}, 1999.

\bibitem[BYES05]{BYES05}
Reuven Bar-Yehuda, Guy Even, and Shimon Shahar.
\newblock On approximating a geometric prize-collecting traveling salesman
  problem with time windows.
\newblock {\em J. Algorithms}, 55:76--92, 2005.

\bibitem[CHP06]{CHP06}
Ke~Chen and Sariel Har-Peled.
\newblock The orienteering problem in the plane revisited.
\newblock In {\em Symposium on Computational Geometry (SoCG)}, 2006.

\bibitem[CK04]{CK04}
Chandra Chekuri and Amit Kumar.
\newblock Maximum coverage problem with group budget constraints and
  applications.
\newblock In {\em International Workshop on Approximation, Randomization, and
  Combinatorial Optimization (APPROX)}, 2004.

\bibitem[CKP08]{CKP12}
Chandra Chekuri, Nitish Korula, and Martin P{\'a}l.
\newblock Improved algorithms for orienteering and related problems.
\newblock In {\em ACM-SIAM Symposium on Discrete Algorithms (SODA)}, 2008.

\bibitem[DIL{\etalchar{+}}21]{DILMV21}
Michael Dinitz, Sungjin Im, Thomas Lavastida, Benjamin Moseley, and Sergei
  Vassilvitskii.
\newblock Faster matchings via learned duals.
\newblock In {\em Annual Conference on Neural Information Processing Systems
  (NeurIPS)}, 2021.

\bibitem[DLLV21]{DLLV21}
Paul D{\"{u}}tting, Silvio Lattanzi, Renato~Paes Leme, and Sergei
  Vassilvitskii.
\newblock Secretaries with advice.
\newblock In {\em {ACM} Conference on Economics and Computation ({EC})}, 2021.

\bibitem[EdLMS22]{ErlebachLMS22}
Thomas~Wilhelm Erlebach, Murilo~Santos de~Lima, Nicole Megow, and Jens
  Schloter.
\newblock Learning-augmented query policies for minimum spanning tree with
  uncertainty.
\newblock In {\em Embedded Systems and Applications ({ESA})}, 2022.

\bibitem[EFS{\etalchar{+}}22]{EFSWZ22}
Jon~C. Ergun, Zhili Feng, Sandeep Silwal, David Woodruff, and Samson Zhou.
\newblock Learning-augmented $k$-means clustering.
\newblock In {\em International Conference on Learning Representations (ICLR)},
  2022.

\bibitem[GJMZ16]{GJJZ20}
Jie Gao, Su~Jia, Joseph S.~B. Mitchell, and Lu~Zhao.
\newblock Approximation algorithms for time-window {TSP} and prize collecting
  {TSP} problems.
\newblock In {\em Workshop on the Algorithmic Foundations of Robotics (WAFR)},
  2016.

\bibitem[GLS22]{gouleakis2022learningaugmented}
Themis Gouleakis, Konstantinos Lakis, and Golnoosh Shahkarami.
\newblock Learning-augmented algorithms for online {TSP} on the line.
\newblock {\em CoRR}, abs/2206.00655, 2022.

\bibitem[HWL{\etalchar{+}}22]{HWLCL22}
Hsiao{-}Yu Hu, Hao{-}Ting Wei, Meng{-}Hsi Li, Kai{-}Min Chung, and Chung{-}Shou
  Liao.
\newblock Online {TSP} with predictions.
\newblock {\em CoRR}, abs/2206.15364, 2022.

\bibitem[IKQP21]{IKQP21}
Sungjin Im, Ravi Kumar, Mahshid~Montazer Qaem, and Manish Purohit.
\newblock Non-clairvoyant scheduling with predictions.
\newblock In {\em {ACM} Symposium on Parallelism in Algorithms and
  Architectures ({SPAA})}, 2021.

\bibitem[JPS20]{JPS20}
Zhihao Jiang, Debmalya Panigrahi, and Kevin Sun.
\newblock Online algorithms for weighted paging with predictions.
\newblock In {\em International Colloquium on Automata, Languages, and
  Programming (ICALP)}, 2020.

\bibitem[KN03]{KN03}
Yoshiyuki Karuno and Hiroshi Nagamochi.
\newblock 2-approximation algorithms for the multi-vehicle scheduling problem
  on a path with release and handling times.
\newblock {\em Discret. Appl. Math.}, 129:433--447, 2003.

\bibitem[LLMV20]{LLMV20}
Silvio Lattanzi, Thomas Lavastida, Benjamin Moseley, and Sergei Vassilvitskii.
\newblock Online scheduling via learned weights.
\newblock In {\em ACM-SIAM Symposium on Discrete Algorithms (SODA)}, 2020.

\bibitem[LM]{ALPS-web}
Alexander Lindermayr and Nicole Megow.
\newblock Algorithms with predictions.
\newblock https://algorithms-with-predictions.github.io/.

\bibitem[LMRX21]{LMRX21}
Thomas Lavastida, Benjamin Moseley, R.~Ravi, and Chenyang Xu.
\newblock Learnable and instance-robust predictions for online matching, flows
  and load balancing.
\newblock In {\em European Symposium on Algorithms (ESA)}, 2021.

\bibitem[LV21]{LV21}
Thodoris Lykouris and Sergei Vassilvitskii.
\newblock Competitive caching with machine learned advice.
\newblock {\em J. {ACM}}, 68(4):24:1--24:25, 2021.

\bibitem[Roh20]{R20}
Dhruv Rohatgi.
\newblock Near-optimal bounds for online caching with machine learned advice.
\newblock In {\em {ACM-SIAM} Symposium on Discrete Algorithms ({SODA})}, 2020.

\bibitem[Tsi92]{T92}
John~N. Tsitsiklis.
\newblock Special cases of traveling salesman and repairman problems with time
  windows.
\newblock {\em Networks}, 22(3):263--282, 1992.

\bibitem[XM22]{XM22}
Chenyang Xu and Benjamin Moseley.
\newblock Learning-augmented algorithms for online steiner tree.
\newblock {\em Proceedings of the AAAI Conference on Artificial Intelligence
  (AAAI)}, 36:8744--8752, 2022.

\end{thebibliography}

% Start appendix on a new page
\newpage
\appendix
\section*{Appendix}

\section{Omitted proofs from Section~\ref{sec:relating_opt}}
\label{s:app-4proofs}

\newtheorem*{lemma:service-costs-lb}{Lemma~\ref*{l:service-costs-lb}}
\begin{lemma:service-costs-lb}
    For any pair of integers $(L,S)$ such that $L \geq 2S-2\geq 1$, there exists a rooted instance $(G,I)$ of the TW-TSP with service costs such that $L_{min} = L$ and \[\opt(G,I,S) = \frac{1}{2S-1}\cdot\opt(G,I,1).\]
\end{lemma:service-costs-lb}
\begin{proof}
    Consider a line graph $G$ with $n=2S-1$ vertices, labeled $v_0$ through $v_{n-1}$. Each vertex $v_i$ is connected to vertices $v_{i-1}$ and $v_{i+1}$ through edges of length $\alpha>0$. Next, consider the request sequence
    $I = \{\sigma_i\}_{i=0}^{n-1}$ where $\sigma_0 = (v_0, 0, L, 1)$ and $\sigma_i = (v_i, d_i - L, d_i, 1)$ for all $i=1,\dotsc , n-1$, with $d_i = i\alpha + 2S -1$. Thus, we have one unit-reward request per vertex of $G$ that has a window length of $L$. The edge length $\alpha$ is irrelevant to the analysis and can be set to any value that guarantees non-negative release times, i.e. $\alpha = (L+1-2S)$.
    
    Observe that any walk that starts from $v_0$ at $t=0$ and services requests with a service cost of $S$ can only cover one request. Indeed, reaching vertex $v_j$ while having served one request cannot happen sooner than step $\alpha\cdot j + S > d_j - S$. Thus, $OPT(G,I,S)=1$ when the instance $(G,I)$ is rooted at vertex $v_0$.

    On the other hand, consider the walk that starts from $v_0$ and visits all vertices of $G$ sequentially while remaining idle for one time-step in all of them. Such a walk would remain idle on each vertex $v_j$ from $t_j=j + \alpha\cdot j$ until step $(t_j + 1)$. Since $n=2S-1$, it is always the case that $t_j \leq d_j - 1$. Furthermore, assuming that $L\geq 2S-2$, it is always the case that $t_j \geq r_j$. Thus, this walk successfully serves all requests in the unit service cost model, implying $OPT(G,I,1)=2S-1$.
\end{proof}

\bigskip
\section{Omitted proofs from Section~\ref{sec:offline}}
\label{s:app-5proofs}

\newtheorem*{lemma:large_window_approximation}{Lemma~\ref*{l:large_window_approximation}}
\begin{lemma:large_window_approximation}
        Given an instance of the TW-TSP (without service times) with $L_{min}\geq 4D$, there exists a polynomial time algorithm that achieves an $O(1)$ approximation.
\end{lemma:large_window_approximation}
\begin{proof}
Fix any instance $(G,I)$ of TW-TSP such that $L_{min}\geq 4D$. We begin by arguing that the large time-windows allow us to assume (within constant factors) that all release-times and deadlines are multiples of $K:=2D$. Indeed, let $I'$ be the request sequence derived from $I$ where each $\sigma_i = (v_i,r_i,d_i,\pi_i)\in I$ is substituted with a new request $\sigma'_i = (v_i, r'_i,d'_i,\pi_i)$ where each release time is rounded-up to the closest multiple of $K$ ($r'_i = K\cdot\lceil \frac{r_i}{K}\rceil$) and each deadline is rounded-down to the closest multiple of $K$ ($d'_i = K\cdot\lfloor \frac{d_i}{K}\rfloor$). We refer to such an instance as $K$-aligned. Notice that since $L_{min}\geq 4D = 2K$, no request is going to vanish from this transformation, meaning that $d'_i\geq r'_i + K$. Clearly, since $I'$ is identical to $I$ other than having smaller time-windows, we have that

\begin{equation}\label{eq:aligned1}
    \text{For any walk }W\in\mathcal{W}(G): \quad \rew(W,I,0)\geq \rew(W,I',0)
\end{equation}

Next, consider the walk $W^*\in\mathcal{W}(G)$ that achieves the optimal reward $\opt(G,I,0)$. Fix any request $\sigma_i\in I$ that is serviced by $W^*$ and let $t_i\in [r_i,d_i]$ be the step that servicing of the request starts. We distinguish between three different cases:
\begin{enumerate}
    \item If $t_i \in [r'_i, d'_i]$ then clearly walk $W^*$ also services request $\sigma'_i\in I'$.
    \item If $t_i < r'_i = K\cdot\lceil \frac{r_i}{K}\rceil$, then considered the walk $W^*_+$ that is identical to $W^*$ with the difference that it initially remains idle for $K$-steps. Such a walk would visit vertex $v_i$ at $t'_i = t_i + K$. It is not hard to see that from the definition of $r'_i$ and $d'_i$ as well as the fact that $d'_i\geq r'_i + K$, we easily obtain that $t'_i\in [r'_i,d'_i]$ and thus walk $W^*_+$ services request $\sigma'_i\in I'$.
    \item Likewise, if $t_i > d'_i =  K\cdot\lfloor \frac{d_i}{K}\rfloor$, we consider the walk $W^*_-$ that is identical to $W^*$ with the difference that it \textit{starts early} in the position that $W^*$ was at $t=K$. Such a walk would visit vertex $v_i$ at $t'_i = t_i - K$. Again, we can easily verify that $t'_i\in [r'_i,d'_i]$ and thus walk $W^*_-$ services request $\sigma'_i\in I'$. Observe that this argument wouldn't hold if $t_i < K$, as we skip the first $K$ time-steps of $W^*$, but from $t_i > d'_i \geq K$ we know that this event never happens.
\end{enumerate}
Thus, we have argued that for every request $\sigma_i\in I$ covered by the optimal walk $W^*$, there exists one of three walks that services request $\sigma'_i\in I'$. This implies that
\begin{equation}\label{eq:aligned2}
    \opt(G,I',0)\geq \frac{1}{3}\cdot \opt(G,I,0)
\end{equation}

From~\eqref{eq:aligned1} and~\eqref{eq:aligned2} we immediately obtain that any $\alpha$-approximation algorithm for the $K$-aligned instance $(G,I')$ can be trivially transformed into a $3\alpha$-approximation algorithm for instance $(G,I)$. Our next step is to obtain a constant approximation algorithm for TW-TSP on $K$-aligned instances via a non-standard reduction to the \textit{Orienteering} problem. We comment that a similar approach was taken in~\cite{azar15}. Our algorithm is stated in Algorithm~\ref{alg:aligned_approximation}, that is a $6$-approximation for $K$-aligned instances as we will shortly prove. Combining everything, we have designed a $18$-approximation algorithm for TW-TSP whenever $L_{min}\geq 4D$, as stated by the Lemma.

\begin{algorithm}[h!]
\caption{A constant approximation algorithm for TW-TSP in $2D$-aligned instances.}\label{alg:aligned_approximation}
\begin{algorithmic}[1]
    \smallskip
    \State Partition the time-horizon into phases of length $K$, with phase $i$ covering steps $[(i-1) K, iK]$.
    \smallskip
    \State Set $C\leftarrow \emptyset$.
    \smallskip
    \State At the beginning of phase $i$, do:
    \smallskip
    \State $\quad$ $Rem(i)\leftarrow \{\sigma_j\in I: r_j\leq (i-1)K \text{ and } d_j \geq i\cdot K\}\setminus C$.
    \smallskip 
    \State $\quad$ $P_i \leftarrow $ output of Orienteering sub-routine on graph $G$ with requests $Rem(i)$ and budget $K/2$.
    \smallskip 
    \State $\quad$ Travel from your current position $v_0$ to the root $v_i$ of path $P_i$ and execute the path $P_i$.
    \smallskip 
    \State $\quad$ Remain idle for the remaining $K-K/2 -\ell(v_0,v_i)\geq 0$ time-steps of the phase.
    \smallskip 
\end{algorithmic}
\end{algorithm}

Algorithm~\ref{alg:aligned_approximation} operates in phases of length $K$. Note that due to the alignment assumption, at the beginning of each phase we know the entire set of requests that will be available throughout this phase, as no request can arrive at a time-step that is not a multiple of $K$. We use $Rem(i)$ to refer to the set of these requests during phase $i$ that have not been already served by the algorithm. Clearly, this is precisely the maximal set of requests that the algorithm can hope to service in phase $i$. Furthermore, all deadlines will expire no sooner than the end of the phase, so the service order of the requests within the phase doesn't matter. 

Thus, our algorithm is faced with the decision of what is the maximum reward it can achieve by servicing requests in $Rem(i)$, using a path of length at most $K=2D$. This is precisely the definition of the well-studied \textit{Orienteering} problem, which is the special case of TW-TSP with $r_i=0$ and $d_i = D$ for all $i$. Due to the fact that the algorithm's position at the beginning of the phase might be inconvenient, we instead search for the best possible path of length $K/2$; the idea being that the other $K/2 = D$ time-steps can be used to travel at the root of the return walk, no matter where it lies on the graph. For the Orienteering sub-routine, we use the $2.5$-approximation deterministic algorithm of~\cite{CKP12}.

\paragraph{Analysis.} We are now ready to analyze the performance of Algorithm~\ref{alg:aligned_approximation}. We use $\alg(G,I)$ to denote the total reward achieved by our algorithm, and $\opt(G,I)$ to denote the optimal reward. Furthermore, we define $\adv$ to be the optimal offline algorithm on $(G,I)$ with the restriction that it does not gain any reward from servicing requests that are covered by $\alg$. Thus, if we use $\adv(G,I)$ to denote the total reward of this auxiliary algorithm, we have that $\opt(G,I) \leq \alg(G,I) + \adv(G,I)$.

Now fix some phase $i$ and let $A(i)\subseteq I$ to denote the set of requests covered by $\alg$ and $B(i)\subseteq I$ to denote the set of requests covered by $\adv$. The key observation is that all the requests in $B(i)$ need to (i) be released at the beginning of phase $i$, (ii) expire after the end of phase $i$ and (iii) not be covered by $\alg$ throughout its execution. Thus, we get $B(i)\subseteq Rem(i)$. This however implies that there exists some path of length $K$ (the one that $\adv$ follows in phase $i$) that gets a total reward of $\pi(B(i))$ (by slightly abusing notation) from requests in $Rem(i)$ and thus there also exists some path of length $K/2$ with total reward $\pi(B(i))/2$ from requests exclusively in $Rem(i)$.

Our argument is completed by the fact that the requests in $A(i)$ are computed via a $2.5$-approximation algorithm for the Orienteering problem. From the approximation guarantee of the used sub-routine, we know that each path of length $K/2$ that covers requests in $Rem(i)$ has a total reward of at most $2.5\cdot \pi(A(i))$. Thus, we get that $\pi(B(i)) \leq 5\cdot \pi(A(i))$ and thus $\alg(G,I) \geq 0.2 \cdot \adv(G,I)$. Combining this with $\opt(G,I) \leq \alg(G,I) + \adv(G,I)$, we get that Algorithm~\ref{alg:aligned_approximation} is a $6$-approximation for $K$-aligned instances with $K=2D$, completing the proof.
\end{proof}

\bigskip
\newtheorem*{lemma:tw_tsp_approx}{Lemma~\ref*{l:tw_tsp_approx}}
\begin{lemma:tw_tsp_approx}
    There exists an $O(\log \min(D,\lmax))$ approximation algorithm for TW-TSP.
\end{lemma:tw_tsp_approx}
\begin{proof}
Fix any instance $(G,I)$ of TW-TSP. We partition the request sequence $I$ into two sub-sequences $I_1$ and $I_2$, such that $I_1$ includes all requests $\sigma_j\in I$ with time-window $d_j-r_j\geq 4D$ and $I_2$ includes all the other requests. Since all requests in $I_1$ have a window-length of at least $4D$, we can use the algorithm of Lemma~\ref{l:large_window_approximation} to get a walk $W_1$ such that
\[\rew(W_1,I_1)\geq \frac{1}{\mathcal{O}(1)}\cdot\opt(G,I_1)\]

Furthermore, since all requests in $I_2$ have a window-length of at most $4D$, we can use the $\mathcal{O}(\log L_{max})$-approximation algorithm of~\cite{CKP12} to get a walk $W_2$ such that
\[\rew(W_2,I_2)\geq \frac{1}{\mathcal{O}(\log D)}\cdot\opt(G,I_2)\]

Clearly, since $I_1$ and $I_2$ form a partition of $I$ it holds that $\opt(G,I_1) + \opt(G,I_2) \geq \opt(G,I)$. Furthermore, for each $j\in\{1,2\}$ we have that $I_j\subseteq I$ and thus $\rew(W_j, I)\geq \rew(W_j,I_j)$. Combining these inequalities we immediately get that if we output the walk from $\{W_1,W_2\}$ that achieves the highest reward, we are guaranteed a $\mathcal{O}(\log D)$-approximation of $\opt(G,I)$. Combined once again with the $\mathcal{O}(\log \lmax)$ approximation of~\cite{CKP12}, we get the lemma.
\end{proof}

\bigskip
\newtheorem*{lemma:tw_tsp_sc_approx}{Lemma~\ref*{l:tw_tsp_sc_approx}}
\begin{lemma:tw_tsp_sc_approx}
        Given an instance $(G,I,S)$ of the TW-TSP with service times such that $S\le D$, there exists a polynomial time algorithm that achieves an $O(\log \min(\lmax,D))$ approximation.
\end{lemma:tw_tsp_sc_approx}
\begin{proof}
Fix any instance $(G,I)$ of TW-TSP-S with service cost $S$. We construct a new graph $G_0$ from $G$ in the following manner:
\begin{itemize}
    \item For each request $\sigma_i=(v_i,r_i,d_i,\pi_i)\in I$, we add a new vertex $v_{\sigma_i}$ on $G$. Thus, we have that $V(G_0) = V(G)\cup \{v_{\sigma_i} | \sigma_i \in I\}$.

    \item For each added vertex $v_{\sigma_i}$, we add a new edge that connects it to vertex $v_i$. Thus, we have that $E(G_0)=E(G)\cup\{(v_i,v_{\sigma_i})|\sigma_i\in I\}$.

    \item We double the edge length of each existing edge and set the edge length of each newly added edge to $S$, i.e. $\ell'_e = 2\ell_e$ for all $e\in E$ and $\ell'(v_i,v_{\sigma_i}) = S$ for all $\sigma_i\in I$. Notice that graph $G_0$ has diameter $D_0 \leq 2(D+S)$. 
\end{itemize}

Next, we construct a request sequence $I_0$ from $I$, by substituting each $\sigma_i = (v_i,r_i,d_i,\pi_i)\in I$ with a new request $\sigma'_i = (v_{\sigma_i}, 2r_i+S, 2d_i - S, \pi_i)\in I_0$. Notice that $I_0$ has maximum window length $\lmax^0\leq 2\lmax$. As we will see, the instance $(G,I)$ for TW-TSP-S with service cost $S$ and the instance $(G_0, I_0)$ for TW-TSP without service cost are essentially equivalent. Once this is established, we can use the approximation algorithm of Lemma~\ref{l:tw_tsp_approx} for TW-TSP to approximate instance $(G_0,I_0)$ up to a factor of $\mathcal{O}(\log \min(\lmax^0, D_0)) = \mathcal{O}(\log\min (\lmax, D+S))$ which will immediately translate into a $\mathcal{O}(\log \min(\lmax, D+S))$-approximation algorithm for TW-TSP-SC with service cost $S$. Finally, we obtain the proof of Lemma~\ref{l:tw_tsp_sc_approx} from the assumption $S\leq D$.

For the rest of the proof, we establish that any solution of $(G,I,S)$ can be transformed into an equal reward solution of $(G_0,I_0,0)$ and vice versa. The proof is almost immediate from our definition of the graph $G_0$ and the sequence $I_0$. Let $W_0\in\mathcal{W}(G_0)$ be any walk that services some requests $\{\sigma'_i\}\in I_0$ with zero service cost. If $W_0$ services some request $\sigma'_i\in I_0$, then we know that it visits vertex $v_{\sigma_i}$ at some point in $[2r_i+S,2d_i-S]$. Since there is only one way to reach vertex $v_{\sigma_i}$, we can deduce that walk $W_0$ visits vertex $v_i$ no sooner than $t=2r_i$ and it returns back to it after $2S$ time-steps, no later than $t=2d_i$.  

Let $W\in\mathcal{W}(G)$ be the walk that visits vertices in the same order as $W_0$, but instead of taking detours of the form $(v_i, v_{\sigma_i}, v_i)$, it simply remains idle on $v_i$. Also, due to the doubling of edges, all times are cut in half. Since $W_0$ visits each  vertex $v_{i}$ after $t=2r_i$ and takes a detour of length $2S$, returning by $t=2d_i$, we immediately get that this walk $W$ will service request $\sigma_i\in I$ with a service cost of $S$. Thus, $W$ will collect as much reward in instance $(G,I,S)$ as $W_0$ collects in instance $(G_0,I_0,0)$.

The proof of the other direction (starting from a walk in $G$ with service cost $S$ and achieving a same-reward walk in $G_0$ with service cost $0$) is completely analogous, and it is thus omitted.
\end{proof}

\bigskip
\newtheorem*{lemma:large_service_cost}{Lemma~\ref*{l:large_service_cost}}
\begin{lemma:large_service_cost}
    Given an instance $(G,I,S)$ of the TW-TSP with service costs such that $S\le D$, there exists a polynomial time algorithm that achieves an $O(1)$ approximation.
\end{lemma:large_service_cost}

\begin{proof}
Fix any instance $(G,I)$ of the TW-TSP-SC with service cost $S\geq D$. We begin by reducing this instance to the case of uniform graphs. Let $G_U$ be a new graph that is constructed by $G$ via connecting all vertices and increasing all edge length to the diameter of $G$; i.e., $G_U$ is the complete graph with $V(G_U)=V(G)$ and edge lengths $D$. Observe that since we only increase distances, the instance $(G_U,I)$ is at least as hard as the instance $(G,I)$, implying that for any walk $W_U\in\mathcal{W}(G_U)$, there exists some other walk $W\in\mathcal{W}(G)$ such that 
\begin{equation}\label{eq:rew_in_uniform}
\rew(W,G,I,S)\geq \rew(W_U,G_U, I, S),   
\end{equation}
namely the walk $W$ that follows $W_U$ and adds suitable idle times whenever distances in $G_U$ are larger than in $G$.

Next, we will show that the optimals of these instances can only differ by a constant factor. To see this, let $W^*\in\mathcal{W}(G)$ be the walk that achieves the optimal reward $\opt(G,I,S)$ and let $\{\sigma_i\}\subseteq I$ be the sequence of requests that $W^*$ covers, ordered in the sequence at which they are visited by $W^*$. We will now construct two walks $W_o,W_e\in\mathcal{W}(G_U)$ for instance $(G_U,I)$, respectively servicing all odd requests $\sigma_{2i+1}$ and all even requests $\sigma_{2i}$. 

Since the distance between any two vertices is increased by at most $D$ from $G$ to $G_U$ and the service cost satisfies $S\geq D$, we get that by skipping the service time of every other request in the sequence $\{\sigma_i\}$ and following $W^*$, we gain enough time to make up for the increased distances. Formally, $W_e$ is derived from $W^*$ by skipping an idle time of $S'-\ell(v_{2i+1},v_{2i+2}) \geq 0$ on each odd request $\sigma_{2i+1}$ and likewise, $W_o$ is derived from $W^*$ by skipping an idle time of $S'-\ell(v_{2i},v_{2i+1}) \geq 0$ on each even request $\sigma_{2i}$. We comment that for $W_o$, in order to service the first request (since skipping before it is not possible), we can simply modify our walk to start from vertex $v_1$ and add appropriate idle time. Thus, we acquire
\begin{equation}\label{eq:opt_in_uniform}
\opt(G_U,I,S)\geq \frac{1}{2}\cdot \opt(G,I,S)
\end{equation}

From~\eqref{eq:rew_in_uniform} and~\eqref{eq:opt_in_uniform}, we immediately get that any $\alpha$-approximation algorithm for TW-TSP-SC on uniform graphs directly implies a $2\alpha$-approximation algorithm for TW-TSP-SC in general graphs with $D\leq S$. Our proof is completed by the fact that TW-TSP-SC on uniform graphs is essentially equivalent to the job scheduling problem studied in~\cite{BGNS01} and~\cite{BBFNS01}, for which constant approximation algorithms are known. 

To see this, observe that since all distances are set to $D$, we can remove the travelling component from the formulation of the problem. Specifically, instead of requiring each request $\sigma_i$ to be serviced in some interval $[t,t+S]$ for $t\in [r_i, d_i-S]$, we can require it to be serviced in some interval $[t,t+S+D]$ for $t\in [r_i, d_i - S]$ that also incorporates the travelling time of $D$ that is required to get to the next request. Thus, we can view each request $\sigma_i$ as a job $J_i$ with release time $r_i$, deadline $d_i+D$, processing time $S+D$ and reward $\pi_i$. Then, our objective is to schedule these jobs in a single machine as to not have any overlaps and maximize the total reward of the jobs processed within their time-windows. This is precisely the formulation of the problem studied in~\cite{BGNS01}, where a $3$-approximation deterministic algorithm is given, later improved to a $(2+\epsilon)$-approximation deterministic algorithm in~\cite{BBFNS01}. In any case, using any of these algorithms, we can get a $\mathcal{O}(1)$-approximation for TW-TSP-SC on uniform graphs, completing our proof of the Lemma.
\end{proof}

\bigskip
\section{Omitted proofs from Section~\ref{sec:approach}}
\label{s:app-ext-proofs}

\newtheorem*{c:random-rewards}{Corollary~\ref*{c:random-rewards}}
\begin{c:random-rewards}
    Given an instance $(G, I, I', M)$ of the TW-TSP with predictions where requests have randomly drawn rewards, and predictions errors satisfy that $\twem\le\lmin/2$ and also $\locem\le (\lmin-1)/4$, there exists a polynomial-time online algorithm that takes the tuple $(G, I', \locem)$ as offline input and $I$ as online input, and constructs a walk $W\in \mathcal{W}(G)$ such that
    \[\mathbb{E}[\rew(W,I)]\ge \frac{1}{O(\locem\cdot\rewem^2\cdot\log \min(D,\lmax))}\cdot \opt(G,I)\]
\end{c:random-rewards}

\begin{proof}
Recall that by definition, in the random reward setting we have
\begin{equation}\label{eq:random1}
   \opt(G,I) = \argmax_{W\in\mathcal{W}(G)}\left( \mathbb{E} \left[\sum_{\sigma\in \cov(W,I)}\pi_\sigma\right]\right) = \argmax_{W\in\mathcal{W}(G)} \left(\sum_{\sigma\in \cov(W,I)}\mathbb{E} \left[ \pi_\sigma\right]\right) =  \opt(G,I_{det}) 
\end{equation}
where $I_{det}$ is the deterministic request sequence where each request $\sigma\in I$, rather than having a random reward $\pi_\sigma$, has a deterministic reward $\mathbb{E}[\pi_\sigma]$.  We comment that when exchanging the sum with the expectation, we made use of the fact that only the rewards in $I$ are random, and thus for any walk $W\in\mathcal{W}(G)$, the set $\cov(W,I)$ is deterministic.

Recall that in this random setting, $\rewem$ captures the maximum discrepancy between the \textit{expected} rewards $\mathbb{E}[\pi_\sigma]$ and the predicted rewards $\pi_{\sigma'}$ and is thus equal to the reward error between $I'$ and $I_{det}$. Furthermore, the location and time-window errors have nothing to do with the rewards and are thus identical between $(I,I')$ and $(I_{det}, I')$. Thus, we can directly apply Theorem~\ref{t:main}, to obtain an online walk $W\in\mathcal{W}(G)$ such that
\begin{equation}\label{eq:random2}
   \mathbb{E}\left[rew(W,I_{det}) \right]\ge \frac{1}{O(\locem\cdot\rewem^2\cdot\log \min(D,\lmax))}\cdot \opt(G,I_{det}),
\end{equation}
where the expectation is taken over the algorithm's random choices.

Finally, for any walk $W\in\mathcal{W}(G)$, we have that
\begin{equation}\label{eq:random3}
   \rew(W,I_{det}) = \sum_{\sigma \in \cov(W,I_{det})}\mathbb{E}\left[\pi_\sigma\right] 
    = \mathbb{E}\left[\sum_{\sigma \in \cov(W,I)}\pi_\sigma\right] = 
    \mathbb{E}\left[\rew(W,I)\right]
\end{equation}
since $\cov(W,I_{det}) = \cov(W,I)$. In this case, the expectation is taken over only the randomness of $I$. The proof follows from equations~\eqref{eq:random1} through~\eqref{eq:random3}.
\end{proof}

\bigskip
\newtheorem*{cor:incomplete_matching}{Corollary~\ref*{cor:incomplete_matching}}
\begin{cor:incomplete_matching}
    Given an instance $(G,I,I')$ of the TW-TSP with predictions, let $M$ be any (incomplete) matching between $I$ and $I'$, and let the error parameters $\locem, \rewem, \twem, \Delta_1^M,$ and $\Delta_2^M$ be defined as above. Then, there exists an online algorithm that takes $(G, I', \locem)$ as offline input and $I$ as online input, and returns a walk $W\in\mathcal{W}(G)$ such that
     \[\mathbb{E}\left[\rew(W,I)\right] \geq \Omega\left(\frac{1}{\locem\cdot \rewem^2\cdot \log \min(D,\lmax)}\right)\cdot \left(\opt(G,I) - \Delta_1^M\right) - \frac{\Delta_2^M}{\rewem}.\]
\end{cor:incomplete_matching}
\begin{proof}
    Let $I_1\subseteq I$ be the request sequence that consists of only true requests in $I$ that are matched to some predicted request in $I'$. Observe that the Lemma~\ref{l:service-costs} and Lemma~\ref{l:offline-alg} have nothing to do with the matching $M$ between $I$ and $I'$. Furthermore, Lemma~\ref{l:relating-opts} only requires that all requests in $I$ have a matched prediction in $I'$, so it can immediately be applied to $I_1$. Combining everything, we can pre-compute a walk $W'\in\mathcal{W}(G)$ such that
    \[\rew(W',I',S') = \Omega\left(\frac{1}{\locem\cdot \rewem^2\cdot \log \min(D,\lmax)}\right)\cdot OPT(G,I_1)\]
    for $S'\geq 2\locem + 1$. Next, let $I_2\subseteq I'$ be the predicted request sequence that consists of only predictions that are matched to some true request. Since all predictions in $I_2$ are matched to some true request in $I$, we can use Lemma~\ref{l:online-alg} with walk $W'$ as the input to construct an online walk $W\in\mathcal{W}(G)$ such that
    \[\mathbb{E}\left[\rew(W,I)\right] \geq \frac{1}{6\rewem}\cdot \rew(W', I_2, S')\]
    The corollary follows from the definition of the unmatched request rewards $\Delta_1^M$ and $\Delta_2^M$ which implies that $\opt(G,I_1,S)\geq \opt(G,I,S) - \Delta_1^M$ and $\rew(W', I_2, S')\geq \rew(W', I', S') - \Delta_2^M$.
\end{proof}

%%%%%%%%%%%%%%%%%%%%%%%%%%%%%%%%%%%%%%%%%%%%%%%%%
%%%%%%%%%%%%%%%%%%%%%%%%%%%%%%%%%%%%%%%%%%%%%%%%%

\end{document}